\documentclass{article}
\usepackage{graphicx,amsfonts,amsmath,mathrsfs,amssymb,amsthm,url,color}
\usepackage{fancyhdr,indentfirst,bm,enumerate,natbib}
\usepackage{tikz}
\usepackage[colorlinks=true,citecolor=blue]{hyperref}
\usepackage{dsfont}
\def\Fbar{ {\overline F}}         
          \def\Gbar{ {\overline G}}
\def\oo{\infty}                   \def\d{\,\mathrm{d}}
\def\lm{\lambda}                  
                 \def\ep{{\epsilon}}
\def\dfrac{\displaystyle \frac}
\newcommand{\id}{\mathds{1}}

\newcommand{\E}{\mathbb{E}}
\newcommand{\R}{\mathbb{R}}

\newcommand{\VaR}{\mathrm{VaR}}
\newcommand{\TVaR}{\mathrm{TVaR}}
\newcommand{\RVaR}{\mathrm{RVaR}}
\newcommand{\essinf}{\mathrm{ess\mbox{-}inf}}
\newcommand{\esssup}{\mathrm{ess\mbox{-}sup}}
\renewcommand{\P}{\mathbb{P}}
\renewcommand{\(}{\left (}
\renewcommand{\)}{\right )}
\renewcommand{\[}{\left [}
\renewcommand{\]}{\right ]}
\newtheorem{theorem}{Theorem}[section]

\newtheorem{lemma}[theorem]{Lemma}
\newtheorem{proposition}[theorem]{Proposition}
\newtheorem{definition}{Definition}[section]

\newtheorem{remark}[theorem]{Remark}

\numberwithin{equation}{section}
\numberwithin{theorem}{section}

\renewcommand{\cite}{\citet}
\allowdisplaybreaks

\addtolength{\textheight}{.5\baselineskip}
\textheight 23cm
\textwidth 16.5cm
\topmargin -1.2cm
\oddsidemargin 0cm
\evensidemargin 0cm

\def\blue{\color{blue}}

\begin{document}
	
\baselineskip=18pt
	
\title{Extreme-case Range Value-at-Risk under Increasing Failure Rate}
	
\author{ Yuting Su\qquad Taizhong Hu\qquad Zhenfeng Zou\footnote{Corresponding author. \newline
 \hbox{\quad\ }  E-mail addresses: {\blue syt20020224@mail.ustc.edu.cn} (Y. Su),  {\blue thu@ustc.edu.cn} (T. Hu), {\blue zfzou@ustc.edu.cn} (Z. Zou).  }   \\[10pt]
	School of Management, University of Science and Technology of China,\\
	Hefei, Anhui 230026, China   	}
	
\date{June, 2025}
	
\maketitle
	
\begin{abstract}
		
The extreme cases of risk measures, when considered within the context of distributional ambiguity, provide significant guidance for practitioners specializing in risk management of quantitative finance and insurance. In contrast to the findings of preceding studies, we focus on the study of extreme-case risk measure under distributional ambiguity with the property of increasing failure rate (IFR). The extreme-case range Value-at-Risk under distributional uncertainty, consisting of given mean and/or variance of distributions with IFR, is provided. The specific characteristics of extreme-case distributions under these constraints have been characterized, a crucial step for numerical simulations. We then apply our main results to stop-loss and limited loss random variables under distributional uncertainty with IFR.
		
\medskip
		
\noindent \textbf{Mathematics Subject Classifications (2000)}: 90C15; 90C26		\medskip
		
\noindent \textbf{Keywords}: Distributional ambiguity; Moment uncertainity; Increasing failure rate; Risk measure; Range Value-at-Risk
		
\end{abstract}

\section{Introduction}
	
The accurate measurement of the risk of a loss variable is a significant challenge for those engaged in risk management. Over the past few decades, researchers have typically assumed that the distribution of the loss variable is known and then calculated the capital requirement to withstand that risk based on that distribution. For example, the well-known \emph{Value at Risk} (VaR), a standard risk measure in the insurance regulatory framework of Solvency II, determines the risk level of a loss variable by measuring the quantile of its distribution. Another standard risk measure in the Basel Accords as well as the Swiss Solvency Test, \emph{Tail Value at Risk} (TVaR), is gradually becoming the popular calculation method to replace VaR in measuring risk. It refers to the average loss value above a certain quantile of a loss variable and is also termed as \emph{Average Value-at-Risk} (AVaR), \emph{Expected Shortfall} (ES) and \emph{Conditional Value-at-Risk} (CVaR) in different contexts. In fact, VaR and TVaR belong to a more general class of risk measure, \emph{Range Value at Risk} (RVaR), which was proposed by \cite{CDS10} for robustness considerations. However, in most practical situations, it is difficult to obtain the exact distribution of a loss variable. Therefore, risk analysis under distributional ambiguity naturally appears in the literature.
	
The term ``distributional ambiguity'' is used to describe the situation in which the probability distribution of the loss variable under consideration by risk practitioners is unknown or partially known. Practitioners usually infer the characteristics of the distribution (including moments, unimodality or symmetry) based on limited historical data, and then construct an uncertainty set to capture distributional uncertainty to determine the risk level of possible extreme-case, including worst-case and best-case. A number of recent studies have demonstrated the importance of distributional ambiguity in the area of risk analysis. It would appear that the worst-case VaR, as determined by the information provided by the first two moments, was initially considered by \cite{EOO03}. They provided a closed-form expression and characterized the worst-case distribution that reaches the worst-case scenario. Subsequently, under the uncertainty set given the information of the first two moments, the closed-form expressions for the worst-case TVaR and RVaR were provided by \cite{CHZ11} and \cite{LSWY18}, respectively. Inspired by the aforementioned studies,  \cite{Li18} studied the case of worst-case spectral risk measures and worst-case law-invariant coherent risk measures and also obtained closed-form expressions when only the first two moments are known for the underlying distribution. Recently, based on the aforementioned uncertainty set for the underlying distribution, \cite{SZ23} and \cite{ZY25} provided the closed-form expressions for the worst-case distortion risk measures and distortion riskmetrics, respectively. Moreover, the extreme-case distortion risk measures were also considered by \cite{SZ24} under the assumption of knowledge of first-order and higher-order moments. Notably, \cite{PWW24} established a novel sufficient condition for closedness under concentration operators within  distributional ambiguity sets. This theoretical advancement enables the transformation of general non-convex distortion riskmetric into convex counterparts through the concave envelope of the distortion functions. Furthermore, they obtained the closed-form expressions for extreme-case distortion riskmetrics within a distributional ambiguity set consisting of first-order and higher-order moments.
	
A review of the existing literature reveals a number of critiques concerning the distributional ambiguity information with only moments. A major criticism is that, regardless of the form of the underlying probability distribution, the extreme-case distribution that can be reached in extreme-case is a discrete distribution with a finite number of points. To overcome this drawback, the existing literature also considers distributional ambiguity composed of moments, in conjunction with shape constraints. The introduction of shape constraints, however, gives rise to new theoretical challenges, which must be addressed in order to facilitate further progress in the field of risk analysis under distributional ambiguity. \cite{LSWY18} derived the partially closed-form expressions for the worst-case scenarios of RVaR by using construction method, in single and aggregate risk models with given mean and variance, as well as symmetry and/or unimodality of each risk. When the mean and variance of a given loss variable and unimodal property are considered, it can be seen that \cite{BKV20,BKV23} fully encapsulated the extreme-case of RVaR via stochastic order. Furthermore, these authors derived upper and lower bounds for RVaR of a unimodal random variable under the assumption of the mean, variance, symmetry, and a possibly bounded support in \cite{BKV22}. \cite{SZ23, SZ24} provided extreme-case distortion risk measures in the context of distributional ambiguity, constructed under first-order and second-order moments or higher-order moment, as well as symmetry, and characterized the extreme-case distributions when reaching extreme cases. In their study, \cite{ZBY24} presented a unified framework for extremal problems of distortion risk measures based on the first two moments along with additional shape information, such as the symmetry and/or unimodality properties of the underlying distributions.
	
Apart from the shape constraints previously discussed, namely unimodal and symmetry, the distributions of loss variables also exhibit other shape constraints, including \emph{increasing failure rate} (IFR) and so forth. The concept of a monotone failure rate, in particular the IFR, has constituted a pivotal element in the field of reliability theory since the early 1960s, see \cite{BMP63, BM64, BM65}. Moreover, IFR is of considerable importance in a number of other fields, including those of inventory management \citep{LP01} and contract theory \citep{DJ16}. It is well known that the IFR property is closed under sums of independent random variables, which may be one of the reasons why IFR is widely used in applications. It is evident that the literature on extreme-case for risk measures under IFR constraint is scarce, with the exception of \cite{CHJRZ21}, due to the non-convex problem of distributional ambiguity with IFR constraints. In that research, \cite{CHJRZ21} investigated discrete moment problems with IFR shape constraint and characterized optimal extreme point distributions through a reverse convex optimization approach.
	
In order to address this research gap in the use of IFR in risk analysis under distributional ambiguity, the present study will primarily consider extreme-case RVaR with the first two moments and IFR information of a given loss variable. The main contributions of this paper are as follows.
\begin{itemize}
   \item[(1)] In the context of first-order moment and IFR, the upper and lower bounds of VaR are determined by the survival functions' bounds in the existing literature (Theorem \ref{th-VaR}). The upper and lower bounds of the coherent distortion risk measures are established by convex order (Theorem \ref{th-coherent}).
		
  \item[(2)] The extreme-case RVaRs under the first-order moment and IFR are established by construction method (Theorems \ref{thm3-4} and \ref{thm3-7}).
		
  \item[(3)] The worst-case TVaR under the first two moments and IFR is given (Theorem \ref{thm-4-4}) as well as the extreme-case RVaRs are also presented (Theorems \ref{thm-4-5} and \ref{thm-4-6}).
		
  \item[(4)] The extreme-case stop-loss transformation and the extreme-case limited loss random variables, which are prevalent in (re)insurance applications, are also considered (Propositions \ref{pro5-1} and \ref{pro5-2}).
\end{itemize}
	
The presentation of this paper is structured as follows. In Section \ref{sec2}, we give the formal definition of the distortion risk measure, incorporating VaR, TVaR and RVaR, and then introduce the concept of IFR as well as several properties that play a pivotal role in the ensuing analysis. Under the first-order moment and IFR constraint, we provide the extreme-case of RVaR, while also characterizing extreme-case VaR and coherent distortion risk measures in Section \ref{sec3}. Section \ref{sec4} generalizes our upper and lower bounds for RVaR to the first two moments and IFR constraint ambiguity set. We then apply our main results to stop-loss and limited loss random variables under distributional uncertainty with IFR in Section \ref{sec5}. Section \ref{sec6} concludes this paper.

\section{Preliminaries}\label{sec2}

Consider a probability space $(\Omega, \mathscr{A}, \P)$. Denote by $\cal M$ the set of all distributions on $\R_+$, i.e., we assume that $F(0-) = 0$ holds throughout the article for $F \in {\cal M}$. Moreover, let $\Fbar(x) := 1-F(x)$ be the survival function of $F$. Then we can define the left-continuous version $\Fbar(x-) := \lim_{\ep \to 0} \Fbar(x-\ep)$  for $\Fbar$. As we known, VaR at the level $\alpha \in (0, 1)$ is defined as the quantile of $F \in {\cal M}$, which has left- and right-continuous versions, i.e.,
\begin{align*}
	\VaR_\alpha(F) & := F^{-1}(\alpha) = \inf\left\{x \in [0, \oo): F(x) \geq \alpha \right\},\\
	\VaR^+_\alpha(F) & := F^{-1+}(\alpha)= \sup\left\{x \in [0, \oo): F(x-) \leq \alpha \right\}.
\end{align*}
In addition, $\VaR_0(F) = \VaR_0^+(F) := \essinf (F)$ and $\VaR_1(F) = \VaR_1^+(F) := \esssup (F)$, where $\essinf (F)$ and $\esssup (F)$ denote the essential infimum and essential supremum of $F$, respectively. The distinction between the left- and right-continuous versions of VaR is of negligible significance in practice, and they are frequently indistinguishable. Nevertheless, in the event of extreme-case scenarios under distributional ambiguity, it is imperative to discern the subtle differences, as evidenced by \cite{Rusc82}. To overcome the major drawback of VaR, which is that it cannot provide enough information on the magnitude of losses in case of default, \cite{AT02} proposed TVaR, which is defined as
\begin{equation*}
		\TVaR_\alpha(F) := \frac{1}{1-\alpha} \int_\alpha^1 \VaR_u(F) \d u,\quad \alpha \in [0, 1),
\end{equation*}
and $\TVaR_1(F) := \esssup (F)$ for $\alpha = 1$. However, TVaR may result in a less robust risk measurement procedure compared to VaR. Therefore, RVaR was introduced by \cite{CDS10} as a robust risk measure, defined by
\begin{equation}\label{RVaR}
    \RVaR_{\alpha,\beta}(F) := \frac{1}{\beta-\alpha} \int_\alpha^\beta \VaR_u(F) \d u,\quad 0 \leq \alpha < \beta \leq 1.
\end{equation}
RVaR incorporates as special cases both the VaR and TVaR. To be concrete, RVaR reduces to TVaR when $\beta = 1$ and results in VaR when $\beta \downarrow \alpha$.
	
The above three risk measures VaR, TVaR and RVaR belong to distortion risk measures. The class risk measures were introduced by \cite{Wang96} in the actuarial literature. Denote by
$$
    {\cal H} := \left\{h: h\ \hbox{maps}\ [0, 1]\ \hbox{to}\ [0, 1], h\ \hbox{is\ increasing\ with}\ h(0) = 0\ \hbox{and}\ h(1) = 1\right\},
$$
where the term ``increasing" or ``decreasing" is in the weak sense throughout the paper. Then, distortion risk measure of the general distribution function $F$ is defined by $\rho_h$ via Choquet integrals \citep{Cho54}, i.e.,
\begin{equation*}
		\rho_h(F) = \int_0^\oo h\(\Fbar(x)\) \d x - \int_{-\oo}^0 \[1 - h\(\Fbar(x)\)\] \d x.
\end{equation*}
When $F(0-)=0$, $\rho_h(F)$ reduces to
\begin{equation*}
		\rho_h(F) = \int_0^\oo h\(\Fbar(x)\) \d x.
\end{equation*}
For more properties on distortion risk measures and distortion riskmetrics, see \cite{WWW20}.
	
In this paper, we attempt to derive the analytical solutions for the worst-case or best-case risk measures based on the partial information of the underlying distribution, i.e.,
\begin{equation*}
	\sup_{F \in {\cal P}} \rho(F)\ {\rm and}\ \inf_{F \in {\cal P}} \rho(F)
\end{equation*}
where $\rho$ is some risk measure, and $\cal P$ is the uncertainty set of distributions in $\cal M$ with IFR, given the mean or variance. Let's recall the definitions of IFR and DFR (decreasing failure rate).
	
\begin{definition}\label{IFR}
Given $F \in {\cal M}$ with density function $f$, denote by $\lm_F(x)$ the associated failure rate function of $F$, defined by
\begin{equation*}
	\lm_F(x) := \frac{f(x)}{\Fbar(x)},\quad x\in \left\{y\in\R_+: \Fbar(y)>0\right \},
\end{equation*}
Here, we use the convention that $\lm_F(x)=+\oo$ whenever $\Fbar(x)=0$.
\begin{itemize}
 \item[{\rm (1)}] $F$ is said to have an increasing failure rate {\rm (IFR)} if $\lm_F(x)$ is increasing in $x\in\R_+$;

 \item[{\rm (2)}] $F$ is said to have a decreasing failure rate {\rm (DFR)} if $\lm_F(x)$ is decreasing in $x\in\R_+$.
\end{itemize}
\end{definition}
	
As stated in the introduction, the concept of monotonic hazard rate has played an important role in reliability theory since the early 1960s, as evidenced by the works of \cite{BMP63, BM64, BM65}. It is a conventional understanding in reliability engineering that the lifetime of a component exhibits a strong inverse correlation with its failure rate. This relationship stems primarily from the effects of wear and tear. Distribution function $F \in {\cal M}$ with IFR has many desirable properties and characterizations. Below, we summarize the main properties in the following Proposition \ref{pro-2-1} that facilitate our subsequent analysis. The abbreviation ``$F$ is IFR" is used in place of the more complete ``$F$ has an increasing failure rate."
	
\begin{proposition}\label{pro-2-1}
If $F$ is {\rm IFR} with finite mean $\mu(F) := \int_0^\oo x \d F(x)$, then
\begin{enumerate}[{\rm (i)}]
  \item $F(x)$ is absolute continuous on $D_F$, where $D_F := [0, d)$ with $d := \sup\left\{t \geq 0: \Fbar(t) > 0\right\}$. Moreover, $F(0+) = F(0) = 0$;

  \item $\Lambda_F(x) := -\ln\Fbar(x)$ is a convex function in $x \in [0, \oo)$;

  \item for  any $c>0$ and $a>0$, the function $\Fbar(x) - c\exp\{-ax\}$ changes its sign at most twice with $-,+,-$ for $x \in [0, \oo)$;

  \item the total time on test transform function {\rm (TTT)} $\phi_F(u)$ is concave in $u\in [0,1]$, where
	\begin{equation} \label{eq-ttt}
		\phi_F(u) := \frac {1}{\mu(F)} \int_0^{F^{-1}(u)} \Fbar(x) \d x.
	\end{equation}
\end{enumerate}
\end{proposition}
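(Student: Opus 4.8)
The plan is to deduce all four items from one structural fact, the convexity of the cumulative hazard $\Lambda_F=-\ln\Fbar$, which is little more than a restatement of the IFR hypothesis, and I would establish it first. Wherever $\Fbar$ is differentiable one has $\Fbar'=-f$, so that $\Lambda_F'(x)=f(x)/\Fbar(x)=\lambda_F(x)$ on $(0,d)$; being the antiderivative of the increasing function $\lambda_F$, $\Lambda_F$ is convex on $(0,d)$. It then extends to a convex function on all of $[0,\oo)$ once the boundary behaviour supplied by (i) is in hand, namely right-continuity at the origin with $\Lambda_F(0)=0$ together with $\Lambda_F\equiv+\oo$ on $[d,\oo)$. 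This is exactly (ii).

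For (i) I would read the regularity off the convexity just obtained. A finite convex function is locally Lipschitz, hence locally absolutely continuous, on the interior of its domain, so $\Lambda_F$ is absolutely continuous on every $[0,d']$ with $d'<d$; consequently $\Fbar=\exp(-\Lambda_F)$, and therefore $F$, is absolutely continuous on $D_F=[0,d)$. The identity $F(0+)=F(0)=0$ remains: the standing assumption $F(0-)=0$ leaves only a possible atom at the origin, but such an atom would render the (discrete) failure rate infinite at $0$ and hence violate the monotonicity and initial finiteness of $\lambda_F$ for a nondegenerate law, so $\Fbar(0)=1$ and $\Lambda_F(0)=0$. I expect this to be the main obstacle: one must argue carefully that the only atom an IFR law may carry sits at the right endpoint $d$ (as for a truncated lifetime $\min(Y,d)$), which is precisely why absolute continuity is asserted on $[0,d)$ rather than on all of $\R_+$.

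Item (iii) is then immediate from convexity. For $x<d$, taking logarithms shows that $\Fbar(x)-c\,e^{-ax}>0$ is equivalent to $g(x):=(ax-\ln c)-\Lambda_F(x)>0$, and $g$ is concave because $\Lambda_F$ is convex and $ax-\ln c$ is affine; a concave function has an interval (possibly empty, possibly unbounded) as its positivity set. For $x\ge d$ one has $\Fbar(x)=0$, so $\Fbar(x)-c\,e^{-ax}<0$. Hence the sign of $\Fbar(x)-c\,e^{-ax}$ runs through the pattern $-,+,-$ with at most two changes, as claimed; note that the specification of the pattern $-,+,-$ (rather than merely ``at most two changes'') is what rules out a $+,-,+$ sign sequence, and this is guaranteed by concavity of $g$.

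Finally, for (iv) I would differentiate the TTT transform in \eqref{eq-ttt} directly. By (i), $F$ is continuous and $F^{-1}$ is differentiable on $(0,1)$ with $(F^{-1})'(u)=1/f(F^{-1}(u))$, so the fundamental theorem of calculus gives
\begin{equation*}
  \phi_F'(u)=\frac{1}{\mu(F)}\,\Fbar\(F^{-1}(u)\)\,(F^{-1})'(u)=\frac{1}{\mu(F)\,\lambda_F\(F^{-1}(u)\)} .
\end{equation*}
Since $u\mapsto F^{-1}(u)$ is increasing and $\lambda_F$ is increasing by the IFR hypothesis, $\lambda_F(F^{-1}(u))$ is increasing in $u$; thus $\phi_F'$ is decreasing and $\phi_F$ is concave on $[0,1]$ (here $\mu(F)\in(0,\oo)$ by finiteness of the mean and nondegeneracy). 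The only points needing extra care are the endpoints $u=0,1$ and any flat stretches of $F$, which I would treat by monotonicity and continuity rather than by pointwise differentiation.
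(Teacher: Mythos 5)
The paper offers no proof of Proposition 2.1 at all: it treats (i)--(iv) as classical facts from reliability theory, remarking that (ii)--(iv) are equivalent characterizations of IFR and citing Barlow and Campo (1975) for (iv) (the underlying results are from Barlow--Marshall). So your proposal is a self-contained reconstruction of outsourced material, and for (iii) and (iv) it is the standard, correct one: the positivity set of the concave function $g(x)=(ax-\ln c)-\Lambda_F(x)$ is an interval, and $\Fbar$ vanishes beyond $d$, which forces the sign pattern $-,+,-$; and $\phi_F'(u)=1/\bigl(\mu(F)\,\lambda_F(F^{-1}(u))\bigr)$ is decreasing because $F^{-1}$ and $\lambda_F$ are both increasing (under IFR a flat stretch of $F$ can only occur at the left end, since $\lambda_F=0$ on an interior interval would violate monotonicity, so your deferral of flat stretches to monotonicity and continuity is harmless).

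Your treatment of (i)/(ii), however, has two genuine defects. First, the architecture is circular as written: you prove (ii) by viewing $\Lambda_F$ as the antiderivative of the increasing function $\lambda_F$, which presupposes exactly the absolute continuity you then deduce from (ii) via local Lipschitzness. Under the paper's Definition 2.1 the existence of a density on $[0,d)$ is part of the hypothesis, so the honest statement is that absolute continuity on $D_F$ is essentially assumed and (ii) follows; deriving (i) back from (ii) adds nothing and conceals this. Second, and more seriously, your claim that an atom at the origin ``would render the discrete failure rate infinite at $0$'' is false. Take $\Fbar(x)=\tfrac12 e^{-x}$ for $x\ge 0$, i.e., mass $\tfrac12$ at $0$ plus an exponential tail: the hazard is constantly $1$ on $(0,\infty)$, the jump hazard at $0$ is $\P(X=0)/\P(X\ge 0)=\tfrac12<\infty$, and $\Lambda_F(x)=\ln 2+x$ is even convex on $[0,\infty)$. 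So neither an infinite hazard nor convexity of $\Lambda_F$ on $[0,\infty)$ alone --- which is all that statement (ii) literally asserts --- rules this example out; in particular (i) is not a consequence of (ii) as you have set things up. What does rule it out is convexity of $\Lambda_F$ extended by $\Lambda_F\equiv 0$ on $(-\infty,0)$, using the standing assumption $F(0-)=0$ (equivalently, the decreasing conditional-survival characterization of IFR): a convex function finite on an open interval containing $0$ is continuous there, so $\Fbar$ cannot jump at $0$, and the same continuity argument confines any atom to the right endpoint $d$, which is precisely why absolute continuity is asserted only on $[0,d)$. You need this extension to the negative half-line --- or the density hypothesis of Definition 2.1 --- to get $F(0+)=F(0)=0$; as stated, your proof of (i) fails.
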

	
In fact, statements (ii)-(iv) in Proposition 2.1 are equivalent. They can all be used to characterize the IFR property of a distribution function. For the equivalent characterization (iv), we refer to \cite{BC75}.

\section{Extreme-cases via mean}
\label{sec3}

In this section, when the mean of $F$ and $F$ being IFR are known, we consider an ambiguity set specified by
\begin{equation*}
	{\cal F}_\mu := \left\{F \in {\cal M}: F {\rm\ is\ IFR},\ \mu(F) = \mu \right\}.
\end{equation*}
The following questions will be addressed:
\begin{equation}
	\sup_{F \in {\cal F}_\mu} \rho(F),\ {\rm and}\ \inf_{F \in {\cal F}_\mu} \rho(F),
\end{equation}
where $\rho$ is a risk measure. The subsequent subsections will provide a concise exposition of the methodology for extreme-case risk measures in the context of distributional uncertainty ${\cal F}_\mu$. In Subsection \ref{subsec31}, the known bounds of the survival functions in the literature are utilized to provide representations of the upper and lower bounds of VaR. Convex order is then employed to delineate the bounds of the coherent distorted risk measure in Subsection \ref{subsec32}. The most salient conclusions about extreme-case RVaR are presented in Subsection \ref{subsec33}. Without loss of generality, we assume $\mu=1$ and denote ${\cal F} := {\cal F}_1$

\subsection{VaRs}\label{subsec31}

In this subsection, we consider $\rho = \VaR^+_\alpha$ or $\rho = \VaR_\alpha$ for $\alpha \in (0, 1)$. First, we recall some bounds for the survive function $\Fbar$ when $F \in {\cal F}$. These results are very useful for solving the case of VaRs. The upper bound is given blow:
\begin{equation}\label{sup-IFR}
	M(t) := \sup_{F \in {\cal F}} \Fbar(t-) = \left\{\begin{array}{ll} 1, & \hbox{for}\ t \in (0,1], \\
			\exp\{-wt\}, & \hbox{for}\ t > 1, \end{array} \right.
\end{equation}
where $w\in\R_+$ is determined by the equation $\exp\{-wt\} = 1 - w$. Moreover, the supremum in \eqref{sup-IFR} can be attainable. Specifically, the upper bound \eqref{sup-IFR} is achieved respectively by the \emph{Dirac delta distribution} at $1$, $\delta_1$, for $t\in (0,1]$, and $\Fbar_0(x) = \exp\{-wx\} \id_{\{x < t\}}$ with $w$ determined by $\mu(F_0) = 1$ for $t>1$. Here, $\delta_1$ is the distribution of a degenerate random variable $Z=1$. Besides, the lower bound has the form:
\begin{equation}\label{inf-IFR}
	m(t) := \inf_{F \in {\cal F}} \Fbar(t) = \left\{\begin{array}{ll} \exp\{-t\}, & \hbox{for}\ t \in (0,1), \\
		0, & \hbox{for}\ t \geq 1. \end{array} \right.
\end{equation}
Moreover, the infimum in \eqref{inf-IFR} can be attainable. In particular, this lower bound \eqref{inf-IFR} is achieved by the ${\rm Exp}(1)$ distribution for $t\in (0,1)$ and $\delta_1$ for $t\ge 1$, respectively, where ${\rm Exp}(1)$ is the exponential distribution with parameter $1$. A minor modification of \eqref{inf-IFR} is as follows:
$$
	m^\ast(t) :=\inf_{F\in {\cal F}} \Fbar(t-) =\left\{\begin{array}{ll}\exp\{-t\}, & \hbox{for}\ t\in (0,1],\\
		0, & \hbox{for}\ t > 1. \end{array} \right.
$$
For more details, we refer to \cite{BM64}. Now, we have the following proposition.
	
\begin{theorem} 		\label{th-VaR}
For $\alpha \in (0, 1)$, we have
\begin{align}
   \sup_{F \in {\cal F}} \VaR^+_\alpha (F) & = M^{-1}(1-\alpha):= \sup\left\{t\in\R_+: M(t)\geq 1-\alpha \right\},  \label{eq-2024-2}
\end{align}
and
\begin{align}
   \inf_{F\in {\cal F}} \VaR_\alpha (F) & =m^{-1}(1-\alpha):= \sup\left\{t\in\R_+: m(t)> 1-\alpha \right\}  \label{eq-2024-3} \\
   & = \left\{\begin{array}{ll} -\ln (1-\alpha), & {\rm for}\ \alpha<1-e^{-1},\\
       1, & {\rm for}\ \alpha\ge 1-e^{-1}. \end{array} \right.  \label{eq-2024-5}
\end{align}
\end{theorem}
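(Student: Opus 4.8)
The plan is to translate the two extremal problems into statements about survival functions, where the bounds $M$ and $m$ recorded in \eqref{sup-IFR} and \eqref{inf-IFR} are already available, and then to combine the monotonicity of these bounds with the explicitly identified extremizers. The starting point is the pair of identities
\begin{equation*}
\VaR^+_\alpha(F) = \sup\{x : \Fbar(x-) \geq 1-\alpha\}, \qquad \VaR_\alpha(F) = \sup\{x : \Fbar(x) > 1-\alpha\},
\end{equation*}
which follow from $F(x-) = 1-\Fbar(x-)$ and the two quantile definitions. The deliberate appearance of the left limit $\Fbar(x-)$ together with the ``$\geq$'' in $M^{-1}$, versus the right-continuous value $\Fbar(x)$ together with the ``$>$'' in $m^{-1}$, is precisely what makes the two generalized inverses match the two versions of VaR; keeping these conventions aligned is the part of the argument that demands the most care, exactly the subtlety flagged via \cite{Rusc82}.

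For \eqref{eq-2024-2} I would establish the two inequalities separately. For ``$\leq$'', fix $F \in {\cal F}$; since $\Fbar(x-) \leq M(x)$ for every $x$ by definition of $M$, any $x$ with $\Fbar(x-) \geq 1-\alpha$ also satisfies $M(x) \geq 1-\alpha$, so $\VaR^+_\alpha(F) \leq \sup\{x : M(x) \geq 1-\alpha\} = M^{-1}(1-\alpha)$, and taking the supremum over $F$ closes this direction. For ``$\geq$'', I would feed in the extremizers that attain $M$: for any $t$ with $M(t) \geq 1-\alpha$, the distribution $F_0$ from \eqref{sup-IFR} (the atom $\delta_1$ when $t \leq 1$, and the truncated exponential $\Fbar_0(x) = \exp\{-wx\}\id_{\{x<t\}}$ when $t > 1$) satisfies $\Fbar_0(t-) = M(t) \geq 1-\alpha$, hence $\VaR^+_\alpha(F_0) \geq t$; taking the supremum over all such $t$ yields $\sup_F \VaR^+_\alpha(F) \geq M^{-1}(1-\alpha)$. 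One must check that each $F_0$ indeed lies in ${\cal F}$, which is immediate: the exponential has constant failure rate, truncation only raises it, and the normalization $\mu(F_0)=1$ is built into the choice of $w$.

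The lower-bound identity \eqref{eq-2024-3} is handled symmetrically with the roles reversed. For ``$\geq$'', whenever $m(t) > 1-\alpha$ we have $\Fbar(t) \geq m(t) > 1-\alpha$ for every $F \in {\cal F}$, so $\VaR_\alpha(F) \geq t$ uniformly in $F$, giving $\inf_F \VaR_\alpha(F) \geq \sup\{t : m(t) > 1-\alpha\} = m^{-1}(1-\alpha)$. For ``$\leq$'' I would insert the extremizers of $m$ from \eqref{inf-IFR}, namely $\mathrm{Exp}(1)$ in the regime $t \in (0,1)$ and $\delta_1$ in the regime $t \geq 1$, and read off their VaR values: $\VaR_\alpha(\mathrm{Exp}(1)) = -\ln(1-\alpha)$ and $\VaR_\alpha(\delta_1) = 1$. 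Finally, to reach the closed form \eqref{eq-2024-5} it remains only to evaluate $m^{-1}(1-\alpha)$ against the explicit piecewise $m$: since $m(t) = e^{-t}$ on $(0,1)$, the condition $e^{-t} > 1-\alpha$ reads $t < -\ln(1-\alpha)$, so the supremum equals $-\ln(1-\alpha)$ when this threshold stays below $1$, i.e. when $\alpha < 1-e^{-1}$, and otherwise every $t \in (0,1)$ qualifies and the supremum is $1$; matching these values with the VaR of the corresponding extremizer shows the two inequalities meet. The only real obstacle is this bookkeeping at the break points $t=1$ and $\alpha = 1-e^{-1}$: one must verify that the VaR of each extremizer coincides exactly with the threshold defined by the appropriate one-sided inverse, so that no gap opens between the upper and lower estimates.
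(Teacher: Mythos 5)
Your proof is correct, and it rests on the same foundation as the paper's --- rewriting $\VaR^+_\alpha(F)$ and $\VaR_\alpha(F)$ as suprema of level sets of $\Fbar(\cdot-)$ and $\Fbar(\cdot)$ and then invoking the Barlow--Marshall envelopes \eqref{sup-IFR} and \eqref{inf-IFR} --- but you close the reverse (hard) inequality by a genuinely different mechanism. The paper never uses attainability of $M$ and $m$ inside the proof: for \eqref{eq-2024-2} it picks $t_1 > t_2 > \sup_{F\in{\cal F}}\VaR^+_\alpha(F)$, deduces $M(t_2)\le 1-\alpha$ directly from the definition of $M$ as a supremum over ${\cal F}$, and then squeezes via the \emph{strict monotonicity} of $M$ on $[1,\infty)$ (and of $m$ on $(0,1)$ for \eqref{eq-2024-3}); it thus obtains the generalized-inverse identities without ever producing an extremal distribution. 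You instead feed in, for each admissible $t$, the concrete law attaining $M(t)$ --- $\delta_1$ for $t\le 1$ and the truncated exponential $\Fbar_0(x)=\exp\{-wx\}\id_{\{x<t\}}$ for $t>1$ --- and read off $\VaR^+_\alpha(F_0)\ge t$, while for the infimum you simply evaluate $\VaR_\alpha$ at $\mathrm{Exp}(1)$ and $\delta_1$ and check these values match the piecewise formula \eqref{eq-2024-5}, so that $m^{-1}(1-\alpha)=\min\{-\ln(1-\alpha),1\}$ is sandwiched. Your route is more constructive: it exhibits the (near-)extremal IFR distributions explicitly, which in fact foreshadows the family ${\cal F}^\ast$ in \eqref{eq-20250622} used later for RVaR, and it sidesteps the $t_1,t_2$ squeeze entirely; the paper's route buys a little robustness, since it needs only the values and strict monotonicity of the envelope functions, not that each pointwise bound is achieved, and this is why the analogous statement transfers immediately to ${\cal K}_{\mu_r}$ in the remark following the theorem. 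One point you should state rather than gesture at: your parenthetical ``truncation only raises the failure rate'' is not literally meaningful at the atom $t$, where no density exists; the clean justification, consistent with the paper's conventions, is that $\Lambda_{F_0}(x)=wx$ for $x<t$ and $+\infty$ for $x\ge t$ is convex, which is the IFR characterization of Proposition \ref{pro-2-1}(ii), together with the observation that the defining equation $\mu(F_0)=1$ for $w$ is equivalent to $e^{-wt}=1-w$, so that $\Fbar_0(t-)=e^{-wt}=M(t)$ exactly as you claim.
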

	
\begin{proof}
By the definition of $M^{-1}(1-\alpha)$, we have
\begin{align*}
  M^{-1}(1-\alpha) &= \sup\left\{t \in\R_+: M(t) \geq 1 - \alpha \right\} \\
	 &= \sup\left\{t \in\R_+: \sup_{F \in {\cal F}} \Fbar(t-) \geq 1 - \alpha \right\} \\
	 &= \sup\left\{t \in\R_+: \inf_{F \in {\cal F}} F(t-) \leq \alpha \right\} \\
        &\ge  \sup_{F \in {\cal F}} \sup\left\{t\in\R_+:  F(t-) \leq \alpha \right\} = \sup_{F \in {\cal F}} \VaR^+_\alpha (F).
\end{align*}
For the converse statement, take any $t_1$ such that $t_1 > \sup_{F \in {\cal F}} \VaR^+_\alpha (F)$. There exists $t_2$ such that $t_1>t_2 > \VaR^+_\alpha (F)$ for any $F \in {\cal F}$, which implies that $\inf_{F \in {\cal F}} F(t_2-) \ge \alpha$, i.e., $M(t_2) \le 1-\alpha$. From \eqref{sup-IFR}, it is known that $M(t)$ is strictly decreasing on $[1, \oo)$. Thus, $M(t_1)<M(t_2)\le 1-\alpha$, implying $t_1 \ge M^{-1}(1-\alpha)$. It is easy to conclude that $M^{-1}(1-\alpha) \leq \sup_{F \in {\cal F}} \VaR^+_\alpha (F)$. This proves \eqref{eq-2024-2}.
		
To prove \eqref{eq-2024-3}, note that
\begin{align*}
	m^{-1}(1-\alpha) &= \sup\left\{t \in\R_+: m(t) > 1 - \alpha \right\} \\
		&= \sup\left\{t \in\R_+: \inf_{F \in {\cal F}} \Fbar(t) > 1 - \alpha \right\} \\
		&= \sup\left\{t \in\R_+: \sup_{F \in {\cal F}} F(t) < \alpha \right\} \\
		&\le \inf_{F \in {\cal F}} \sup\left\{t\in\R_+:  F(t) < \alpha \right\}
             = \inf_{F \in {\cal F}} \VaR_\alpha (F).
\end{align*}
For the converse statement, taking any $t_3, t_4$ such that $t_3 <t_4< \inf_{F \in {\cal F}} \VaR_\alpha (F)$, we have $t_4 < \VaR_\alpha (F)$ for any $F \in {\cal F}$. This implies $\inf_{F \in {\cal F}} \Fbar(t_4) \ge 1-\alpha$, i.e., $m(t_4) \ge 1-\alpha$. From \eqref{inf-IFR}, it is known that $m(t)$ is strictly decreasing on $(0, 1)$. Thus, $m(t_3)>1-\alpha$, implying $m^{-1}(1-\alpha)\ge t_3$. It is easy to conclude that $m^{-1}(1-\alpha) \geq \sup_{F \in {\cal F}} \VaR_\alpha (F)$. This proves \eqref{eq-2024-3}.

In view of \eqref{inf-IFR}, \eqref{eq-2024-5} follows by observing that $m^{-1}(1-\alpha)= \sup\{t\in (0,1): e^{-t}>1-\alpha\} = \sup\{t\in (0,1): t < -\ln (1-\alpha) \}$. This completes the proof of the theorem.
\end{proof}
	
\begin{remark}{\rm
For $r>0$, denote the $r$-th moment of a distribution $F\in {\cal M}$ by
$$
	\mu_r (F) := \int_0^\oo x^r \d F(x).
$$
For given $\mu_r\in \R_+$, define
\begin{equation*}
	{\cal K}_{\mu_r} = \left\{F \in {\cal M}: F {\rm\ is\ IFR},\ \mu_r(F)=\mu_r \right\}.
\end{equation*}
It is known from \cite{BM64} that \vspace*{-4pt}
\begin{itemize}
  \item for $r>0$,
	 \begin{equation}\label{sup-IFRr}
	      M_r(t) := \sup_{F \in {\cal K}_{\mu_r}} \Fbar(t-) = \left\{\begin{array}{ll} 1,
                & \hbox{for}\ t \leq \mu_r^{1/r}, \\[3pt]
				\exp\{-wt\}, & \hbox{for}\ t > \mu_r^{1/r}, \end{array} \right.
	\end{equation}
	where $w$ is determined by the equation $\int_0^t r z^{r-1} \exp\{-wz\} \d z = \mu_r$;
 \item for $r \geq 1$,
	\begin{equation}\label{inf-IFRr}
		m_r(t) := \inf_{F \in {\cal K}_{\mu_r}} \Fbar(t) = \left\{\begin{array}{ll}
             \exp\left\{-t {\big /} \lm_r^{1/r}\right \}, & \hbox{for}\ t < \mu_r^{1/r}, \\[3pt]
						0, & \hbox{for}\ t \geq \mu_r^{1/r}, \end{array} \right.
	\end{equation}
	where $\lm_r := \mu_r(F) / \Gamma(r+1)$.
\end{itemize}
Moreover, the supremum in \eqref{sup-IFRr} and the infimum in \eqref{inf-IFRr} can be attainable.
			
From the proof of Theorem \ref{th-VaR}, we know that these results also hold on ${\cal K}_{\mu_r}$, i.e.,
$$
     \sup_{F\in {\cal K}_{\mu_r}} \VaR^+_\alpha (F) =M_r^{-1}(1-\alpha):= \sup\left\{t \in [0, \oo): M_r(t) \geq 1- \alpha \right\}
$$
and
\begin{align*}
     \inf_{F \in {\cal K}_{\mu_r}} \VaR_\alpha (F) & = m_r^{-1}(1-\alpha) := \sup\left\{t \in [0, \oo): m_r(t) > 1- \alpha \right\} \\
     & = \left\{\begin{array}{ll} - \lm_r^{-1/r} \ln (1-\alpha), & {\rm for}\ \alpha<1-\exp\left\{-(\lm_r\mu_r)^{1/r}\right \},\\
       \mu_r^{1/r}, & {\rm for}\  \alpha \geq 1-\exp\left\{-(\lm_r\mu_r)^{1/r}\right \}. \end{array} \right.
\end{align*}
}
\end{remark}

\subsection{Coherent distortion risk measure}\label{subsec32}

In this subsection, we consider the case of coherent distortion risk measure $\rho_h$, i.e., the distortion function $h \in {\cal H}$ is concave. First, we introduce some notations. For any $F, G \in {\cal M}$, if there exist $t\in\R$ and $\delta > 0$ such that
\begin{equation*}
		F(x) < G(x),\ \forall x \in (t-\delta, t),\ {\rm and}\ F(x) > G(x),\ \forall x \in (t, t+\delta),
\end{equation*}
then we say $F$ up-crosses $G$ at $t$. If
\begin{equation*}
		F(x) > G(x),\ \forall x \in (t-\delta, t),\ {\rm and}\ F(x) < G(x),\ \forall x \in (t, t+\delta),
\end{equation*}
then we say $F$ down-crosses $G$ at $t$. Intuitively, when we say $F$ up-crosses (down-crosses) $G$, it means that $F$ crosses $G$ from below (above). Denote by $\E[\phi(F)] := \int_0^\oo \phi(x) \d F(x)$ for any measurable function $\phi: \R \to \R$ and $F \in {\cal M}$.  we define the convex order between two distribution functions in $\cal M$. For any distribution functions $F$ and $G$, $F \leq_{\rm cx} G$ refers to $\E[\phi(F)] \leq \E[\phi(G)]$ for all convex functions $\phi: \R \to \R$ provided that the expectations $\E[\phi(F)]$ and $\E[\phi(G)]$ exist. 	 For more details about stochastic order, we refer to \cite{SS07}.
	
If $F \in {\cal M}$ is IFR, then $\Fbar$ can only up-cross an exponential survival function (see Proposition \ref{pro-2-1}(iii)). Thus, $\delta_1 \leq_{\rm cx} F \leq_{\rm cx} {\rm Exp}(1)$ for $F\in {\cal F}$ (e.g., see Theorem 3.A.44 in \cite{SS07}). Due to the $\leq_{\rm  cx}$-consistent\footnote{A risk measure $\rho$ is $\leq_{\rm cx}$-consistent if $\rho(F) \leq \rho(G)$ whenever $F \leq_{\rm cx} G$.} of coherent distortion risk measure $\rho_h$, then $\rho_h(F) \in [1, \rho_h(E)]$ for any $F \in {\cal F}$, where $E$ is the exponential distribution function with parameter $1$. Moreover, the lower and upper bounds are attained by $\delta_1$ and ${\rm Exp}(1)$ distributions, respectively. We reformulate it as the following theorem.
	
\begin{theorem}  \label{th-coherent}
For any coherent distortion risk measure $\rho_h$, we have
\begin{equation*}
	\sup_{F \in {\cal F}} \rho_h (F) = \rho_h(E),\ {\rm and}\ \inf_{F \in {\cal F}} \rho_h (F) = 1,
\end{equation*}
where $E(x) = 1 - \exp\{-x\}$ for $x\in\R_+$.
\end{theorem}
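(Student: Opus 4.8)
The plan is to reduce both optimization problems to evaluating $\rho_h$ at two fixed extremal laws by means of a convex-order sandwich $\delta_1 \leq_{\rm cx} F \leq_{\rm cx} E$ valid for every $F \in {\cal F}$, and then to exploit the convex-order consistency of coherent distortion risk measures. The lower convex bound requires no shape information: since every $F \in {\cal F}$ has mean $\mu(F)=1$ and $\delta_1$ is degenerate at this common mean, Jensen's inequality yields $\E[\phi(\delta_1)] = \phi(1) \le \E[\phi(F)]$ for every convex $\phi$, which is precisely $\delta_1 \leq_{\rm cx} F$.

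The upper convex bound $F \leq_{\rm cx} E$ is the crux and is where the IFR hypothesis enters. Both $F$ and $E$ have unit mean, so $\int_0^\oo \(\Fbar(x) - \overline{E}(x)\) \d x = 0$ with $\overline{E}(x) = e^{-x}$. I would first argue that $\Fbar$ and $\overline{E}$ cross exactly once, in the orientation $\Fbar \ge \overline{E}$ on an initial interval and $\Fbar \le \overline{E}$ thereafter. Proposition \ref{pro-2-1}(iii) with $c=a=1$ bounds the number of sign changes of $\Fbar(x)-e^{-x}$ by two; the convexity of $\Lambda_F = -\ln\Fbar$ from Proposition \ref{pro-2-1}(ii), together with $\Lambda_F(0)=0$ and the zero-integral (equal-mean) constraint, rules out the pattern beginning with $\Fbar < \overline{E}$ and hence leaves a single crossing with the stated orientation. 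Feeding this single crossing and the equality of means into the Karlin--Novikoff cut criterion (Theorem 3.A.44 in \cite{SS07}) gives $F \leq_{\rm cx} E$, completing the sandwich.

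With $\delta_1 \leq_{\rm cx} F \leq_{\rm cx} E$ in hand, I would invoke the $\leq_{\rm cx}$-consistency of $\rho_h$, which holds because concavity of $h$ makes $\rho_h$ a law-invariant coherent risk measure, equivalently a mixture of TVaRs, each $\leq_{\rm cx}$-consistent. This gives $\rho_h(\delta_1) \le \rho_h(F) \le \rho_h(E)$ for all $F \in {\cal F}$, and $\rho_h(\delta_1) = \int_0^1 h(1)\,\d x = 1$. Finally I would confirm attainment by checking that both extremal laws lie in ${\cal F}$: $\delta_1$ has unit mean and a (weakly) increasing failure rate, while $E = {\rm Exp}(1)$ has unit mean and constant, hence increasing, failure rate. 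Thus the bounds $1$ and $\rho_h(E)$ are attained, yielding the two equalities.

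The step I expect to be the main obstacle is the single-crossing claim underlying $F \leq_{\rm cx} E$: one must show not merely that $\Fbar$ meets $\overline{E}$ at most once but that the crossing is oriented as $\Fbar$ above then below, since this is exactly where the IFR property is used and where the correct direction of the convex order is decided. The consistency property and the attainment checks are routine by comparison.
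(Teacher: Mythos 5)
Your proposal is correct and follows essentially the same route as the paper: the convex-order sandwich $\delta_1 \leq_{\rm cx} F \leq_{\rm cx} E$ obtained from the single-crossing of $\Fbar$ with the exponential survival function (Proposition \ref{pro-2-1}(ii)--(iii) plus the Karlin--Novikoff cut criterion, Theorem 3.A.44 in \cite{SS07}), followed by the $\leq_{\rm cx}$-consistency of coherent distortion risk measures and attainment at $\delta_1$ and ${\rm Exp}(1)$. You merely spell out the crossing-orientation and Jensen details that the paper delegates to the cited references, and your orientation of the crossing ($\Fbar \ge \overline{E}$ then $\Fbar \le \overline{E}$, i.e.\ $F$ up-crosses $E$) is the correct one.
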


\subsection{RVaR}\label{subsec33}

In this subsection, we consider the case of RVaR, i.e., \eqref{RVaR}. We only study the case for $0 \leq \alpha < \beta < 1$ by Theorems \ref{th-VaR} and \ref{th-coherent}. Unlike the straightforward proofs in previous subsections, the proofs in this subsection and Section \ref{sec4} are more technically involved and complex. Denote
\begin{align}  \label{eq-20250622}
       {\cal F}^\ast &= \left\{G_{t,w} \in {\cal F}:\ \Gbar_{t,w}(x) = \exp\{-wx\} \id_{\{x<t\}},\ w \in [0, 1],\ t \in [1, \oo] \right\},
\end{align}
where $w \in [0, 1]$ and $t \in [1, \oo]$ are chosen such that the mean of $G_{t,w}$ equals to $1$, i.e., $\mu(G_{t,w}) = 1$.
	
We first present Theorem \ref{thm3-4} for the worst-case RVaR, followed by a proposition facilitating the  worst-case scenario calculation, and finally Theorem \ref{thm3-7} for the best-case RVaR.
	
\begin{theorem}\label{thm3-4}
For $0 \leq \alpha < \beta < 1$, we have
\begin{equation}\label{eq-2024-38}
     \sup_{F \in {\cal F}} {\rm RVaR}_{\alpha, \beta}(F) = \sup_{F \in {\cal F}^\ast} {\rm RVaR}_{\alpha, \beta}(F).
\end{equation}
\end{theorem}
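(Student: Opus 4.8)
The inclusion ${\cal F}^\ast \subseteq {\cal F}$ is immediate from \eqref{eq-20250622}, so the inequality $\sup_{F\in{\cal F}^\ast}\RVaR_{\alpha,\beta}(F)\le\sup_{F\in{\cal F}}\RVaR_{\alpha,\beta}(F)$ is free. The entire content of the theorem is the reverse inequality: no IFR distribution with unit mean can beat the truncated-exponential family. The plan is to show that for every $F\in{\cal F}$ there is a $G\in{\cal F}^\ast$ (or a sequence $G_n\in{\cal F}^\ast$) with $\RVaR_{\alpha,\beta}(G)\ge\RVaR_{\alpha,\beta}(F)$ (respectively $\liminf_n\RVaR_{\alpha,\beta}(G_n)\ge\RVaR_{\alpha,\beta}(F)$), which suffices for the supremum identity.

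First I would rewrite RVaR in a form that exposes only the data it actually uses. Writing $a:=\VaR_\alpha(F)$ and $b:=\VaR_\beta(F)$ and using the distortion representation $\rho_h$ with the clamp distortion $h(s)=(\min(s,1-\alpha)-(1-\beta))_+/(\beta-\alpha)$, one has $h(\Fbar(x))=1$ for $x<a$ and $h(\Fbar(x))=0$ for $x>b$, whence
\begin{equation*}
   \RVaR_{\alpha,\beta}(F)=a+\frac{1}{\beta-\alpha}\int_a^b\Fbar(x)\,\d x-\frac{1-\beta}{\beta-\alpha}(b-a).
\end{equation*}
Thus $\RVaR_{\alpha,\beta}(F)$ depends on $F$ only through the two quantiles $a,b$ and the area of $\Fbar$ over the window $[a,b]$, that is, through the restriction of $\Fbar$ to the survival band $[1-\beta,1-\alpha]$. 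Crucially, because $\beta<1$, the tail of $F$ above $b$ is invisible to the functional except through the unit-mean budget it consumes; this is exactly what makes the truncated-exponential reduction plausible, and it is the feature absent from the coherent case $\beta=1$ already settled in Theorem \ref{th-coherent}.

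The construction then replaces $F$ by a truncated exponential $G_{t,w}$, exploiting that $F$ IFR is equivalent to $\Lambda_F=-\ln\Fbar$ being convex with $\Lambda_F(0)=0$ (Proposition \ref{pro-2-1}(ii)). Morally one redistributes the mass of $F$ toward an exponential-plus-atom profile: since $\Lambda_F$ is convex through the origin, replacing it by a line through the origin on a suitable range is the extremal manoeuvre that IFR permits, and the crossing property of Proposition \ref{pro-2-1}(iii) — an IFR survival function can meet an exponential only in the sign pattern $-,+,-$ — controls the sign of the resulting change in $\int_a^b\Fbar$. The two free parameters $(w,t)$ are then pinned down by $\mu(G_{t,w})=1$, i.e. $e^{-wt}=1-w$, landing the constructed distribution in ${\cal F}^\ast$; the convex-order endpoints $\delta_1$ and ${\rm Exp}(1)$ reappear as the boundary cases $t=1$ and $t=\oo$, and when the optimal profile requires $t=\oo$ I would realize it as a limit $G_n=G_{t_n,w_n}$ with $t_n\to\oo$. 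That each $G_{t,w}$ is genuinely IFR is automatic, its failure rate being $w$ on $[0,t)$ and $+\oo$ at $t$.

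The main obstacle is verifying that this replacement does not decrease $\RVaR_{\alpha,\beta}$ while simultaneously respecting the unit-mean constraint. The distortion $h$ is convex on $[0,1-\alpha]$ and then flat, hence neither convex nor concave, so RVaR is \emph{not} $\leq_{\rm cx}$-consistent and the comparison cannot be read off from convex order as in Theorem \ref{th-coherent}; it must instead be extracted from the exact $-,+,-$ crossing of $\Fbar$ against $e^{-wx}$, matched to the position of the band $[1-\beta,1-\alpha]$. Concretely, I expect to choose the truncation level $t$ so that a sign change of $\Fbar(x)-\Gbar_{t,w}(x)$ is aligned with a window endpoint, turning the otherwise indefinite integral $\int_0^\oo[h(\Gbar_{t,w}(x))-h(\Fbar(x))]\,\d x$ into a sum of terms of definite sign; balancing this alignment against the mean equation $\int_0^\oo[\Gbar_{t,w}(x)-\Fbar(x)]\,\d x=0$ is the delicate step. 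Finally I would dispose of the boundary case $\alpha=0$ and of attainment only in the limit $t\to\oo$, which together complete the reverse inequality and hence \eqref{eq-2024-38}.
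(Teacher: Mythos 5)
Your high-level route coincides with the paper's: the inclusion ${\cal F}^\ast\subseteq{\cal F}$ gives one inequality, and for the reverse one constructs, for each $F\in{\cal F}$, a member $G_{t,w}\in{\cal F}^\ast$ with $\RVaR_{\alpha,\beta}(G_{t,w})\ge\RVaR_{\alpha,\beta}(F)$, driven by the convexity of $\Lambda_F$ and the unit-mean budget. Your window formula $\RVaR_{\alpha,\beta}(F)=a+\frac{1}{\beta-\alpha}\int_a^b\Fbar(x)\d x-\frac{1-\beta}{\beta-\alpha}(b-a)$ is correct and is a compact repackaging of what the paper manipulates through quantiles, and your observation that RVaR is not $\leq_{\rm cx}$-consistent, so that the Theorem \ref{th-coherent} argument is unavailable, is also on target. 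But the proposal stops exactly where the work begins.

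The genuine gap is that the decisive inequality \eqref{eq-2024-39} is never established: ``choose $t$ so that a sign change of $\Fbar-\Gbar_{t,w}$ is aligned with a window endpoint, balancing against the mean equation'' announces the delicate step without executing it, and the execution is not routine. The paper anchors the \emph{slope} at the lower endpoint, $w:=-x_\alpha^{-1}\ln(1-p_0)$ with $p_0=F(x_\alpha-)$, so that the crossing of $\Fbar$ with $e^{-wx}$ occurs at $x_\alpha$, and then needs four separate verifications, none of which appear in your sketch. (i) If $-\ln(1-\alpha)\ge x_\alpha$ the required slope satisfies $w\ge 1$ and no truncated exponential in ${\cal F}^\ast$ realizes it; the paper switches to ${\rm Exp}(1)$ and proves quantile dominance on $[\alpha,\beta]$ from the chord inequality $\Lambda_F(x)\ge x$ on $[x_\alpha,x_\beta]$. (ii) When $w\le1$, feasibility of the truncation point must be proved: $1=\mu(F)>\int_0^{x_\alpha}e^{-wx}\d x$ together with $\int_0^\oo e^{-wx}\d x\ge1$ yields $t>x_\alpha$ with $\mu(G_{t,w})=1$ (the paper's \eqref{constraint1} and \eqref{constraint2}); note that the single scalar equation $\mu(G_{t,w})=1$ cannot ``pin down the two free parameters $(w,t)$'' as you assert --- ${\cal F}^\ast$ is a one-parameter family, and it is the anchoring condition at $x_\alpha$ that selects the member. (iii) One must then split $t\ge x_\beta$, where quantile dominance holds pointwise on the window, from $t<x_\beta$, where $G_{t,w}$'s quantiles are smaller than $F$'s both on $[0,\alpha)$ (convexity puts $\Lambda_F$ below the chord there) and on $(\beta,1]$ (the support of $G_{t,w}$ ends at $t<x_\beta$), and the equal-means budget converts these two favorable outer signs into dominance of the window integral, as in \eqref{mean-right} and \eqref{mean-left}; it is precisely the anchoring at the \emph{lower} endpoint that makes both outer terms favorable simultaneously, a point your symmetric phrase ``a window endpoint'' obscures --- aligning at the upper endpoint would not produce definite signs. (iv) The atom case $F(x_\alpha-)<\alpha$, where $\RVaR_{\alpha,\beta}(F)=x_\alpha$, requires its own two-subcase argument (the paper's Case 1). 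Finally, your limiting device $t_n\to\oo$ is unnecessary: ${\cal F}^\ast$ as defined in \eqref{eq-20250622} already contains $t=\oo$, i.e.\ ${\rm Exp}(1)$, so no $\liminf$ bookkeeping is needed. In short: right strategy, correct reduction formula, but the case analysis that constitutes the entire proof is missing.
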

	
\begin{proof}
It is easy to see that $\sup_{F \in {\cal F}} {\rm RVaR}_{\alpha, \beta}(F) \geq \sup_{F \in {\cal F}^\ast} {\rm RVaR}_{\alpha, \beta}(F)$ since ${\cal F}^\ast \subseteq {\cal F}$. Below, we need to show the reverse inequality. It suffices to show that for any given $F \in {\cal F}$, there exists another distribution function $G_{t,w} \in {\cal F}^\ast$ such that
\begin{equation}\label{eq-2024-39}
	{\rm RVaR}_{\alpha, \beta}(F) \leq {\rm RVaR}_{\alpha, \beta}(G_{t,w}).
\end{equation}
In fact, \eqref{eq-2024-39} implies
$$
	{\rm RVaR}_{\alpha, \beta}(F) \leq \sup_{F \in {\cal F}^\ast} {\rm RVaR}_{\alpha, \beta}(F),
$$
which yields $\sup_{F \in {\cal F}} {\rm RVaR}_{\alpha, \beta}(F) \leq \sup_{F \in {\cal F}^\ast} {\rm RVaR}_{\alpha, \beta}(F)$. Thus, \eqref{eq-2024-38} holds.
		
In the following, we show how to construct a distribution function $G_{t,w} \in {\cal F}^\ast$ such that \eqref{eq-2024-39} holds. For $F \in {\cal F}$, $\Lambda_F(x)$ is a positive, continuous, and increasing convex function in $x \in [0, d)$ with at most one discontinuous point $d$, where $d := \sup\left\{t \geq 0: \Fbar(t) > 0\right\}$ by Proposition \ref{pro-2-1}. Denote by $x_p=\VaR_p(F)$ for $p\in [0,1]$. Below, we consider two cases:
		
\underline{\emph{Case 1:}}\ If $\Lambda_F(x)$ is discontinuous at $x_\alpha$, we have ${\rm RVaR}_{\alpha, \beta}(F) = x_\alpha$. Denote $p_0 := F\(x_\alpha-\) \leq \alpha$. Two subcases arise.
\begin{itemize}
  \item Subcase 1.1: Suppose that $-\ln(1-p_0) > x_\alpha$. In this subcase, we have $p_0 > 1 - \exp\{-x_\alpha\}$. Choose the ${\rm Exp}(1)$ distribution $E=G_{\oo,1} \in {\cal F}^\ast$. Then, $E\(x_\alpha\) < p_0$. For any $\gamma \in (\alpha, \beta]$, we get that
      \begin{equation*}
		\VaR_\gamma(E) > \VaR_{p_0}(E) \geq x_\alpha.
	  \end{equation*}
	  Therefore, ${\rm RVaR}_{\alpha, \beta}(E) \geq x_\alpha= {\rm RVaR}_{\alpha, \beta}(F)$.
			
 \item Subcase 1.2: Suppose that $-\ln(1-p_0)\leq x_\alpha$, i.e., $w:=-x_\alpha^{-1} \ln (1-p_0)\in (0,1]$; see Figure \ref{Fig3-1}. If $F \in {\cal F}^\ast$, then choose $G = F \in {\cal F}^\ast$. Thus, ${\rm RVaR}_{\alpha, \beta}(G) = {\rm RVaR}_{\alpha, \beta}(F) = x_\alpha$. If $F \notin {\cal F}^\ast$, then, by the convexity of $\Lambda_F$, we have $ \Lambda_F(x) < w x$ for all $x\in (0, x_\alpha)$. Hence, $\Fbar(x) > \exp\left\{ - w x\right\}$ for all $x \in (0, x_\alpha(F))$. Note that
	\begin{equation}\label{constraint1}
		1 = \mu(F) = \int_0^{x_\alpha} \Fbar(x) \d x > \int_0^{x_\alpha} \exp\{- w x\} \d x.
	\end{equation}
	Since $w\in (0,1]$, we have
	$$		\int_0^\oo\exp\{- w x\} \d x \geq \int_0^\oo \exp\{- x\} \d x = 1. 			$$
    In view of \eqref{constraint1}, there exists some $t \in (x_\alpha, \oo]$ such that $G_{t,w}\in {\cal F}^\ast$, where $G_{t,w}$ is defined in \eqref{eq-20250622}. Below, we show that ${\rm RVaR}_{\alpha, \beta}(G_{t,w}) > x_\alpha = {\rm RVaR}_{\alpha, \beta}(F)$. It is obvious that $\Lambda_{G_{t,w}}$ is continuous at $x_\alpha$. Also, for $\gamma \in (\alpha, \beta]$, there exists $\eta_\gamma > 0$ such that
    $$
	  \Lambda_{G_{t,w}}(x) =w x <-\ln(1-\gamma),\ \ \hbox{for}\ x\in (x_\alpha, x_\alpha+\eta_\gamma],
	$$
    which implies $G_{t,w}(x) < \gamma$ for $x \in (x_\alpha, x_\alpha+\eta_\gamma]$. Therefore, $\VaR_\gamma(G_{t,w}) > x_\alpha+ \eta_\gamma > x_\alpha$ for $\gamma\in (\alpha, \beta]$, so we get ${\rm RVaR}_{\alpha, \beta}(G_{t,w}) > x_\alpha = {\rm RVaR}_{\alpha, \beta}(F)$.
\end{itemize}
		
  \begin{figure}[htbp]
	\centering
	\begin{minipage}[b]{0.45\textwidth}
		\begin{tikzpicture}
			\tikzset{
				box/.style ={
					circle, 
					minimum width =1.5pt, 
					minimum height =1.5pt, 
					inner sep=1.5pt, 
					draw=black, 
					fill=white
					}
	     		}
			\draw[->,thick] (0,0) -- (7,0) node[right] {$x$};
			\draw[->,thick] (0,0) -- (0,5) node[above] { };			
			\draw[domain=0:4.34,color=black] plot (\x,{0.16*\x*\x});
			\node at (3.2,1){$\Lambda_F$};
			\draw[domain=0:6.1,color=red] plot (\x,{0.7*\x})node[right] {$\Lambda_{G_{t,w}}$};
			\node[box] at (4.3,3){};
			\node[box] at (6.1,4.25){};
			\draw  [dash pattern={on 1.5pt off 2pt}] (4.23,3) -- (0,3);
			\draw  [dash pattern={on 1.5pt off 2pt}] (6.08,4.2) -- (6.1,0);
			\draw  [dash pattern={on 1.5pt off 2pt}] (4.3,2.8) -- (4.3,0);
			\node at (4.3,-0.3) {\footnotesize $\VaR_{\alpha}(F)$};
			\node at (4.3,-0.6) {\footnotesize $(=x_\alpha)$};
			\node at (6.1,-0.3) {\footnotesize $t$};
			\node at (0,-0.3) {\footnotesize $0$};
			\node at (-0.3,3) {\footnotesize \rotatebox{90}{$-\ln(1\!-\! p_0)$}};
		\end{tikzpicture}
		\caption{The diagrams of $\Lambda_F$ and $\Lambda_{G_{t,w}}$.}
			\label{Fig3-1}
	\end{minipage}
	\hfill
	\begin{minipage}[b]{0.45\textwidth}
		\begin{tikzpicture}
			\tikzset{
				box/.style ={
					circle, 
					minimum width =1.5pt, 
					minimum height =1.5pt, 
					inner sep=1.5pt,  
					draw=black,  
					fill=white
				}
			}
			\draw[->,thick] (0,0) -- (7.5,0) node[right] {$x$};
     		\draw[->,thick] (0,0) -- (0,5) node[above] { };			
			\draw[domain=0:7,color=black] plot (\x,{0.1*\x*\x})node[right]{$\Lambda_F$};
			\draw[domain=0:6,color=red] plot (\x,{0.4*\x});
			\node at (6,2.70) {\textcolor{red}{$\Lambda_{G_{t,w}}$}};
			\node at (5.9,-0.3) {\footnotesize $t$};
			\node at (0,-0.3) {\footnotesize $0$};
			\node at (3.95,1.55){$\bullet$};
			\node[box] at (5.95,2.38){};
			\draw  [dash pattern={on 1.5pt off 2pt}] (3.9,1.58) -- (0,1.58);
			\draw  [dash pattern={on 1.5pt off 2pt}] (3.94,1.53) -- (3.94,0);
			\draw  [dash pattern={on 1.5pt off 2pt}] (5.94,2.33) -- (5.94,0);
			\node at (3.9,-0.3) {\footnotesize $\VaR_{\alpha}(F)$};
			\node at (3.9,-0.6) {\footnotesize $(=x_\alpha)$};
			\node at (-0.3,1.58) {\footnotesize \rotatebox{90}{$-\ln(1\!-\!\alpha)$}};
			\draw  [dash pattern={on 1.5pt off 2pt}] (6.5,4.225) -- (0,4.225);
			\draw  [dash pattern={on 1.5pt off 2pt}] (6.5,4.225) -- (6.5,0);
			\node at (6.75,-0.3) {\footnotesize $\VaR_{\beta}(F)$};
			\node at (-0.3,4.225) {\footnotesize \rotatebox{90}{$-\ln(1\!-\!\beta)$}};
		\end{tikzpicture}
		\caption{The diagrams of $\Lambda_F$ and $\Lambda_{G_{t,w}}$.}
		\label{Fig3-2}
  \end{minipage}
 \end{figure}
		
\underline{\emph{Case 2:}}\ If $\Lambda_F(x)$ is continuous at $x_\alpha$, we have $p_0 := F\(x_\alpha-\) = \alpha$. Next, we need to consider two subcases:
\begin{itemize}
  \item Subcase 2.1: Suppose that $-\ln(1-\alpha) \geq x_\alpha$. In this subcase, we choose the ${\rm Exp}(1)$ distribution $E \in {\cal F}^\ast$. By the convexity of $\Lambda_F$, we have
	$$
          \Lambda_F(x) \geq - \frac{\ln(1-\alpha)}{x_\alpha} x \geq x = \Lambda_E(x),\quad \forall x \in \[x_\alpha, x_\beta\].
	$$
    Then $x_p \leq \VaR_p(E)$ for $p \in [\alpha, \beta]$. Therefore, we have ${\rm RVaR}_{\alpha, \beta}(E) \geq {\rm RVaR}_{\alpha, \beta}(F)$.
			
\item Subcase 2.2: Suppose that $-\ln(1-\alpha) < x_\alpha$, i.e., $w:=-x_\alpha^{-1} \ln (1-p_0)\in (0,1]$; see Figure \ref{Fig3-2}. In this subcase, we have $\int_0^\oo \exp\{-w x\} \d x > 1$. 			Note that
	\begin{equation}\label{constraint2}
			1 = \mu(F) > \int_0^{x_\alpha} \Fbar(x) \d x \geq \int_0^{x_\alpha} \exp\{- w x \} \d x.
	\end{equation}
    In view of \eqref{constraint2}, there exists some $t\in (x_\alpha, \oo)$ such that $\int_0^t \exp\{-w x\} \d x = 1$. Thus, $G_{t,w} \in {\cal F}^\ast$, where $G_{t,w}$ is defined in \eqref{eq-20250622}.
		
	\begin{enumerate}[{\rm (1)}]
       \item If $t \geq x_\beta$, by the convexity of $\Lambda_F$, we have $\Lambda_F(x) \ge w x = \Lambda_{G_{t,w}}(x)$ for all $x \in \[x_\alpha, x_\beta\]$, which implies $x_p \leq \VaR_p(G_{t,w})$ for $p \in [\alpha, \beta]$. Therefore, we have ${\rm RVaR}_{\alpha, \beta}(G_{t,w}) \geq {\rm RVaR}_{\alpha, \beta}(F)$.
				
       \item If $t < x_\beta$, then $q_0 := F(t-) < \beta$. For any $q > \beta$, we have $\VaR_q(G_{t,w}) = t < x_\beta < x_q$. Therefore,
		   \begin{equation}\label{mean-right}
					\int_\beta^1 \VaR_q(G_{t,w}) \d q < \int_\beta^1 x_q \d q.
		   \end{equation}
          Moreover, by the convexity of $\Lambda_F$, we have $\Lambda_F(x) \leq w x = \Lambda_{G_{t,w}}(x)$ for all $x \in \[0, x_\alpha\)$, implying $\VaR_q(G_{t,w}) \leq x_q$ for any $q < \alpha$. Then
		  \begin{equation}\label{mean-left}
				\int_0^\alpha \VaR_q(G_{t,w}) \d q \leq \int_0^\alpha x_q \d q.
		  \end{equation}
         Since $\mu(F) = \mu(G_{t,w})=1$, it follows from \eqref{mean-right} and \eqref{mean-left} that
		 \begin{equation*}
				\int_\alpha^\beta \VaR_q(G_{t,w}) \d q \geq \int_\alpha^\beta x_q(F) \d q.
		\end{equation*}
		That is, ${\rm RVaR}_{\alpha, \beta}(G_{t,w}) \geq {\rm RVaR}_{\alpha, \beta}(F)$.
	\end{enumerate}
\end{itemize}
This completes the proof of the theorem.
\end{proof}
	
Theorem \ref{thm3-4} shows that an infinite dimensional optimization problem can be transformed into a finite dimensional optimization problem through the method of construction. Subsequently, we can use simulation software to perform numerical simulations on it. This construction method will run through this article. The next proposition \ref{pro3-5} provides a simple expression of the worst-case RVaR on uncertainty set ${\cal F}^\ast$.
	
\begin{proposition}  \label{pro3-5}
For $0 \leq \alpha < \beta \leq 1$, we have
\begin{equation}  \label{eq-250627}
     \sup_{F \in {\cal F}} {\rm RVaR}_{\alpha, \beta}(F) = \max_{w \in [\alpha, \beta]} \frac{(1-\beta)\ln(1-w) - (1-\alpha)\ln(1-\alpha) + w - \alpha}{w (\beta - \alpha)}.
\end{equation}
In particular, for $\alpha\in [0,1]$,
\begin{equation}   \label{eq-250628}
     \sup_{F \in {\cal F}} {\rm TVaR}_{\alpha}(F) = 1-\ln (1-\alpha).
\end{equation}
\end{proposition}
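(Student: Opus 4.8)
The plan is to evaluate $\RVaR_{\alpha,\beta}$ explicitly on the reduced family ${\cal F}^\ast$ of \eqref{eq-20250622}. By Theorem \ref{thm3-4} the supremum over ${\cal F}$ equals that over ${\cal F}^\ast$ when $\beta<1$; for $\beta=1$ the same reduction holds because $\TVaR_\alpha$ is coherent, so by Theorem \ref{th-coherent} its worst case is attained at ${\rm Exp}(1)=G_{\oo,1}\in{\cal F}^\ast$. First I would reparameterize ${\cal F}^\ast$ by the rate $w$ alone: since $\mu(G_{t,w})=\int_0^t e^{-wx}\,\d x=(1-e^{-wt})/w$, the constraint $\mu(G_{t,w})=1$ forces $e^{-wt}=1-w$, i.e. $t=t(w):=-w^{-1}\ln(1-w)$. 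Thus $G_{t(w),w}$ has a density on $[0,t(w))$ together with an atom of mass $e^{-wt(w)}=1-w$ at $t(w)$, whose quantile level is exactly $w$, so
$$
\VaR_u\bigl(G_{t(w),w}\bigr)=\begin{cases}-w^{-1}\ln(1-u), & 0\le u<w,\\[2pt] t(w), & w\le u\le 1.\end{cases}
$$
This is why $[\alpha,\beta]$ is the natural range of maximization.

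Next I would compute $\RVaR_{\alpha,\beta}(G_{t(w),w})=(\beta-\alpha)^{-1}\int_\alpha^\beta\VaR_u\,\d u$ by splitting the integral at the atom level $w$ and using $\int\ln(1-u)\,\d u=-(1-u)\ln(1-u)+(1-u)$. For $w\in[\alpha,\beta]$ the two pieces combine, after the $\ln(1-w)$ terms partly cancel, into
$$
\RVaR_{\alpha,\beta}\bigl(G_{t(w),w}\bigr)=\frac{(1-\beta)\ln(1-w)-(1-\alpha)\ln(1-\alpha)+w-\alpha}{w(\beta-\alpha)},
$$
which is exactly the maximand in \eqref{eq-250627}. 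It then remains to discard $w\notin[\alpha,\beta]$. For $w\le\alpha$ the whole interval $[\alpha,\beta]$ lies above the atom level, so $\RVaR_{\alpha,\beta}=t(w)$; since $t(w)=-w^{-1}\ln(1-w)$ is increasing, this regime is dominated by its value at $w=\alpha$, which equals the maximand at $w=\alpha$. For $w\ge\beta$ the interval lies below the atom level, so $\RVaR_{\alpha,\beta}=C/w$ with $C=(\beta-\alpha)^{-1}[(1-\beta)\ln(1-\beta)-(1-\alpha)\ln(1-\alpha)+\beta-\alpha]$; here $C>0$ because $g(x):=(1-x)\ln(1-x)+x$ satisfies $g'(x)=-\ln(1-x)>0$, so $C/w$ decreases and this regime is dominated by its value at $w=\beta$, again matching the maximand at $w=\beta$. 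As the three expressions agree at $w=\alpha$ and $w=\beta$, the supremum over ${\cal F}^\ast$ is the maximum over $w\in[\alpha,\beta]$, proving \eqref{eq-250627}.

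For the $\TVaR$ identity \eqref{eq-250628} I would set $\beta=1$: the maximand reduces to $[-(1-\alpha)\ln(1-\alpha)+w-\alpha]/[w(1-\alpha)]$, which is increasing in $w$ because its numerator has the form $A+w$ with $A:=-(1-\alpha)\ln(1-\alpha)-\alpha<0$ (indeed $A(0)=0$ and $A'(\alpha)=\ln(1-\alpha)<0$ for $\alpha\in(0,1)$). Hence the maximum sits at $w=1$, i.e. at ${\rm Exp}(1)$, and evaluates to $1-\ln(1-\alpha)$; equivalently this is $\TVaR_\alpha({\rm Exp}(1))$, consistent with Theorem \ref{th-coherent}.

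The main obstacle is the piecewise evaluation: one must place the atom of $G_{t(w),w}$ correctly so that the quantile integral splits precisely at $u=w$, and then invoke the monotonicity of $t(w)$ and of $C/w$ to confirm that optimizing over the full parameter range $w\in[0,1]$ genuinely collapses to $w\in[\alpha,\beta]$. The integrations themselves are routine once the antiderivative of $\ln(1-u)$ is recorded.
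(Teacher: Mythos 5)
Your proposal is correct and follows essentially the same route as the paper: reduce to ${\cal F}^\ast$ via Theorem \ref{thm3-4}, compute $\VaR_u(G_{t(w),w})$ piecewise around the atom level $u=w$, obtain the three-regime formula for $g(w)={\rm RVaR}_{\alpha,\beta}(G_{t(w),w})$, and use monotonicity of $t(w)$ on $[0,\alpha)$ and of $C/w$ on $(\beta,1]$ together with continuity to collapse the maximization to $w\in[\alpha,\beta]$. Your handling of the boundary case $\beta=1$ via Theorem \ref{th-coherent} is a small refinement over the paper, which derives \eqref{eq-250628} by substituting $\beta=1$ into \eqref{eq-250627} even though Theorem \ref{thm3-4} is stated only for $\beta<1$; your direct monotonicity argument at $\beta=1$ closes that gap cleanly.
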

	
\begin{proof}
For any $G_{t,w} \in {\cal F}_w$, we have $1 - w = \exp\{-wt\}$. Thus, $t:=t_w = -w^{-1} \ln(1-w)$, where $t_0 = 1$ is the limit of $t_w$ when $w$ goes to $0$. It is easy to see that
$$
	\VaR_p(G_{t,w}) = \left\{\begin{array}{ll}
			-\dfrac {1}{w} \ln(1-p), & {\rm for}\ 0 \leq p \leq w, \\[8pt]
			-\dfrac{1}{w} \ln(1-w), & {\rm for}\ w < p \leq 1. \end{array} \right.
$$
Then
\begin{align*}
  g(w) := {\rm RVaR}_{\alpha, \beta}(G_{t,w}) &= \left\{\begin{array}{ll}
		-\dfrac{1}{w} \ln(1-w), & {\rm for}\ w \in [0, \alpha), \\[8pt]
		\dfrac{1}{\beta -\alpha} \[\int_\alpha^w \frac{-\ln(1-p)}{w}\d p -\frac {\beta - w}{w} \ln(1-w)\],
				& {\rm for}\ w \in [\alpha, \beta], \\[8pt]
		\dfrac{1}{\beta - \alpha} \int_\alpha^\beta \frac{-\ln(1-p)}{w} \d p, & {\rm for}\ w \in (\beta, 1],
			\end{array} \right.   \\[8pt]
	&= \left\{\begin{array}{ll}
		-\dfrac{1}{w} \ln(1-w), & {\rm for}\ w \in [0, \alpha), \\[8pt]
		\dfrac{(1-\beta)\ln(1-w) - (1-\alpha)\ln(1-\alpha) + w - \alpha}{w (\beta - \alpha)},
				& {\rm for}\ w \in [\alpha, \beta], \\[8pt]
		\dfrac{(1-\beta)\ln(1-\beta) - (1-\alpha)\ln(1-\alpha) + \beta - \alpha}{w (\beta - \alpha)},
			& {\rm for}\ w \in (\beta, 1]. \end{array} \right.
\end{align*}
It is easy to verify that $g(w)$ is increasing in $w \in [0, \alpha)$ and decreasing in $w \in (\beta, 1]$. Moreover, $g(w)$ is continuous in $w \in [0, 1]$. Therefore, we only consider $g(w)$ on $[\alpha, \beta]$. We first calculate its first-order derivative:
$$
     g'(w) = \frac{1}{w^2(1-w)(\beta-\alpha)} \[(1-w) \(\alpha + (1-\alpha) \ln(1-\alpha)\) - (1-\beta) \(w + (1-w)\ln(1-w)\)\].
$$
Denote $k(w) := (1-w) \(\alpha + (1-\alpha) \ln(1-\alpha)\) - (1-\beta) \(w + (1-w)\ln(1-w)\)$. Then $k'(w) = (1-\beta)\ln(1-w) - \(\alpha + (1-\alpha) \ln(1-\alpha)\) < 0$. Thus, $k(w)$ is non-increasing on $[\alpha, \beta]$. Combining $g'(\alpha) > 0$ and $g'(\beta) < 0$, we know that there exists $w^* \in [\alpha, \beta]$ such that $g'(w) \geq 0$ for $w \in [\alpha, w^*]$ and $g'(w) \leq 0$ for $w \in [w^*, \beta]$. This means that $g(w)$ achieves its maximum value at $w^* \in [\alpha, \beta]$. Therefore, \eqref{eq-250627} follows from Theorem \ref{thm3-4}.
It is trivial to check that \eqref{eq-250628} follows from \eqref{eq-250627} with $\beta=1$ by observing $\alpha+(1-\alpha)\ln (1-\alpha)\geq 0$ for $\alpha\in [0,1]$. This completes the proof of the proposition.
\end{proof}

\begin{remark}{\rm
\cite{CHLW23} recently derived the tight bounds for RVaR under mean or variance ambiguity set with support information. They showed that
\begin{equation*}
	\sup_{F \in \widetilde{{\cal F}}} {\rm RVaR}_{\alpha, \beta}(F) = \frac{1}{1-\alpha},
\end{equation*}
where $\widetilde{{\cal F}} := \left\{F: \mu(F) = 1 \right\}$. Since ${\cal F} \subseteq \widetilde{{\cal F}}$, it follows from Theorem \ref{thm3-4} that
$$
  	\sup_{F \in {\cal F}} {\rm RVaR}_{\alpha, \beta}(F) \leq \frac {1}{1-\alpha},\quad 0\le \alpha<\beta\le 1.
$$		}
\end{remark}
	
The next theorem characterizes the best-case RVaR on $\cal F$. Notably, it is uniquely attained by exponential or degenerate distributions depending on the parameters $\alpha$ and $\beta$.
	
\begin{theorem}\label{thm3-7}
For $0 \leq \alpha < \beta < 1$, we have
\begin{align}
	\inf_{F \in {\cal F}} {\rm RVaR}_{\alpha, \beta}(F)
		&= \min \{{\rm RVaR}_{\alpha, \beta}(E), \RVaR_{\alpha, \beta}(\delta_1)\} \label{eq-2024-315}  \\
		&= \min\left\{1+\frac{(1-\beta)\log(1-\beta) -(1-\alpha)\log(1-\alpha)}{\beta-\alpha}, 1\right\},
 		\label{eq-2024-316}
\end{align}
where $E(x) = 1-\exp\{-x\}$ for $x\ge 0$.
\end{theorem}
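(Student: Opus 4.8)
The plan is to prove the two displayed equalities separately. Equality \eqref{eq-2024-316} will follow from \eqref{eq-2024-315} by a routine integration, so the real content is \eqref{eq-2024-315}, and for that I would show $\inf_{F\in{\cal F}}\RVaR_{\alpha,\beta}(F)=\inf_{F\in{\cal F}^\ast}\RVaR_{\alpha,\beta}(F)$ and then optimize over the one-parameter family ${\cal F}^\ast$. The easy half is immediate: since $E=G_{\oo,1}$ and $\delta_1=G_{1,0}$ both lie in ${\cal F}^\ast\subseteq{\cal F}$, we get $\inf_{F\in{\cal F}}\RVaR_{\alpha,\beta}(F)\le\min\{\RVaR_{\alpha,\beta}(E),\RVaR_{\alpha,\beta}(\delta_1)\}$. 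Thus everything reduces to proving, for every $F\in{\cal F}$, the reverse bound $\RVaR_{\alpha,\beta}(F)\ge\min\{\RVaR_{\alpha,\beta}(E),\RVaR_{\alpha,\beta}(\delta_1)\}$.

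For this reverse bound I would mirror the construction in the proof of Theorem \ref{thm3-4}, but pushing in the opposite direction: given $F\in{\cal F}$ with convex $\Lambda_F$ and $x_p:=\VaR_p(F)$, I would build $G_{t,w}\in{\cal F}^\ast$ with $\int_\alpha^\beta\VaR_p(G_{t,w})\,\d p\le\int_\alpha^\beta x_p\,\d p$. The natural device is the secant of $\Lambda_F$ through the origin and $(x_\beta,-\ln(1-\beta))$, of slope $w=-\ln(1-\beta)/x_\beta$; convexity (Proposition \ref{pro-2-1}(ii)) forces $\Lambda_F(x)\le wx$ on $[0,x_\beta]$, hence $\Fbar(x)\ge\exp\{-wx\}$ there, so $\VaR_p(F)$ dominates the $w$-rate exponential quantile for $p\in[0,\beta]$. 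When $w\le1$ one truncates this exponential at $t=t_w$ to restore the mean $1$ (exactly as in the proof of Proposition \ref{pro3-5}), and the truncation only decreases the quantile, so $\VaR_p(G_{t,w})\le\VaR_p(F)$ holds pointwise on $[\alpha,\beta]$ and the desired inequality follows at once.

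The main obstacle is the opposite regime $w>1$, i.e. the window lying in the upper tail where $\VaR_p(F)<\VaR_p(E)$. There no truncated exponential of rate at most $1$ can simultaneously have mean $1$ and dominate $F$ pointwise, so one cannot argue quantile by quantile. Instead I would retain only an \emph{integrated} comparison: using $\mu(F)=\mu(G_{t,w})=1$ together with the convexity of $\Lambda_F$, the mass cut off by truncating beyond the window is balanced against the window, exactly in the two-sided manner of \eqref{mean-right}--\eqref{mean-left} in the proof of Theorem \ref{thm3-4}, yielding $\int_\alpha^\beta\VaR_p(G_{t,w})\,\d p\le\int_\alpha^\beta x_p\,\d p$ regardless. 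The delicate bookkeeping is locating $t$ relative to $x_\alpha$ and $x_\beta$ and treating the boundary subcases, precisely where the $\min$ in the statement originates.

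Once $\inf_{F\in{\cal F}}\RVaR_{\alpha,\beta}(F)=\inf_{w\in[0,1]}g(w)$ is in hand, I would invoke the shape of $g$ established in the proof of Proposition \ref{pro3-5}: $g$ is increasing on $[0,\alpha]$, single-peaked on $[\alpha,\beta]$, and decreasing on $[\beta,1]$, hence unimodal on $[0,1]$, so its minimum is attained at an endpoint, $\min_{w\in[0,1]}g(w)=\min\{g(0),g(1)\}$. Since $g(0)=\lim_{w\to0}\bigl(-w^{-1}\ln(1-w)\bigr)=1=\RVaR_{\alpha,\beta}(\delta_1)$ and $g(1)=\RVaR_{\alpha,\beta}(E)$, this proves \eqref{eq-2024-315}. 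Finally, \eqref{eq-2024-316} is the routine evaluation
\begin{equation*}
\RVaR_{\alpha,\beta}(E)=\frac{1}{\beta-\alpha}\int_\alpha^\beta\bigl(-\ln(1-u)\bigr)\,\d u=1+\frac{(1-\beta)\ln(1-\beta)-(1-\alpha)\ln(1-\alpha)}{\beta-\alpha},
\end{equation*}
together with $\RVaR_{\alpha,\beta}(\delta_1)=1$.
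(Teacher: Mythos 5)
Your easy direction, your evaluation of \eqref{eq-2024-316}, and your endpoint analysis of $g$ (the unimodal shape from the proof of Proposition \ref{pro3-5} forces $\min_{w\in[0,1]}g(w)=\min\{g(0),g(1)\}=\min\{\RVaR_{\alpha,\beta}(\delta_1),\RVaR_{\alpha,\beta}(E)\}$) are all correct, and your secant construction in the regime $w=-\ln(1-\beta)/x_\beta\le 1$ is a valid variant of the paper's Case 1 (the paper compares $F$ directly with $E$ there; your truncated exponential plus the shape of $g$ works equally well). The genuine gap is exactly the regime you flag, $-\ln(1-\beta)>x_\beta$, i.e.\ $w>1$: there your argument stops being a proof. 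You assert that an integrated comparison ``in the two-sided manner of \eqref{mean-right}--\eqref{mean-left}'' yields $\int_\alpha^\beta\VaR_p(G_{t,w})\,\d p\le\int_\alpha^\beta x_p\,\d p$ ``regardless,'' but you never specify which $G_{t,w}\in{\cal F}^\ast$ to use, and the mechanism cannot work inside ${\cal F}^\ast$. The two-sided mean-balancing argument requires $\VaR_p(G)\ge x_p$ \emph{off} the window, on $[0,\alpha]$ and on $[\beta,1]$, so that $\mu(F)=\mu(G)=1$ forces the window integral of $G$ below that of $F$. But every nondegenerate member of ${\cal F}^\ast$ has essential infimum $0$, while in this regime $F$ is concentrated (the paper first proves $x_\beta>1$, a lemma you never establish), so $x_p>\VaR_p(G_{t,w})$ near $p=0$ and the lower-side domination fails; the tail side can fail too. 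Concretely, for $F=H_{1/2}$ with $x_p=\tfrac12-\tfrac12\ln(1-p)$, $\alpha=0.7$, $\beta=0.8$, $w=0.3$: one computes $\int_\beta^1\VaR_p(G_{t_w,w})\,\d p\approx0.238<0.361\approx\int_\beta^1 x_p\,\d p$, so the tail comparison points the wrong way, and $\VaR_p(G_{t_w,w})<x_p$ for small $p$ as well; the desired window inequality happens to hold numerically, but not by the argument you invoke.

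This is precisely why the paper's proof leaves ${\cal F}^\ast$ in its Case 2: it constructs the \emph{shifted} exponential $H_t$ with $\Lambda_{H_t}(x)=0$ on $[0,t]$ and $\Lambda_{H_t}(x)=(x-t)/(1-t)$ beyond, with $t\in(0,1)$ chosen so that $(x_\beta-t)/(1-t)=-\ln(1-\beta)$; it then proves $\RVaR_{\alpha,\beta}(F)\ge\RVaR_{\alpha,\beta}(H_t)$ through a crossing analysis (whether $F$ down-crosses $H_t$ at some $t'$, the position of $\alpha$ relative to $\alpha'=F(t')$, and continuity versus a jump of $F$ at $x_\beta$, handled via the TVaR decomposition of RVaR), and only then concludes with the identity $\VaR_p(H_t)=t\,\VaR_p(\delta_1)+(1-t)\VaR_p(E)$, which gives $\RVaR_{\alpha,\beta}(H_t)\ge\min\{\RVaR_{\alpha,\beta}(E),\RVaR_{\alpha,\beta}(\delta_1)\}$ directly. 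Note $H_t\notin{\cal F}^\ast$: the clean reduction $\inf_{{\cal F}}=\inf_{{\cal F}^\ast}$ you aim for is true a posteriori but is not what any non-circular version of your argument establishes. To repair the proposal you would need to enlarge your comparison family to include these shifted exponentials (equivalently, quantile mixtures of $\delta_1$ and $E$) and supply the crossing and boundary analysis; as written, the $w>1$ case --- exactly the case in which the $\min$ in \eqref{eq-2024-315} is decided --- is missing.
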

	
\begin{proof}
Eq. \eqref{eq-2024-316} can be calculated directly. Below, we only prove \eqref{eq-2024-315}. Note that $\inf_{F \in {\cal F}} {\rm RVaR}_{\alpha, \beta}(F) \leq \min \{{\rm RVaR}_{\alpha, \beta}(E), \RVaR_{\alpha, \beta}(\delta_1)\}$. Below, we need to show the reverse inequality. It suffices to show that for any given $F \in {\cal F}$ with $F \neq E$ and $F \neq \delta_1$, we have
\begin{equation}\label{eq-2024-317}
      \min \{{\rm RVaR}_{\alpha, \beta}(E),\RVaR_{\alpha, \beta}(\delta_1)\}\leq {\rm RVaR}_{\alpha, \beta}(F).
\end{equation}
Denote by $x_p=\VaR_p(F)$ for $p\in [0,1]$. Below, we consider two cases.
		
\underline{\it Case 1}:\ Suppose $-\ln(1-\beta) \leq x_\beta$. In this case, whether $F$ is continuous at $x_\beta$ (see Figure \ref{Fig3-3}) or discontinuous at $x_\beta$ (see Figure  \ref{Fig3-4}), we always have $F(x) \leq E(x)$ for any $x \in [0, x_\beta]$ by the convexity of $\Lambda_F$ and $E(x_\beta) \geq \beta$. Thus, $x_p \geq \VaR_p(E)$ for $p \in [\alpha, \beta]$. Finally,
\begin{equation*}
	\RVaR_{\alpha, \beta}(F) \geq \RVaR_{\alpha, \beta}(E) \geq \min \{{\rm RVaR}_{\alpha, \beta}(E),
		\RVaR_{\alpha, \beta}(\delta_1)\}.
\end{equation*}
Therefore, \eqref{eq-2024-317} holds.
		
 \begin{figure}[htbp]
	\centering
		\begin{minipage}[b]{0.45\textwidth}
			\begin{tikzpicture}
		    	\tikzset{
					box/.style ={
					circle, 
			 		minimum width =1.5pt, 
					minimum height =1.5pt, 
					inner sep=1.5pt, 
					draw=black, 
					fill=white
					}
				}
				\draw[->,thick] (0,0) -- (6.8,0) node[right] {$x$};
				\draw[->,thick] (0,0) -- (0,6) node[above] { };			
				\draw[domain=0:6.5,color=black] plot (\x,{0.1*\x*\x});
				\node at (6.4,3.6){$\Lambda_F$};
				\node at (0,-0.3) {\footnotesize $0$};
				\draw[domain=0:5.4,color=blue] plot (\x,\x)node[right] {$\Lambda_E$};
				\draw  [dash pattern={on 1.5pt off 2pt}] (5.96,3.55) -- (5.96,0);
				\draw  [dash pattern={on 1.5pt off 2pt}] (5.96,3.55) -- (0,3.55);
				\draw  [dash pattern={on 1.5pt off 2pt}] (3.5359,3.55) -- (3.5359,0);
				\node at (5.96,-0.3) {\footnotesize $\VaR_{\beta}(F)$};
				\node at (3.5359,-0.3) {\footnotesize $\VaR_{\beta}(E)$};
				\node at (-0.3,3.55) {\footnotesize \rotatebox{90}{$-\ln(1-\beta)$}};
			\end{tikzpicture}
			\caption{The diagrams of $\Lambda_F$ and $\Lambda_E$.}
			\label{Fig3-3}
		\end{minipage}
		\hfill
		\begin{minipage}[b]{0.45\textwidth}
			\begin{tikzpicture}
				\tikzset{
					box/.style ={
						circle, 
						minimum width =1.5pt, 
						minimum height =1.5pt, 
						inner sep=1.5pt, 
						draw=black,  
						fill=white
					}
				}
				\draw[->,thick] (0,0) -- (6.8,0) node[right] {$x$};
				\draw[->,thick] (0,0) -- (0,6) node[above] { };			
				\draw[domain=0:5.95,color=black] plot (\x,{0.1*\x*\x});
				\node at (6.4,3.6){$\Lambda_F$};
				\node[box] at (5.94,3.57){};
				\draw[domain=0:5.4,color=blue] plot (\x,\x)node[right] {$\Lambda_E$};
				\draw  [dash pattern={on 1.5pt off 2pt}] (5.96,3.55) -- (5.96,0);
				\draw  [dash pattern={on 1.5pt off 2pt}] (3.5359,3.55) -- (3.5359,0);
				\draw  [dash pattern={on 1.5pt off 2pt}] (5.96,3.55) -- (0,3.55);
				\node at (3.5359,-0.3) {\footnotesize $\VaR_{\beta}(E)$};
				\node at (5.96,-0.3) {\footnotesize $\VaR_{\beta}(F)$};
				\node at (0,-0.3) {\footnotesize $0$};
				\node at (-0.3,3.55) {\footnotesize \rotatebox{90}{$-\ln(1-\beta)$}};
			\end{tikzpicture}
			\caption{The diagrams of $\Lambda_F$ and $\Lambda_E$.}
			\label{Fig3-4}
		\end{minipage}
 \end{figure}
		
\underline{\it Case 2}: \ Suppose $-\ln(1-\beta) > x_\beta$. In this case, we first show $x_\beta> 1$. If $x_\beta \leq 1$, then $F$ will be continuous at $x_\beta$; otherwise, $\esssup (F)\le x_\beta$. Hence,
$$
	\mu(F) = \int_0^1 x_p \d p=  \int_0^\beta x_p \d p + (1-\beta)< 1,
$$
which contradicts the condition $\mu(F) = 1$. By \eqref{inf-IFR}, we have $\beta = F(x_{\beta}) = F(x_{\beta}-) \le 1- \exp\left\{-x_{\beta}\right\} < \beta$. This also leads to a contradiction. Thus, $x_\beta> 1$. Therefore, there exists some point $t \in (0, 1)$ such that
$$   	\frac{x_\beta- t}{1-t} = -\ln(1-\beta).    $$
Next, we construct a distribution function $H_t \in {\cal F}$ with $\Lambda_{H_t}$:
$$
	\Lambda_{H_t}(x) = \left\{\begin{array}{ll} 0, & {\rm for}\  x \in [0, t], \\[3pt]
			\dfrac{x-t}{1-t}, & {\rm for}\  x > t. \end{array} \right.
$$
Then, $x_\beta = \VaR_\beta(H_t)$. By the convexity of $\Lambda_F$, two situations will arise, i.e., whether $F(x) \leq H_t(x)$ always holds for $x \in (t, x_\beta)$  or $F$ down-crosses $H_t$ at $t' \in (t, x_\beta)$. In the former case, $x_p \geq \VaR_p(H_t)$ for $p \in [\alpha, \beta]$, implying
$$
	\RVaR_{\alpha, \beta}(F) \geq \RVaR_{\alpha, \beta}(H_t).
$$
In the later case, denote by $\alpha' := F(t') = H_t(t')$.
\begin{itemize}
   \item If $\alpha \geq \alpha'$, it is easy to see that $\RVaR_{\alpha, \beta}(F) \geq \RVaR_{\alpha, \beta}(H_t)$.
			
   \item If $\alpha < \alpha'$, we have $H_t(x) \leq F(x)$ for $x \in [0, t']$. Then $x_p \leq \VaR_p(H_t)$ for $p \in [0, \alpha]$ and $\TVaR_\alpha(H_t) \leq \TVaR_\alpha(F)$ since $\mu(F) = \mu(H_t) = 1$.
	  \begin{itemize}
        \item Assume that $F$ is not continuous at $x_\beta$; see Figure \ref{Fig3-5}. We obtain $\VaR_p(H_t) \geq \VaR_\beta(H_t) = x_\beta =x_p$ for $p \in [\beta, 1]$. Thus, $\TVaR_\beta(F) \leq \TVaR_\beta(H_t)$ holds.
				
        \item Assume that $F$ is continuous at $x_\beta$; see Figure \ref{Fig3-6}. We obtain $F(x) \geq H_t(x)$ for $x \geq x_\beta$. Then, $x_p \leq \VaR_p(H_t)$ holds for $p \geq \beta$. Thus, $\TVaR_\beta(F) \leq \TVaR_\beta(H_t)$ holds.
	  \end{itemize}
	  Therefore, whether $F$ is discontinuous or continuous at $x_\beta$, we have
	 \begin{figure}[htbp]
			\centering
		\begin{minipage}[b]{0.45\textwidth}
			\begin{tikzpicture}
				\tikzset{
					box/.style ={
					circle,
					minimum width =1.5pt,
					minimum height =1.5pt,
					inner sep=1.5pt,
					draw=black,
					fill=white
		    		}
	    		}
			\draw[->,thick] (0,0) -- (6.6,0) node[right] {$x$};
			\draw[->,thick] (0,0) -- (0,6.5) node[above] { };			
			\draw[domain=0:5.28,color=black] plot (\x,{0.13*\x*\x});
			\node at (4.65,2.3){$\Lambda_F$};
			\draw[domain=1.5:5.8,color=blue] plot (\x,{1.3*(\x-1.5)})node[right] {$\Lambda_{H_t}$};
				\node[box] at (5.28,3.58){};
				\node at (1.8,0.4){$\bullet$};
				\draw  [dash pattern={on 1.5pt off 2pt}] (5.28,4.9) -- (0,4.9);
				\draw  [dash pattern={on 1.5pt off 2pt}] (5.28,4.9) -- (5.28,0);
				\draw  [dash pattern={on 1.5pt off 2pt}] (5.28,3.58) -- (0,3.58);
				\draw  [dash pattern={on 1.5pt off 2pt}] (1.8,0.4) -- (1.8,0);
				\draw  [dash pattern={on 1.5pt off 2pt}] (1.8,0.4) -- (0,0.4);
				\node at (5.28,-0.3) {\footnotesize $\VaR_{\beta}(F)$};
				\node at (-0.3,3.3) {\footnotesize \rotatebox{90}{$-\ln(1\!-\!p_0)$}};
				\node at (-0.3,5.3) {\footnotesize \rotatebox{90}{$-\ln(1\!-\!\beta)$}};
				\node at (1.85,-0.3) {\footnotesize $t'$};
				\node at (1.35,-0.3) {\footnotesize $t$};
				\node at (0,-0.3) {\footnotesize $0$};
				\node at (-0.3,0.643078) {\footnotesize \rotatebox{90}{$-\ln(1\!-\!\alpha')$}};
			\end{tikzpicture}
			\caption{The diagrams of $\Lambda_F$ and $\Lambda_{H_t}$.}
			\label{Fig3-5}
		\end{minipage}
		\hfill
		\begin{minipage}[b]{0.45\textwidth}
			\begin{tikzpicture}
				\tikzset{
					box/.style ={
					circle,
					minimum width =1.5pt,
					minimum height =1.5pt,
					inner sep=1.5pt,
					draw=black,
					fill=white
					}
				}
			\draw[->,thick] (0,0) -- (6.6,0) node[right] {$x$};
			\draw[->,thick] (0,0) -- (0,6.5) node[above] { };			
			\draw[domain=0:5.8,color=black] plot (\x,{0.15*(\x+0.6)*(\x+0.2)})node[left]{};
			\draw[domain=1.2:6.1,color=blue] plot (\x,{1.2*(\x-1.2)})node[below right]{$\Lambda_{H_t}$};
			\node at (5.45,5.1){$\bullet$};
			\node at (3.9,2.3) {$\Lambda_F$};
			\draw  [dash pattern={on 1.5pt off 2pt}] (5.45,5.1) -- (0,5.1);
			\draw  [dash pattern={on 1.5pt off 2pt}] (5.45,5.1) -- (5.45,0);
			\node at (5.45,-0.3) {\footnotesize $\VaR_{\beta}(F)$};
			\node at (-0.3,5.1) {\footnotesize \rotatebox{90}{$-\ln(1\!-\!\beta)$}};
			\node at (1.76,0.7){$\bullet$};
			\draw  [dash pattern={on 1.5pt off 2pt}] (1.76,0.7) -- (1.76,0);
			\draw  [dash pattern={on 1.5pt off 2pt}] (0,0.7) -- (1.76,0.7);
    		\node at (1.76,-0.3) {\footnotesize $t'$};
			\node at (1.2,-0.3) {\footnotesize $t$};
			\node at (0,-0.3) {\footnotesize $0$};
			\node at (-0.3,0.7) {\footnotesize \rotatebox{90}{$-\ln(1\!-\!\alpha')$}};
		\end{tikzpicture}
		\caption{The diagrams of $\Lambda_F$ and $\Lambda_{H_t}$.}
		\label{Fig3-6}
	\end{minipage}
  \end{figure}
  \begin{align*}
    \RVaR_{\alpha, \beta}(F) &=\frac{1}{\beta-\alpha}\[(1-\alpha)\TVaR_\alpha(F)-(1-\beta) \TVaR_\beta(F)\]\\
		&\geq \frac{1}{\beta -\alpha} \[(1-\alpha) \TVaR_\alpha(H_t) - (1-\beta) \TVaR_\beta(H_t)\]
				= \RVaR_{\alpha, \beta}(H_t).
  \end{align*}
\end{itemize}
Note that for $p \in (0, 1)$, $\VaR_p(H_t) = t - (1-t) \ln(1-p) = t \VaR_p(\delta_1) + (1-t) \VaR_p(E)$. 		Thus, we have
\begin{align*}
	\RVaR_{\alpha, \beta}(F) &\geq \RVaR_{\alpha, \beta}(H_t)
			= t\,\RVaR_{\alpha, \beta}(\delta_1) + (1-t)\,\RVaR_{\alpha, \beta}(E) \\
			&\geq \min \{{\rm RVaR}_{\alpha, \beta}(E), \RVaR_{\alpha, \beta}(\delta_1)\}.
\end{align*}
This completes the proof of the theorem.
\end{proof}

\section{Extreme-cases via mean and variance}\label{sec4}

In this section, we consider the mean-variance ambiguity set of IFR distribution $F \in {\cal M}$ with its dispersion information specified by the second moment of the distribution:
\begin{equation*}
		{\cal G} := \left\{F \in {\cal M}: F {\rm\ is\ IFR},\ \mu(F) = 1,\ \mu_2(F) = \mu_2\right\}.
\end{equation*}
Let's recall two facts \citep{BMP63}. The first one is that $F \in {\cal G}$ implies $1 \leq \mu_2(F) \leq 2$. The another one is that $F \in {\cal G}$ contains only one distribution when $\mu_2(F) = 1$ or $\mu_2(F) = 2$. Specifically, ${\cal G}$ degenerates to (i) the Dirac delta at $1$, $\delta_1$, when $\mu_2(F) = 1$, and (ii) ${\rm Exp}(1)$ distribution when $\mu_2(F) = 2$. Thus, from now on, we assume that $1 < \mu_2(F) < 2$ to avoid trivial discussions when $F \in {\cal G}$. We consider the following questions:
\begin{equation}   \label{eq-2024-1}
		\sup_{F \in {\cal G}} \rho(F), {\rm\ and}\ \inf_{F \in {\cal G}} \rho(F),
\end{equation}
where $\rho$ is a risk measure.
	
In order to solve \eqref{eq-2024-1}, we need to review some notations and lemmas \citep{BM64}. Denote $T_0 = 1 - \sqrt{\mu_2 - 1}$, and define
\begin{eqnarray}\label{eq-2024-42}
	\Gbar_{T_1}(x) &=& \left\{\begin{array}{ll}  1, & {\rm for}\ x < 0, \\
			\exp\{-ax\}, & {\rm for}\ 0 \leq x < T_1, \\
			0, & {\rm for}\ x \geq T_1, \end{array} \right.
\end{eqnarray}
where $a$ and $T_1$ are chosen such that $\mu(G_{T_1}) = 1$ and $\mu_2(G_{T_1}) = \mu_2$. The first two moment conditions uniquely identify $a \in [0, 1]$ and $T_1 \geq 1$. Note that $T_0 < T_1$. For $T \geq T_1$, define
\begin{eqnarray}   \label{eq-2024-43}
	\Gbar_T^{(1)}(x) &=& \left\{\begin{array}{ll} 	1, & {\rm for}\ x < \Delta, \\
			\exp\{-k(x-\Delta)\}, & {\rm for}\ \Delta \leq x < T, \\
			0, & {\rm for}\ x \geq T, \end{array} \right.
\end{eqnarray}
and for $T_0 \leq T < T_1$, define
\begin{eqnarray}   \label{eq-2024-44}
	\Gbar_T^{(2)}(x) &=& \left\{\begin{array}{ll}  	1, & {\rm for}\ x < 0, \\
			\exp\{-k_1 x\}, & {\rm for}\ 0 \leq x < T, \\
			\exp\{-k_1 T - k_2(x-T)\}, & {\rm for}\ x \geq T, \end{array} \right.
\end{eqnarray}
where $(k, \Delta)$ and $(k_1, k_2)$ are respectively determined by the moment conditions, i.e.,
\begin{equation}   \label{eq-2024-45}
	1=\int_0^\oo \Gbar^{(i)}_T(x) \d x\ \ {\rm and}\ \ \mu_2=2\int_0^\oo x \Gbar^{(i)}_T(x)\d x,\quad i=1, 2.
\end{equation}
According to \cite{BM64}, for any fixed $T$, Eq. \eqref{eq-2024-45} has uniquely determined solution pairs $(k(T), \Delta(T))$ and $(k_1(T), k_2(T))$, both of which are continuous in $T\in [T_0, \oo)$. Denote
\begin{equation}\label{distributions}
      {\cal G}_1 := \left\{G_T^{(1)}: T \geq T_1\right\}\ \ {\rm and}\ \ {\cal G}_2 := \left \{G_T^{(2)}: T_0 \leq T < T_1\right \}.
\end{equation}
We now state a useful lemma.
	
\begin{lemma}{\rm \citep[][Theorem 3.1]{BM64}}\label{lem-4-1}
For any given $1 < \mu_2 < 2$ and $x \geq 0$, we have
$$
   	\left\{(x, F(x)): F\in {\cal G}\right\} = \left\{(x, G(x)): G\in {\cal G}_1\cup {\cal G}_2\right\}.
$$
\end{lemma}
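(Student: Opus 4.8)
The equality asserted is between two sets of pairs $\{(x, F(x)):F\in{\cal G}\}$ and $\{(x, G(x)):G\in{\cal G}_1\cup{\cal G}_2\}$; since the first coordinate $x$ is held fixed, this reduces to the equality of the two attainable-value sets $V:=\{F(x):F\in{\cal G}\}$ and $I:=\{G(x):G\in{\cal G}_1\cup{\cal G}_2\}$. The plan is to prove the two inclusions separately. For $\supseteq$ it suffices to check ${\cal G}_1\cup{\cal G}_2\subseteq{\cal G}$, which gives $I\subseteq V$. The moment constraints $\mu(G)=1$ and $\mu_2(G)=\mu_2$ hold by the very definition \eqref{eq-2024-45} of the parameters, so the only thing left is the IFR property. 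Reading off the failure rates, $G_T^{(1)}$ has rate $0$ on $[0,\Delta)$, $k$ on $[\Delta,T)$ and $+\oo$ at $T$, while $G_T^{(2)}$ has rate $k_1$ on $[0,T)$ and $k_2$ on $[T,\oo)$; each is nondecreasing precisely when $k_2\ge k_1$, which I would confirm holds for the moment-matching solutions, giving the IFR property in both cases.

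For the reverse inclusion $V\subseteq I$, fix $x\ge 0$ and write $\underline p:=\inf_{F\in{\cal G}}F(x)$ and $\overline p:=\sup_{F\in{\cal G}}F(x)$. The argument has two ingredients. First, the map $T\mapsto G_T(x)$ is continuous on $[T_0,\oo)$: this is inherited from the stated continuity of the solution pairs $(k(T),\Delta(T))$ and $(k_1(T),k_2(T))$, and at the junction $T=T_1$ the two prescriptions agree, $G_{T_1}^{(1)}=G_{T_1}^{(2)}=G_{T_1}$, so the two subfamilies glue into a single path indexed by the connected parameter set $[T_0,\oo)$. Hence $I$ is an interval. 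Second, and this is the crux, I would show that the extreme values $\underline p$ and $\overline p$ are themselves attained inside ${\cal G}_1\cup{\cal G}_2$. Granting this, $I$ is an interval contained in $[\underline p,\overline p]$ (because $I\subseteq V\subseteq[\underline p,\overline p]$) which also contains both endpoints, whence $I=[\underline p,\overline p]$; combining $I\subseteq V\subseteq[\underline p,\overline p]=I$ then yields $V=I$.

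The extremality claim is where the IFR hypothesis does its real work and is the main obstacle. The idea is a crossing (variation-diminishing) argument built on Proposition \ref{pro-2-1}: for $F\in{\cal G}$ the function $\Lambda_F=-\ln\Fbar$ is convex, whereas for a candidate extremizer $G\in{\cal G}_1\cup{\cal G}_2$ the function $\Lambda_G$ is piecewise linear with a single kink. Since $t\mapsto e^{-t}$ is decreasing, the zeros of $\Fbar-\Gbar$ are exactly the zeros of $\Lambda_F-\Lambda_G$, and the latter is convex on each linear piece of $\Lambda_G$, which caps the number of sign changes and forces the pattern recorded in Proposition \ref{pro-2-1}(iii). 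Suppose, for contradiction, that some $F\in{\cal G}$ satisfied $\Fbar(x)<\Gbar(x)$ for the $G$ in the family that minimizes $\Gbar$ at $x$ (the case computing $\overline p$; the lower endpoint is symmetric). Matching the first two moments gives $\int_0^\oo(\Fbar(u)-\Gbar(u))\d u=0$ and $\int_0^\oo u(\Fbar(u)-\Gbar(u))\d u=0$, so $\Fbar-\Gbar$ is orthogonal to $1$ and to $u$; I would then exhibit a test function (a quadratic, or the indicator localized at the crossing points, chosen so that its sign matches the forced crossing pattern of $\Fbar-\Gbar$) against which the sign of the integral is pinned down on one hand by the crossing pattern and on the other by these moment identities, producing a contradiction unless $F=G$. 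The delicate point is that the correct form of the extremizer, and hence whether it lives in ${\cal G}_1$ (the flat-then-exponential-then-truncated profile) or in ${\cal G}_2$ (the two-rate exponential), depends on the location of $x$ relative to $T_1$ and on whether one is computing $\underline p$ or $\overline p$; organizing these regimes so that the crossing count comes out exactly right in each is the technical heart of the proof. Since the statement is precisely Theorem 3.1 of \cite{BM64}, in the write-up I would either reproduce their case analysis or invoke it directly.
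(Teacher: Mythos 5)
The paper does not actually prove this lemma: it is imported verbatim as Theorem 3.1 of \cite{BM64}, and the citation is the entire ``proof.'' Your closing fallback---invoke Barlow--Marshall directly---is therefore exactly the paper's treatment, and your identification of the statement's provenance is correct. What you add is a reconstruction sketch, and its skeleton is faithful to the Barlow--Marshall strategy: reduce to equality of the attainable-value sets at each fixed $x$; get $\mathcal{G}_1\cup\mathcal{G}_2\subseteq\mathcal{G}$ by checking the moment conditions and the IFR property; glue the two subfamilies into a single path indexed by $T\in[T_0,\infty)$ so that the attainable set $I$ is an interval; and pin down the endpoints by a crossing argument resting on convexity of $\Lambda_F$. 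This is indeed the machinery the paper itself deploys (Proposition \ref{pro-2-1}(iii) together with the cut criterion, Lemma \ref{cut}) when it analyzes crossing configurations in the proofs of Theorems \ref{thm-4-4}--\ref{thm-4-6}.

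As a standalone proof, however, your sketch has one genuine gap and two fixable ones. The genuine gap is the extremality step: you assert that $\underline p=\inf_{F\in\mathcal{G}}F(x)$ and $\overline p=\sup_{F\in\mathcal{G}}F(x)$ are attained inside $\mathcal{G}_1\cup\mathcal{G}_2$, and you correctly flag this as the technical heart, but the test-function argument is only gestured at; the case analysis over the position of $x$ relative to $\Delta(T)$, $T$ and $T_1$, and over which profile ($\mathcal{G}_1$ versus $\mathcal{G}_2$) is extremal in each regime, is precisely the content of the cited theorem and is not reproduced. Without it your argument establishes only $I\subseteq V\subseteq[\underline p,\overline p]$, not the reverse inclusion. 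The fixable points: (i) IFR of $G^{(2)}_T$ requires $k_1(T)\le k_2(T)$, which you leave as ``to confirm''; it follows cheaply from $k_1(T_0)=0<k_2(T_0)=(\mu_2-1)^{-1/2}$, continuity in $T$, and the observation that $k_1=k_2$ would make $G^{(2)}_T$ exponential with mean $1$, forcing $\mu_2=2$ and contradicting $\mu_2<2$. (ii) The gluing at $T=T_1$ is not continuity of the solution pairs---in fact $k_2(T)\to\infty$ as $T\uparrow T_1$, since the tail beyond $T$ must vanish to recover the truncated exponential $G_{T_1}$---but only pointwise continuity of $T\mapsto G_T(x)$ at each fixed $x$, with a separate check needed at $x=T_1$ itself; that weaker statement suffices for your interval argument, but it is not what ``continuity of $(k_1(T),k_2(T))$'' delivers, so it must be argued as such. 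Given that you would ultimately cite \cite{BM64} for the decisive step, your proposal matches the paper's level of rigor; as an independent proof it is an outline, not a complete argument.
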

	
Our analysis commences with the derivation of extreme VaR cases, illustrating the power of Lemma \ref{lem-4-1}.

\begin{proposition}
For any $\alpha \in (0, 1)$, we have
\begin{equation}\label{eq-2024-6}
      \sup_{F \in {\cal G}} \VaR_\alpha(F) = \sup_{F \in {\cal G}_1 \cup {\cal G}_2} \VaR_\alpha(F),\qquad
			\inf_{F \in {\cal G}} \VaR_\alpha(F) = \inf_{F \in {\cal G}_1 \cup {\cal G}_2} \VaR_\alpha(F),
\end{equation}
and
\begin{equation}\label{eq-2024-7}
	\sup_{F \in {\cal G}} \VaR_\alpha^+(F) = \sup_{F \in {\cal G}_1 \cup {\cal G}_2} \VaR_\alpha^+(F),\qquad
			\inf_{F \in {\cal G}} \VaR_\alpha^+(F) = \inf_{F \in {\cal G}_1 \cup {\cal G}_2} \VaR_\alpha^+(F).
\end{equation}
\end{proposition}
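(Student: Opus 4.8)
The plan is to reduce all four identities to the single pointwise statement of Lemma \ref{lem-4-1} by rewriting every extreme-case VaR as an extremum, over the threshold variable $x$, of an \emph{existence} condition that involves only the attainable value of the (right-continuous) distribution function at $x$. The starting observation is that both quantile versions admit representations using $F(x)$ alone, with no left limits: for $\alpha\in(0,1)$ one has $\VaR_\alpha(F)=\inf\{x\ge 0:F(x)\ge\alpha\}=\sup\{x\ge 0:F(x)<\alpha\}$ and $\VaR^+_\alpha(F)=\inf\{x\ge 0:F(x)>\alpha\}=\sup\{x\ge 0:F(x)\le\alpha\}$, the second equality in each pair following because $\{x:F(x)<\alpha\}$ and $\{x:F(x)\ge\alpha\}$ (respectively $\le\alpha$ and $>\alpha$) partition $\R_+$ into a lower and an upper interval sharing a common endpoint; one also checks that $\sup\{x:F(x)\le\alpha\}$ agrees with the paper's definition $\sup\{x:F(x-)\le\alpha\}$ by right-continuity. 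This is precisely the point at which $F(x)$, and not $F(x-)$, becomes the only object needed, so that Lemma \ref{lem-4-1}, which controls exactly the attainable values of $F(x)$, will apply.

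I would then pass the family-extremum inside these representations via the elementary identities $\sup_F\sup\{x:P_F(x)\}=\sup\{x:\exists F,\ P_F(x)\}$ and $\inf_F\inf\{x:P_F(x)\}=\inf\{x:\exists F,\ P_F(x)\}$ (a supremum of suprema is the supremum over the union, and dually for infima). Using the $\sup\{x:\cdots\}$ form for the two suprema and the $\inf\{x:\cdots\}$ form for the two infima, and writing ${\cal H}$ for either ${\cal G}$ or ${\cal G}_1\cup{\cal G}_2$, this yields
\begin{align*}
\sup_{F\in{\cal H}}\VaR_\alpha(F)&=\sup\{x\ge 0:\ \exists\,F\in{\cal H},\ F(x)<\alpha\},\\
\inf_{F\in{\cal H}}\VaR_\alpha(F)&=\inf\{x\ge 0:\ \exists\,F\in{\cal H},\ F(x)\ge\alpha\},\\
\sup_{F\in{\cal H}}\VaR^+_\alpha(F)&=\sup\{x\ge 0:\ \exists\,F\in{\cal H},\ F(x)\le\alpha\},\\
\inf_{F\in{\cal H}}\VaR^+_\alpha(F)&=\inf\{x\ge 0:\ \exists\,F\in{\cal H},\ F(x)>\alpha\}.
\end{align*}

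The decisive step would then be purely set-theoretic. For each fixed $x$, the truth value of ``$\exists F\in{\cal H}$ with $F(x)<\alpha$'' (and of the analogous $\le,\ge,>$ conditions) depends only on the attainable value set ${\cal A}_{\cal H}(x):=\{F(x):F\in{\cal H}\}$, namely on whether ${\cal A}_{\cal H}(x)$ meets $[0,\alpha)$, $[0,\alpha]$, $[\alpha,1]$, or $(\alpha,1]$. Lemma \ref{lem-4-1} asserts exactly that ${\cal A}_{\cal G}(x)={\cal A}_{{\cal G}_1\cup{\cal G}_2}(x)$ for every $x\ge 0$, so each of the four defining sets is identical for ${\cal H}={\cal G}$ and for ${\cal H}={\cal G}_1\cup{\cal G}_2$; taking suprema and infima delivers \eqref{eq-2024-6} and \eqref{eq-2024-7} simultaneously. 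As a consistency check, the inclusion ${\cal G}_1\cup{\cal G}_2\subseteq{\cal G}$ already gives the ``easy'' inequalities ($\ge$ for the suprema, $\le$ for the infima) for free.

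I expect the only delicate point to be the bookkeeping of strict versus non-strict inequalities: each of the four cases must be matched with exactly the right threshold ($<\alpha$, $\le\alpha$, $\ge\alpha$, $>\alpha$) so that, after interchanging the extrema, the resulting condition is a statement about the position of $\alpha$ relative to ${\cal A}_{\cal H}(x)$ and nothing else. This is what renders attainment irrelevant and lets Lemma \ref{lem-4-1} apply verbatim; in particular, it is crucial that the atoms carried by the distributions in ${\cal G}_1$ at their right endpoints enter only through the already-recorded values $F(x)$ and not through any left limit. Beyond Lemma \ref{lem-4-1} no further use of the IFR structure (e.g.\ Proposition \ref{pro-2-1}) is needed.
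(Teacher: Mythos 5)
Your proof is correct, and it rests on the same key ingredient as the paper's --- Lemma \ref{lem-4-1} applied pointwise in $x$ --- but the mechanism is genuinely different and arguably cleaner. The paper argues distributionwise: given $F\in{\cal G}$ and $\ep>0$, it invokes the lemma at the two points $\VaR_\alpha(F)-\ep$ and $\VaR_\alpha(F)$ to produce $G_1,G_2\in{\cal G}_1\cup{\cal G}_2$ with $G_1(\VaR_\alpha(F)-\ep)=F(\VaR_\alpha(F)-\ep)<\alpha$ and $G_2(\VaR_\alpha(F))\ge\alpha$, deduces $\VaR_\alpha(G_1)>\VaR_\alpha(F)-\ep$ and $\VaR_\alpha(G_2)\le\VaR_\alpha(F)$, and then lets $\ep\downarrow 0$; the $\VaR^+$ identities \eqref{eq-2024-7} are dispatched as ``similar.'' You instead rewrite each of the four quantities as an extremum over the threshold variable of a level-set condition involving only $F(x)$, interchange the extremum over $F$ with the extremum over $x$, and observe that the resulting existence conditions depend only on the attainable value set $\{F(x):F\in{\cal H}\}$, which Lemma \ref{lem-4-1} says is the same for ${\cal H}={\cal G}$ and ${\cal H}={\cal G}_1\cup{\cal G}_2$. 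This buys uniformity (all four identities, including \eqref{eq-2024-7}, follow from one set-theoretic argument with no limiting step), at the modest cost of the quantile bookkeeping you flag: your representations $\VaR_\alpha(F)=\sup\{x:F(x)<\alpha\}$ and $\VaR^+_\alpha(F)=\sup\{x:F(x)\le\alpha\}=\inf\{x:F(x)>\alpha\}$ are all valid here (the agreement of $\sup\{x:F(x)\le\alpha\}$ with the paper's $\sup\{x:F(x-)\le\alpha\}$ actually follows from monotonicity alone, since $F(y)\le F(x-)\le\alpha$ for all $y<x$, not from right-continuity --- a harmless mis-attribution), and the nonemptiness of the level sets is guaranteed because every $F$ in either family satisfies $F(0)=0$ and $F(x)\to 1$. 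One small caveat worth recording if you write this up: strictly speaking, Lemma \ref{lem-4-1} as stated concerns a fixed $x\ge 0$, so what you need and use is exactly its pointwise reading $\{F(x):F\in{\cal G}\}=\{G(x):G\in{\cal G}_1\cup{\cal G}_2\}$ for each $x$, which is also precisely how the paper uses it.
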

	
\begin{proof}
We give the proof of \eqref{eq-2024-6} only; the proof of \eqref{eq-2024-7} is similar. Since ${\cal G}_1 \subseteq {\cal G}$ and ${\cal G}_2 \subseteq {\cal G}$, it follows that
$$
     \sup_{F \in {\cal G}} \VaR_\alpha(F) \geq \sup_{F \in {\cal G}_1 \cup {\cal G}_2} \VaR_\alpha(F)\ \ {\rm and}\ \ \inf_{F\in {\cal G}} \VaR_\alpha(F)\leq \inf_{F\in {\cal G}_1\cup {\cal G}_2} \VaR_\alpha(F).
$$
For the reverse inequality of \eqref{eq-2024-6}, note that for any given $F \in {\cal G}$, we have $F(\VaR_\alpha(F) - \ep) < \alpha$ for any $\ep > 0$ and $F(\VaR_\alpha(F)) \geq \alpha$. By Lemma \ref{lem-4-1}, there exist two distribution functions $G_1, G_2 \in {\cal G}_1 \cup {\cal G}_2$ such that $G_1(\VaR_\alpha(F) - \ep) = F(\VaR_\alpha(F) - \ep) < \alpha$ and $G_2(\VaR_\alpha(F)) \geq \alpha$. Then $\VaR_\alpha(G_1) > \VaR_\alpha(F) - \ep$ and $\VaR_\alpha(G_2) \leq \VaR_\alpha(F)$. Thus,
$$
	\sup_{G \in {\cal G}_1 \cup {\cal G}_2} \VaR_\alpha(G) \geq \VaR_\alpha(G_1) > \VaR_\alpha(F) - \ep,
$$
and
$$
	\inf_{G\in {\cal G}_1 \cup {\cal G}_2} \VaR_\alpha(F) \leq \VaR_\alpha(G_2) \leq \VaR_\alpha(F).
$$
Therefore, the desired result is obtained by taking the limit as $\ep \downarrow 0$ and computing the supremum of $F$ over the set ${\cal G}$.
\end{proof}
	
We now examine the extreme-case RVaR. Building upon the cut criterion established in \cite{KN63} (which plays a pivotal role in our proof), we first establish the simple extreme-case TVaR result. This preliminary result will serve to clarify our subsequent RVaR analysis.

\begin{lemma}\label{cut}
For any two distribution functions $F, G \in {\cal M}$ with $\mu(F) = \mu(G)$, if $F$ up-crosses $G$ with only one time, then $\mu_2(F) < \mu_2(G)$.
\end{lemma}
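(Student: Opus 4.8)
The plan is to reduce the claim to an elementary sign analysis of $D:=F-G$ weighted by an appropriate linear factor. First I would record the layer-cake identity for the second moment of a distribution supported on $\R_+$, namely $\mu_2(F)=\int_0^\oo 2x\,\Fbar(x)\,\d x$ (as an identity in $[0,\oo]$), and likewise for $G$. Since $\Gbar-\Fbar=F-G$, subtracting gives
\[
   \mu_2(G)-\mu_2(F)=\int_0^\oo 2x\,\big(F(x)-G(x)\big)\,\d x .
\]
In parallel, the hypothesis $\mu(F)=\mu(G)$, combined with $\mu(F)=\int_0^\oo\Fbar(x)\,\d x$, yields the balance condition $\int_0^\oo\big(F(x)-G(x)\big)\,\d x=0$. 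The single up-crossing hypothesis I would read, in its global form, as the existence of a point $t$ with $F(x)\le G(x)$ for $x<t$ and $F(x)\ge G(x)$ for $x>t$, the inequalities being strict on one-sided neighborhoods of $t$; this is exactly what ``$F$ up-crosses $G$ only once'' encodes.

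The key manoeuvre is to recenter the weight at the crossing point $t$. Using the balance condition to insert the vanishing term $-2t\int_0^\oo(F-G)\,\d x=0$, I would rewrite
\[
   \mu_2(G)-\mu_2(F)=\int_0^\oo 2\,(x-t)\,\big(F(x)-G(x)\big)\,\d x .
\]
Now the integrand is pointwise nonnegative: for $x<t$ both factors $x-t$ and $F(x)-G(x)$ are $\le 0$, while for $x>t$ both are $\ge 0$, and at $x=t$ the factor $x-t$ vanishes. Hence $\mu_2(G)\ge\mu_2(F)$. To upgrade this to a strict inequality I would invoke that $F-G$ is strictly negative on some interval $(t-\delta,t)$ and strictly positive on some $(t,t+\delta)$ (by the up-crossing definition), so that $(x-t)\big(F(x)-G(x)\big)>0$ on a set of positive Lebesgue measure, forcing $\int_0^\oo 2(x-t)(F-G)\,\d x>0$.

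Two points require care, and the second is the real obstacle. First, one must ensure the integrals are legitimate: the mean balance is an identity between finite quantities; and since the lemma will be applied with $\mu_2(G)<\oo$, the inequality $\Fbar\le\Gbar$ on $\{x>t\}$ shows that $F$ has the lighter right tail and hence $\mu_2(F)<\oo$ as well, so the recentring subtracts genuinely finite quantities. The subtler point is the passage from the purely local up-crossing definition to the global sign pattern used above — that a single up-crossing really does confine $F-G$ to be $\le 0$ left of $t$ and $\ge 0$ right of $t$ — together with the verification that the strict sign changes occur on sets of positive measure; this is where I would spend the most care. Conceptually the statement is the second-moment instance of the Karlin--Novikov cut criterion of \cite{KN63}, so that single up-crossing with equal means gives $F\leq_{\mathrm{cx}}G$, whence $\mu_2(F)\le\mu_2(G)$ upon applying the convex function $\phi(x)=x^2$; the argument above is the self-contained specialization of that criterion to $\phi(x)=x^2$ with the strictness made explicit.
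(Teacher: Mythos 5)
Your argument is correct, and in fact the paper offers no internal proof to compare it against: Lemma \ref{cut} is stated without proof, attributed to the cut criterion of \cite{KN63}. Your proposal is the standard self-contained specialization of that criterion to $\phi(x)=x^2$, and it is exactly the computation the authors implicitly rely on elsewhere: the recentred identity $\mu_2(G)-\mu_2(F)=\int_0^\oo 2(x-t)\(F(x)-G(x)\)\d x$, legitimate because equal finite means give $\int_0^\oo \(F(x)-G(x)\)\d x=0$, combined with the global sign pattern of $F-G$ around the unique crossing point $t$ and the strict one-sided inequalities from the up-crossing definition. The same Fubini/recentring manoeuvre even appears verbatim later in the paper, in Case 2 of the proof of Theorem \ref{thm-4-5}, where $\int_0^\oo 2x\[\Fbar(x)-\Gbar^{(2)}_T(x)\]\d x$ is rewritten as $-2\int_0^\oo\int_0^y \[\Fbar(x)-\Gbar^{(2)}_T(x)\]\d x\,\d y$; so your route matches the authors' working method, with the benefit of making the lemma self-contained and the strictness explicit. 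The two caveats you flag are genuine, and you resolve them in the right way: (i) as literally stated the lemma degenerates when $\mu_2(F)=\mu_2(G)=\oo$, which a single up-crossing with equal means does not preclude, so one must assume $\mu_2(G)<\oo$ --- automatic in every application in the paper, where the distributions lie in ${\cal G}$ --- and then, as you note, $\Fbar\le\Gbar$ on $(t,\oo)$ forces $\mu_2(F)<\oo$, so all recentred integrals converge absolutely; (ii) the paper's up-crossing definition is local, so ``up-crosses with only one time'' must be read as ``the crossing at $t$ is the only sign change of $F-G$,'' which is the intended reading and the one realized in every application (there $\Lambda_F$ is convex and $\Lambda_G$ affine or piecewise affine, so Proposition \ref{pro-2-1}(iii) makes the sign pattern automatically global). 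Under that reading, your positive-measure strictness argument on $(t-\delta,t)\cup(t,t+\delta)$ is sound, and the equal-means/single-crossing hypothesis indeed yields $F\leq_{\rm cx}G$ with strict inequality at $\phi(x)=x^2$.
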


\begin{theorem}\label{thm-4-4}
For any $\alpha \in (0,1)$, we have
\begin{equation}\label{eq-2024-8}
	\sup_{F \in {\cal G}} \TVaR_\alpha(F) = \sup_{F \in {\cal G}_1} \TVaR_\alpha(F).
\end{equation}
\end{theorem}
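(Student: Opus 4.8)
The plan is to establish the nontrivial inequality $\sup_{F\in{\cal G}}\TVaR_\alpha(F)\le\sup_{F\in{\cal G}_1}\TVaR_\alpha(F)$, the reverse being immediate from ${\cal G}_1\subseteq{\cal G}$, by the same construction philosophy as in Theorem \ref{thm3-4}: for an arbitrary $F\in{\cal G}$ I would exhibit a $G\in{\cal G}_1$ with $\TVaR_\alpha(G)\ge\TVaR_\alpha(F)$. The starting point is the distortion (capped-survival) representation $(1-\alpha)\TVaR_\alpha(F)=\int_0^\oo\min\{\Fbar(x),\,1-\alpha\}\,\d x$, which combined with $\mu(F)=1$ gives $(1-\alpha)\TVaR_\alpha(F)=1-\int_0^\oo(\Fbar(x)-(1-\alpha))^+\,\d x$. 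Writing $x_\alpha:=\VaR_\alpha(F)$, this shows that maximizing $\TVaR_\alpha$ over ${\cal G}$ is equivalent to minimizing the lower partial mean $\int_0^{x_\alpha}\Fbar(x)\,\d x-(1-\alpha)x_\alpha=\int_0^\alpha\VaR_u(F)\,\d u$, and it reduces the claim to a single comparison of survival functions in the tail region.

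Next I would construct the comparison law. Using Lemma \ref{lem-4-1} together with the continuity of $(k(T),\Delta(T))$ in $T\in[T_1,\oo)$, I would select $G=G_T^{(1)}\in{\cal G}_1$ whose survival function meets the level $1-\alpha$ at the same abscissa as $F$, i.e. $\Gbar(x_\alpha)=\Fbar(x_\alpha)=1-\alpha$, handling an atom of $F$ at $x_\alpha$ and the atom of $G$ at $T$ as boundary cases. Under this matching the two representations give $(1-\alpha)\TVaR_\alpha(F)=(1-\alpha)x_\alpha+\int_{x_\alpha}^\oo\Fbar(x)\,\d x$ and likewise for $G$, so it suffices to prove the tail-integral inequality $\int_{x_\alpha}^\oo\Gbar(x)\,\d x\ge\int_{x_\alpha}^\oo\Fbar(x)\,\d x$, whence $\TVaR_\alpha(G)\ge\TVaR_\alpha(F)$.

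The heart of the argument is the sign analysis of $D:=\Gbar-\Fbar$. Since $F$ and $G$ share the first two moments, $\int_0^\oo D\,\d x=0$ and $\int_0^\oo xD\,\d x=0$; by the cut criterion (Lemma \ref{cut}) a single up- or down-crossing is then impossible, so $D$ changes sign at least twice unless $F=G$. On the other hand $G_T^{(1)}$ is a shifted, truncated exponential, with $\Gbar\equiv1$ on $[0,\Delta)$, $\Gbar$ exponential on $[\Delta,T)$, and $\Gbar\equiv0$ on $[T,\oo)$; combining this with the fact that an IFR survival up-crosses an exponential at most once (Proposition \ref{pro-2-1}(iii)) pins the global sign pattern of $D$ to at most $+,-,+,-$, with the matching condition $D(x_\alpha)=0$ forcing $x_\alpha$ to sit at one of these crossings. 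I would then set $J(a):=\int_a^\oo D(x)\,\d x$, note that the two moment conditions yield $J(0)=0$ and $\int_0^\oo J(a)\,\d a=0$, and read off from $J'=-D$ and the pinned sign pattern that $J(x_\alpha)\ge0$, which is exactly the required tail-integral inequality; taking the supremum over $F\in{\cal G}$ then gives \eqref{eq-2024-8}.

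I expect the main obstacle to be twofold, both arising in this last step. First, one must justify that the matching distribution may always be taken in ${\cal G}_1$ rather than ${\cal G}_2$: heuristically a type-1 law carries the heaviest admissible upper tail, since its atom at $T$ maximizes $\Fbar$ in the tail and TVaR is a tail functional, but this has to be made rigorous through the envelope description of Lemma \ref{lem-4-1}, for instance by showing that any type-2 candidate can be replaced by a type-1 one with no smaller tail integral, and by controlling the range of $\alpha$-quantiles attainable as $T$ ranges over $[T_1,\oo)$. Second, the crossing bookkeeping across the three regimes of $\Gbar$ and the possible atoms (of $F$ at $x_\alpha$ and of $G$ at $T$) must be done carefully, since an incorrect count of sign changes of $D$ would destroy the monotonicity of $J$ that delivers $J(x_\alpha)\ge0$. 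Once these two points are secured, the reduction in the first paragraph converts $J(x_\alpha)\ge0$ directly into $\TVaR_\alpha(G)\ge\TVaR_\alpha(F)$, completing the proof.
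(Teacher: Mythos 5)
Your skeleton (reduce to constructing, for each $F\in{\cal G}$, a dominating $G\in{\cal G}_1$ with matched $\alpha$-quantile, then compare tail integrals) parallels the paper's Case~3, and your $J$-function mechanism is sound \emph{where it applies}: with $D=\Gbar-\Fbar$, the sign pattern $+,-,+,-$ from Proposition \ref{pro-2-1}(iii), together with $J(0)=0$, $J(\oo)=0$ and $\int_0^\oo J(a)\,\d a=\int_0^\oo xD(x)\,\d x=0$, does force $J>0$ at the \emph{middle} crossing; indeed the paper runs essentially this argument (with $\psi=-J$, via Lemma \ref{cut}) in the proof of Theorem \ref{thm-4-5}, whereas in Theorem \ref{thm-4-4} it instead uses the TTT transform: at the matched quantile $\phi_F'(\alpha)<\phi_{G^{(1)}_T}'(\alpha)$ because $F$ up-crosses $G^{(1)}_T$ there, and $\phi_F(\alpha)<\phi_{G^{(1)}_T}(\alpha)$ would contradict $\phi_F(1)=1=\phi_{G^{(1)}_T}(1)$ by concavity of $\phi_F$. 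But note a defect in your bookkeeping: $J(x_\alpha)\ge0$ holds only if $x_\alpha$ is the up-crossing; at either down-crossing of $D$ one has $J\le 0$ (e.g.\ $J$ decreases from $J(0)=0$ to the first crossing), so ``$x_\alpha$ sits at one of these crossings'' is not good enough. You must arrange the match at the up-crossing, which is exactly what the paper's apparatus $r(T)$, $g(T)=G^{(1)}_T(r(T))$ is built to do.

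The genuine gap is your first flagged obstacle, which is not a technicality but the core content of the theorem, and your proposed fix points the wrong way. Lemma \ref{lem-4-1} only supplies a matching law in ${\cal G}_1\cup{\cal G}_2$, and for $\alpha$ outside the window $(\alpha_0\wedge\alpha_1,\,\alpha_0\vee\alpha_1)$ --- where $\alpha_1$ is the level at which $F$ up-crosses $G_{T_1}$ of \eqref{eq-2024-42} and $\alpha_0$ the level at which $F$ up-crosses $G^{(1)}_\oo$ of \eqref{eq-2024-10} --- there is \emph{no} $G\in{\cal G}_1$ whose up-crossing with $F$ occurs at level $\alpha$, so a pointwise replacement of a type-2 candidate by a type-1 one with matched quantile cannot be carried out. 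The paper handles these regimes by entirely different comparisons that avoid quantile matching: for $\alpha\le\alpha_1$, convexity of $\Lambda_F$ gives $\VaR_p(F)\ge\VaR_p(G_{T_1})$ for all $p<\alpha_1$, and the mean constraint converts this, via $\TVaR_\alpha(F)=\frac{1}{1-\alpha}\bigl(1-\int_0^\alpha\VaR_p(F)\,\d p\bigr)$, into $\TVaR_\alpha(F)\le\TVaR_\alpha(G_{T_1})$; for $\alpha\ge\alpha_0$, tail domination by $G^{(1)}_\oo$ gives the bound directly. Inside the window, the intermediate-value step you implicitly invoke requires proving that $g(T)$ is continuous on $[T_1,\oo]$, which in the paper is Claim~1 and rests on the monotonicity and continuity of $k(T)$ and $\Delta(T)$ plus a Lipschitz estimate --- none of which appears in your outline. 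As written, your construction fails for every $\alpha$ outside the matching window, and the continuity of the crossing level is unproven inside it; these are precisely the two places where the paper's proof does its real work. (The atom cases are indeed easy, as you suspected: by Proposition \ref{pro-2-1}(i) an atom of $F$ can occur only at the right endpoint, and there $\VaR_p(F)=r(T)\le\VaR_p(G^{(1)}_T)$ for all $p\ge\alpha$ settles the comparison at once.)
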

	
\begin{proof}
Since $\sup_{F \in {\cal G}} \TVaR_\alpha(F) \geq \sup_{F \in {\cal G}_1} \TVaR_\alpha(F)$, it suffices to prove that for any $\alpha \in (0, 1)$ and given $F \in {\cal G}$, there exists another distribution function $G \in {\cal G}_1$ such that
\begin{equation}\label{eq-2024-9}
	\TVaR_\alpha(F) \leq \TVaR_\alpha(G).
\end{equation}
		
For any given $F \in {\cal G}$, let's first examine the relationship between $F$ and $G_{T_1}$, where $G_{T_1}$ is given by \eqref{eq-2024-42}. If $F = G_{T_1}$, then \eqref{eq-2024-9} holds since $G_{T_1} \in {\cal G}_1$. If $F \neq G_{T_1}$, note that $F \in {\cal G}$ implies that $\Lambda_F(x)$ is a convex function and $\Lambda_{G_{T_1}}(x)$ is a linear function in $x \in (0, T_1)$. Thus, by Proposition \eqref{pro-2-1} (iii), there exists some point $t_1 \in (0, T_1)$ such that $F$ up-crosses $G_{T_1}$ at $t_1$ under the conditions of the first two moments are equal and $F(0) = G_{T_1}(0) = 0$. Denote $F(t_1) = G_{T_1}(t_1) =: \alpha_1$. Let $G^{(1)}_\oo \in {\cal G}_1$ be defined by
\begin{eqnarray}\label{eq-2024-10}
	\Gbar^{(1)}_\oo(x) &=& \left\{\begin{array}{ll}  1, & {\rm for}\ x < \Delta_0, \\
		\exp\{-b(x-\Delta_0)\}, & {\rm for}\ x \geq \Delta_0, \end{array} \right.
\end{eqnarray}
where $b > 0$ such that $\mu(G^{(1)}_\oo) = 1$ and $\mu_2(G^{(1)}_\oo) = \mu_2(F)$. Now, we consider the link between $F \in {\cal G}$ and $G^{(1)}_\oo \in {\cal G}_1$.  If $F = G^{(1)}_\oo$, then \eqref{eq-2024-9} holds since $G^{(1)}_\oo \in {\cal G}_1$. If $F \neq G^{(1)}_\oo$ and $F(\Delta_0) = 0 = G_\oo(\Delta_0)$, it follows from Proposition \eqref{pro-2-1} (iii) that $F$ up-crosses $G^{(1)}_\oo$ on $(\Delta_0, \oo)$ at most once. Thus, the condition of the first two moments being equal doesn't hold by Lemma \ref{cut}. Thus, $F(\Delta_0) > 0$. In this case, $F$ down-crosses first and then up-crosses $G^{(1)}_\oo$ on $(\Delta_0, \oo)$ under the condition that the first two moments are equal. Denote by $t_0 \in (\Delta_0, \oo)$ the point at which $F$ up-crosses $G^{(1)}_\oo$, and set $\alpha_0 := G^{(1)}_\oo(t_0)$. For the graphical explanation of $\alpha_0$ and $\alpha_1$, see Figure \ref{Fig1}, where $t_0=r(\oo)$ and $t_1=r(T_1)$. It is worth pointing out that the two numbers $\alpha_0$ and $\alpha_1$ play important roles in the remaining proof of the theorem.
		
Below, we show how to construct $G \in {\cal G}_1$ such that \eqref{eq-2024-9} holds for any given $F \in {\cal G}$ with $F \neq G_{T_1}$ and $F \neq G^{(1)}_\oo$. We consider three cases.
		
\underline{\emph{Case 1:}}\ Suppose $\alpha \geq \alpha_0$. In this case, we have $F(x) > G^{(1)}_\oo(x)$ for $x > t_0$ by the convexity of $\Lambda_F$, see Figure \ref{Fig1}. It is obvious to see that $\VaR_p(F) \leq \VaR_p\big (G^{(1)}_\oo\big )$ for any $p > \alpha_0$. Thus, we have
\begin{equation*}
	\TVaR_\alpha(F) \leq \TVaR_\alpha\big (G^{(1)}_\oo\big ).
\end{equation*}
		
\underline{\emph{Case 2:}}\ Suppose $\alpha \leq \alpha_1$. In this case, we have $F(x) < G_{T_1}(x)$ for any $x \in (0, t_1)$ by the convexity of $\Lambda_F$, see Figure \ref{Fig1}. Thus, $\VaR_p(F) \geq \VaR_p(G_{T_1})$ for any $p < \alpha_1$, which implies
\begin{equation*}
	\int_0^\alpha \VaR_p(F)\d p \geq \int_0^\alpha \VaR_p(G_{T_1}) \d p.
\end{equation*}
Therefore,
\begin{align*}
	\TVaR_\alpha(F) &= \frac{1}{1-\alpha} \(1 - \int_0^\alpha \VaR_p(F)\d p\) \\
		&\leq \frac{1}{1-\alpha} \(1 - \int_0^\alpha \VaR_p(G_{T_1}) \d p\) = \TVaR_\alpha (G_{T_1}).
\end{align*}
		
\underline{\emph{Case 3:}}\ Suppose $\alpha_0 \wedge \alpha_1 < \alpha < \alpha_0 \vee \alpha_1$. For any given $F \in {\cal G}$, we give some analysis on $F$ and $G^{(1)}_T$, where $G^{(1)}_T \in {\cal G}_1$ is given by \eqref{eq-2024-43} with $T \geq T_1$. If $F(\Delta) = 0 = G_T^{(1)} (\Delta)$, then $F$ up-crosses $G^{(1)}_T$ on $(\Delta, T)$ only once by Proposition \eqref{pro-2-1} (iii). If $F(\Delta) > 0 = G^{(1)}_T(\Delta)$, we know that $F$ down-crosses first and then up-crosses $G_T^{(1)}$ on $(\Delta, T)$ under the condition that the first two moments are equal. Thus, whether $F(\Delta) = 0$ or $F(\Delta) > 0$, $F$ up-crosses $G^{(1)}_T$ on $(\Delta, T)$ only once. Assume $F$ up-crosses $G^{(1)}_T$ at $r(T) \in (\Delta, T)$ and denote
$$
          g(T) := G_T^{(1)}(r(T)).
$$
The intersection point $(r(T), -\ln(1-g(T)))$ has been graphically represented in in Figure \ref{Fig2}. Note that $g(T_1) = G_{T_1}(r(T_1)) = G_{T_1}(t_1) = \alpha_1$ and $g(\oo) = G^{(1)}_\oo(r(\oo)) = G^{(1)}_\oo(t_0) = \alpha_0$. We will establish \eqref{eq-2024-9} by verifying the following two claims.
		
\begin{figure}[htbp]
	\centering
	\begin{minipage}[b]{0.45\textwidth}
		\begin{tikzpicture}
			\tikzset{
				box/.style ={
					circle,
					minimum width =1.5pt,
					minimum height =1.5pt,
					inner sep=1.5pt,
					draw=black,
					fill=white
				}
			}
			\draw[->,thick] (0,0) -- (6.5,0) node[right] {$x$};
			\draw[->,thick] (0,0) -- (0,6.5) node[above] { };			
			\draw[domain=0:5.3] plot (\x,{0.2*\x*\x})node[above] {$\Lambda_F$};
			\draw[domain=0:4,color=red] plot (\x,{0.6*\x})node[below right] {$\Lambda_{{G}_{T_1}}$};
			\draw[domain=1:5.5,color=blue] plot (\x,{1.2*(\x-1)})node[right] {$\Lambda_{G^{(1)}_{\oo}}$};
			\node at (3,1.77) { $\bullet$};
			\node at (4.3,1.6) {\footnotesize $(r(T_1), -\ln(1-\alpha_1))$};
			\node[box] at (3.95,2.35){};
			\node at(4.73205,4.47846){ $\bullet$};
			\node at (6.1,4.35) {\footnotesize $(r(\oo), -\ln(1-\alpha_0))$};
			\draw  [dash pattern={on 1.5pt off 2pt}] (3,1.7) -- (3,0);
            \draw  [dash pattern={on 1.5pt off 2pt}] (4.73,4.24) -- (4.73,0);
            \node at (3,-0.3) {\footnotesize $t_1$};
            \node at (4.73,-0.3) {\footnotesize $t_0$};
            \node at (0,-0.3) {\footnotesize $0$};
		\end{tikzpicture}
		\caption{The graphs of $\Lambda_F$, $\Lambda_{G^{(1)}_{\oo}}$ and $\Lambda_{{G}_{T_1}}$.}
		\label{Fig1}
	\end{minipage}
	\hfill
	\begin{minipage}[b]{0.45\textwidth}
		\begin{tikzpicture}
			\tikzset{
				box/.style ={
					circle,
					minimum width =1.5pt,
					minimum height =1.5pt,
					inner sep=1.5pt,
					draw=black,
					fill=white
					}
				}
			\draw[->,thick] (0,0) -- (6,0) node[right] {$x$};
			\draw[->,thick] (0,0) -- (0,6) node[above] { };			
			\draw[domain=0:5.2] plot (\x,{0.2*\x*\x})node[above] {$\Lambda_{F}$};
			\draw[domain=1:5,color=red] plot (\x,{1.1*(\x-1)})node[right] {$\Lambda_{G^{(1)}_T}$};
			\node at (4.2,3.5) { $\bullet$};
			\node at (5.5,3.2) {\footnotesize $(r(T), -\ln(1\!-\!g(T)))$};
			\node[box] at (5,4.4){};
			\node at (3.0,5.5) {\footnotesize $T>T_1$};
			\node at (1.32,0.35) { $\bullet$};
			\node at (1.32,-0.3) {\footnotesize $t'$};
            \node at (4.2,-0.3) {\footnotesize $t^\ast$};
            \draw  [dash pattern={on 1.5pt off 2pt}] (1.32,0.3) -- (1.32,0);
            \draw  [dash pattern={on 1.5pt off 2pt}] (4.2,3.1) -- (4.2,0);
            \node at (0,-0.3) {\footnotesize $0$};
			\draw  [dash pattern={on 1.5pt off 2pt}] (0.73,0) -- (0.73,0.12);
		\end{tikzpicture}
		\caption{The graphs of $\Lambda_F$ and $\Lambda_{G^{(1)}_T}$.}
			\label{Fig2}
	\end{minipage}
 \end{figure}
		
\begin{itemize}
  \item Claim 1: The function $g(T)$ is continuous on $T \in [T_1, \oo]$.
			
      \emph{Proof}: It is well known that $k(T) := k$ and $\Delta(T) := \Delta$ are continuous functions on $T \in [T_1, \oo]$, see Lemma 3.4 in \cite{BM64}. Below, we show that $k(T)$ and $\Delta(T)$ are increasing on $T \in [T_1, \oo]$. Taking $T_1 \leq T < T' \leq \oo$. If $\Delta(T) > \Delta(T')$, then $\Lambda_{G^{(1)}_T}$ and $\Lambda_{G^{(1)}_{T'}}$ have at most one intersection point. If there is no intersection point, then $\Lambda_{G^{(1)}_T}(x) \leq \Lambda_{G^{(1)}_{T'}}(x)$ on $x \in (0, \Delta(T))$ and $\Lambda_{G^{(1)}_T}(x) < \Lambda_{G^{(1)}_{T'}}(x)$ on $x \in (\Delta(T), T)$, which contracts $\mu(G^{(1)}_T) = \mu(G^{(1)}_{T'})$. If there is one intersection point between $\Lambda_{G^{(1)}_T}$ and $\Lambda_{G^{(1)}_{T'}}$, then this contracts $\mu_2(G^{(1)}_T) = \mu_2(G^{(1)}_{T'})$ by Lemma \ref{lem-4-1}. Thus, we have $\Delta(T) \leq \Delta(T')$. On the other hand, if $k(T) \geq k(T')$, then $G^{(1)}_T(x) \geq G^{(1)}_{T'}(x)$ on $x \in (0, T)$ and $G^{(1)}_T(x) > G^{(1)}_{T'}(x)$ on $x \in (T, T')$. This contracts $\mu(G^{(1)}_T) = \mu(G^{(1)}_{T'})$. Thus, $k(T) < k(T')$.
			
      Moreover, $r(T)$ is also continuous in $T \in [T_1, \oo]$ from the proof of Theorem 3.1 and Lemma 3.4 in \cite{BM64}. By the definition of $G^{(1)}_T \in {\cal G}_1$ given by \eqref{eq-2024-43}, $G^{(1)}_T(x)$ is continuous in $T \in [T_1, \oo]$ for any given fixed $x \geq 0$. Thus, taking any point $T' \in (T-\ep, T+\ep)$ for any $\ep > 0$ (if $T = T_1$, let $T' \in (T, T+\ep)$), we have
	  \begin{align*}
		\left|g(T)- g(T')\right | &= \left |G^{(1)}_T(r(T)) - G^{(1)}_{T'}(r(T'))\right| \\[2pt]
			&\leq \left|G^{(1)}_T(r(T)) - G^{(1)}_{T'}(r(T))\right| + \left|G^{(1)}_{T'}(r(T))
			- G^{(1)}_{T'}(r(T'))\right| \\[3pt]
    		&\leq \left|G^{(1)}_T(r(T)) - G^{(1)}_{T'}(r(T))\right |+k(T')\left |r(T)-r(T')\right| \\[3pt]
			&\leq  \left|G^{(1)}_T(r(T)) - G^{(1)}_{T'}(r(T))\right| + k(\oo) \left|r(T) - r(T')\right|,
	  \end{align*}
      where the second inequality follows from the Lipschitz continuity of $G_{T'}$ with Lipschitz constant $k(T')$, and the last one follows from the monotonicity $k(T)$ with $k(\oo) \in \R$. Letting $T' \to T$ yields that $\left|g(T) - g(T')\right| \to 0$. Thus, $g(T)$ is continuous in $T \in [T_1, \oo]$.
			
   \item Claim 2: For any $F \in {\cal G}$ and $\alpha_0 \wedge \alpha_1 < \alpha < \alpha_0 \vee \alpha_1$, we have
	   \begin{equation}\label{eq-2024-11}
		  \TVaR_\alpha(F) \leq \TVaR_\alpha \big (G^{(1)}_T\big ),
	   \end{equation}
      where $G^{(1)}_T \in {\cal G}_1$, given by \eqref{eq-2024-43}, such that $g(T) = \alpha$ for $T \in [T_1, \oo]$.
			
      \emph{proof}: For any given $\alpha_0 \wedge \alpha_1 < \alpha < \alpha_0 \vee \alpha_1$, we know that there exists $T \in [T_1, \oo]$ such that $g(T) = \alpha$ by Claim 1, see Figure \ref{Fig2}. If $F(x)$ is discontinuous at $x^\ast = r(T)$, then $F(t^\ast)-) < 1$ and $F(t^\ast)) = 1$. For any $p \geq \alpha$, we have $\VaR_p(F) = t^\ast \leq \VaR_p\big (G^{(1)}_T\big )$. Then \eqref{eq-2024-11} holds. If $F(x)$ is continuous at $t^\ast= r(T)$, then, in view of $\VaR_\alpha\big (G^{(1)}_T\big ) = \VaR_\alpha(F) =t^\ast$, we have
	  \begin{eqnarray}\label{eq-2024-12}
		& & \TVaR_\alpha\big (G^{(1)}_T\big ) - \TVaR_\alpha(F) \nonumber \\[3pt]
           & & \qquad = \VaR_\alpha\big (G^{(1)}_T\big ) + \frac{1}{1-\alpha} \int_{\VaR_\alpha(G^{(1)}_T)}^\oo \Gbar^{(1)}_T(x) \d x - \VaR_\alpha(F) - \frac{1}{1-\alpha}  \int_{\VaR_\alpha(F)}^\oo \Fbar(x) \d x \nonumber \\[3pt]
               & &\qquad = \frac{1}{1-\alpha} \(\int_{t^\ast}^\oo \Gbar^{(1)}_T(x) \d x - \int_{t^\ast}^\oo \Fbar(x) \d x\) \nonumber \\[3pt]
                     & &\qquad=\frac{1}{1-\alpha}\(\int_0^{F^{-1}(\alpha)}\Fbar(x)\d x-\int_0^{\(G^{(1)}_T\)^{-1}(\alpha)} \Gbar^{(1)}_T(x) \d x\),
      \end{eqnarray}
     where the last equality follows from $\mu(F) = \mu(G^{(1)}_T) = 1$. To show $\TVaR_\alpha\big (G^{(1)}_T\big ) \geq \TVaR_\alpha(F)$, combining \eqref{eq-2024-12} and Proposition \ref{pro-2-1} (iv), it suffices to consider
	 \begin{equation}\label{eq-2024-13}
			\phi_F(\alpha) \geq \phi_{G^{(1)}_T}(\alpha),
	\end{equation}
    where $\phi_F$ and $\phi_{G^{(1)}_T}$ are the TTTs of $F$ and $G^{(1)}_T$, respectively. By Proposition \ref{pro-2-1} (i), $F$ is absolutely continuous and strictly increasing in $x \in \{z \geq 0, F(z) <1\}$. Then $F^{-1}(p)$ is also absolute continuous in $p \in (\alpha_o\wedge\alpha_1, \alpha_0\vee \alpha_1)$. Combining $F(t^\ast) = G^{(1)}_T(t^\ast) = \alpha$, we calculate the derivatives of TTTs for $F$ and $G^{(1)}_T$ at $\alpha$, i.e.,
	\begin{equation*}
           \phi_F'(\alpha) = \frac{1}{\lm_F(t^\ast)}\ \ {\rm and}\ \ \phi_{G^{(1)}_T}'(\alpha) = \frac{1}{\lm_{G^{(1)}_T}(t^\ast)},
	\end{equation*}
    where $\lm_F$ and $\lm_{G^{(1)}_T}$ are the corresponding failure rates of $F$ and $G^{(1)}_T$. Since $F$ up-crosses $G^{(1)}_T$ at $t^\ast$, then $\lambda_{G^{(1)}_T}(t^\ast) < \lambda_F(t^\ast)$, implying  $\phi_F'(\alpha) < \phi_{G^{(1)}_T}'(\alpha)$. If $\phi_F(\alpha) < \phi_{G^{(1)}_T}(\alpha)$, by the concavity of $\phi_F$ (see Proposition \ref{eq-2024-1} (iv)) and linearity of $\phi_{G^{(1)}_T}$, we obtain
	$$
		\phi_F(1)\leq \phi_F(\alpha) + \phi_F'(\alpha)(1 -\alpha)
			< \phi_{G^{(1)}_T}(\alpha) + \phi_{G^{(1)}_T}'(\alpha)(1 - \alpha) = \phi_{G^{(1)}_T}(1),
   $$
   which contradicts $\phi_F(1) = 1 = \phi_{G^{(1)}_T}(1)$. Thus, \eqref{eq-2024-13} holds and we complete the proof of Claim 2.
\end{itemize}
Thus, for any $\alpha \in (0, 1)$, \eqref{eq-2024-9} holds. This completes the proof of the theorem.
\end{proof}
	
Based on Theorem \ref{thm-4-4}, we now establish the worst-case RVaR bound. The subsequent analysis requires greater sophistication than that in the proof of Theorem \ref{thm-4-4}, as the parameter $\beta$ introduces significant analytical complexity -- especially in the small $\beta$ regime where its uncertainty dominates.
	
\begin{theorem}\label{thm-4-5}
For any $0\leq\alpha < \beta \le 1$, we have
\begin{equation}\label{eq-2024-14}
   \sup_{F\in {\cal G}} \RVaR_{\alpha, \beta}(F) =\sup_{F\in {\cal G}_1\cup {\cal G}_2}\RVaR_{\alpha, \beta}(F).
\end{equation}
\end{theorem}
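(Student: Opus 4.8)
The inclusion ${\cal G}_1\cup{\cal G}_2\subseteq{\cal G}$ makes the inequality ``$\ge$'' in \eqref{eq-2024-14} trivial, so everything lies in the reverse direction: for an arbitrary $F\in{\cal G}$ I must exhibit a single $G\in{\cal G}_1\cup{\cal G}_2$ with $\RVaR_{\alpha,\beta}(G)\ge\RVaR_{\alpha,\beta}(F)$, i.e. $\int_\alpha^\beta\(\VaR_u(G)-\VaR_u(F)\)\d u\ge0$. Writing $x_u:=\VaR_u(F)$, note that any $G\in{\cal G}_1\cup{\cal G}_2$ already shares both moments of $F$, so the common mean $\int_0^1 x_u\,\d u=1$ is available as a balancing constraint. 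The cleanest sufficient condition is pointwise quantile domination $\VaR_u(G)\ge x_u$ throughout $[\alpha,\beta]$, equivalently $\Lambda_G\ge\Lambda_F$ on $[x_\alpha,x_\beta]$; since $\Lambda_F$ is convex by Proposition \ref{pro-2-1}(ii), its chord over $[x_\alpha,x_\beta]$ already lies above it, and the plan is to realize (or beat) such a chord with the affine piece of an extreme member.

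To build $G$ I would re-use the apparatus of Theorem \ref{thm-4-4}: the crossing level $g(T)$ of $F$ against the one-parameter family is continuous in $T$ (its Claim 1), so $T$ can be tuned so that $F$ up-crosses $G$ at a level $\le\alpha$. By the IFR single-crossing-with-an-exponential property (Proposition \ref{pro-2-1}(iii)) together with convexity of $\Lambda_F$, this forces $G\le F$ as distribution functions just beyond $x_\alpha$, i.e. $\VaR_u(G)\ge x_u$ immediately above $\alpha$. The place where $G$ re-overtakes $F$ --- the second quantile crossing --- must be pushed to a level $\ge\beta$: for a ${\cal G}_1$-member this is supplied by its point mass at $T$, which caps $\VaR_u(G)$ at $T$ and lets $x_u$ eventually exceed it, while for a ${\cal G}_2$-member it comes from the steeper tail $k_2>k_1$. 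When this second crossing lands beyond $\beta$ the domination on $[\alpha,\beta]$ is pointwise and we are done; otherwise I fall back to the equal-mean balancing, establishing $\int_0^\alpha\VaR_u(G)\,\d u\le\int_0^\alpha x_u\,\d u$ and $\int_\beta^1\VaR_u(G)\,\d u\le\int_\beta^1 x_u\,\d u$ so that $\int_\alpha^\beta$ is forced the other way. The boundary (tangency) cases, where $F$ merely touches $G$ at the $\alpha$-quantile, I would settle through the concavity of the TTT transform (Proposition \ref{pro-2-1}(iv)) and the failure-rate comparison at the crossing, exactly as in Claim 2 of Theorem \ref{thm-4-4}, while Lemma \ref{cut} guarantees --- since $F$ and $G$ share their first two moments --- that a lone crossing is impossible and the second crossing I rely on genuinely exists.

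The verification splits into cases governed by the location of $\alpha$ and $\beta$ relative to the two reference levels $\alpha_0=g(\oo)$ and $\alpha_1=g(T_1)$ of Theorem \ref{thm-4-4}. When $\beta$ is large the interval $[\alpha,\beta]$ reaches into the tail and the construction reduces to the ${\cal G}_1$-extremizer already produced by Theorem \ref{thm-4-4} (the case $\beta=1$ being that theorem verbatim). The genuinely delicate regime is small $\beta$, where $[\alpha,\beta]$ sits in the body of the distribution, the governing parameter $T$ must cross the threshold $T_1$, and the extremizer has to be sought in ${\cal G}_2$ rather than ${\cal G}_1$. I expect the principal obstacle to be exactly this: coordinating two endpoint constraints with a single free parameter across the ${\cal G}_1$--${\cal G}_2$ interface. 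I would overcome it using the continuity and monotonicity of the defining data $(\Delta(T),k(T))$ on ${\cal G}_1$ and $(k_1(T),k_2(T))$ on ${\cal G}_2$ furnished by Lemma \ref{lem-4-1} and its supporting lemmas --- so that tuning $T$ to control the $\alpha$-side moves the upper-tail behaviour monotonically in the favourable direction --- reinforced by the cut criterion of Lemma \ref{cut} to fix the number and order of quantile crossings in each case.
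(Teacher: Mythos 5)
Your overall architecture matches the paper's: the trivial inclusion direction, the reduction to constructing one dominating $G\in{\cal G}_1\cup{\cal G}_2$ per $F$, a case split governed by $\alpha_0=g(\oo)$ and $\alpha_1=g(T_1)$, pointwise quantile domination where available, equal-mean balancing otherwise, and the continuity of the crossing-level functions. But there is a genuine gap at exactly the step you flag as delicate. In the regime $\alpha\le\alpha_1$ with $\beta$ in the body of the distribution, your fallback asserts $\int_\beta^1 \VaR_u(G)\,\d u \le \int_\beta^1 x_u\,\d u$, and nothing in your toolkit can deliver it. The crossing pattern there (the paper's Case 2, third subcase) is a \emph{triple} crossing: $F$ up-crosses $G^{(2)}_T$ at $v_1(T)$ (level $\alpha'$), down-crosses at $u(T)$ (level $\beta$), and up-crosses again at $v_2(T)$ (level $\beta'$), so for $p>\beta'$ one has $\VaR_p(G^{(2)}_T)\ge \VaR_p(F)$ --- the wrong direction pointwise in the deep tail. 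Your tail inequality is in fact equivalent (given equal means) to $\RVaR_{0,\beta}(G^{(2)}_T)\ge\RVaR_{0,\beta}(F)$, which the paper proves \emph{quantitatively} from the variance constraint: setting $\psi(t)=\int_0^t\bigl[\Fbar(x)-\Gbar^{(2)}_T(x)\bigr]\d x$ as in \eqref{eq-20250625}, the identity $\mu_2(F)-\mu_2(G^{(2)}_T)=-2\int_0^\oo\psi(y)\,\d y$ together with the up--down--up--down shape of $\psi$ forces $\psi(u(T))<0$, yielding \eqref{eq-250626}. Your invocation of Lemma \ref{cut} only counts and orders crossings; it cannot produce this signed-area inequality, so the balancing step is unsupported as written.

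A second, related mis-step is the anchoring. You propose tuning $T$ so that $F$ up-crosses $G$ at a level $\le\alpha$, but $g(T)$ ranges only over the interval between $\alpha_1$ and $\alpha_0$, so for $\alpha<\alpha_0\wedge\alpha_1$ no such $T$ exists in the ${\cal G}_1$ family --- this is precisely the regime where ${\cal G}_2$ is needed. The paper anchors there at the \emph{$\beta$-level down-crossing}, introducing a second crossing-level function $h(T)=G^{(2)}_T(u(T))$ on $[T_0,T_1]$, proving its continuity (via continuity of $u(T)$, $\lim_{T\to T_1}u(T)=T_1$, and the auxiliary facts $F(T_1)<1$ and continuity of $F$ at $u(T)$, both obtained from Lemma \ref{cut}), and solving $h(T)=\beta$. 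Your sketch never introduces this object; "tuning $T$ to control the $\alpha$-side" and hoping monotonicity of $(k_1(T),k_2(T))$ handles the upper tail does not substitute for it, because in this regime the construction must match $\beta$, not $\alpha$, and the $\alpha$-side is then disposed of via the split on $\alpha$ versus $\alpha'=F(v_1(T))$ and the decomposition $\RVaR_{\alpha,\beta}=\frac{1}{\beta-\alpha}\bigl[\beta\,\RVaR_{0,\beta}-\alpha\,\RVaR_{0,\alpha}\bigr]$. The remaining cases of your sketch ($\alpha\ge\alpha_0$; $g(T)=\alpha$ with the TTT/failure-rate tangency argument; $\beta\ge h_1$ reducing to Theorem \ref{thm-4-4}) do align with the paper.
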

	
\begin{proof}
It suffices to establish that for any $0<\alpha<\beta< 1$ and given $F \in {\cal G}$, there exists another $G \in {\cal G}_1 \cup {\cal G}_2$ such that
\begin{equation}\label{eq-2024-15}
	\RVaR_{\alpha, \beta}(F) \leq \RVaR_{\alpha, \beta}(G).
\end{equation}
		
For any given $F \in {\cal G}$, define $G_{T_1} \in {\cal G}_1$ as \eqref{eq-2024-42} and $G^{(1)}_\oo \in {\cal G}_1$ as \eqref{eq-2024-10}. Recall the definitions of $\alpha_1$ and $\alpha_0$ from the proof of Theorem \ref{thm-4-4}. That is, $\alpha_1 = F(t_1) = G_{T_1}(t_1)$, where $t_1 \in (0, T_1)$ such that $F$ up-crosses $G_{T_1}$, and $\alpha_0 = G^{(1)}_\oo(t_0)$, where $F$ up-crosses $G^{(1)}_\oo$ at $t_0 \in (\Delta_0, \oo)$, see also Figure \ref{Fig1}. Moreover, we have $F(T_1) < 1$. Otherwise $F$ up-crosses $G_{T_1}$ only once. This contradicts the equality of the first two moments by Lemma \ref{cut}. Furthermore, we also have $G_\oo \in {\cal G}_2$, since $\Gbar^{(1)}_\oo = \Gbar^{(2)}_{\Delta_0} \in {\cal G}_2$. To see this, when substituting $\Gbar_{\Delta_0}^{(2)}$ into \eqref{eq-2024-45}, we get $k_1 = 0$ and $k_2 = (\mu_2 - 1)^{-1/2}$. We can assert that $\Delta_0 = T_0$ and $b = k_2$ by the equally first two moments. In order to better reflect our proof, we always denote $G^{(2)}_{T_0} = G^{(1)}_\oo$ below and $\alpha_0 = G^{(2)}_{T_0}(t_0)$.
		
Below, we show how to construct $G \in {\cal G}_1 \cup {\cal G}_2$ such that \eqref{eq-2024-15} holds for any given $F \in {\cal G}$ with $F \neq G_{T_1}$ and $F \neq G^{(2)}_{T_0}$. We consider three cases:
		
\underline{\emph{Case 1:}}\ Suppose $\alpha \geq \alpha_0$. In this case, we have $F(x) > G^{(2)}_{T_0}(x)$ for $x > t_0$ by the convexity of $\Lambda_F$, see Figure \ref{Fig1}. It is obviously to see that, $\VaR_p(F) \leq \VaR_p(G^{(2)}_{T_0})$ for any $p > \alpha_0$. Finally, we have
\begin{equation*}
	\RVaR_{\alpha, \beta}(F) \leq \RVaR_{\alpha, \beta}(G^{(2)}_{T_0}).
\end{equation*}
		
\underline{\emph{Case 2:}}\ Suppose $\alpha \leq \alpha_1$. In this case, denote $h_1 := F(T_1) < 1$ by the previous analysis. Moreover, $F$ down-crosses first and then up-crosses $G^{(2)}_{T_0}$. Denote $F$ down-crosses $G^{(2)}_{T_0}$ at $u(T_0) \in (T_0, \oo)$ and $G^{(2)}_{T_0}(u(T_0)) = F(u(T_0)) =: h_0$. For any $T_0 < T < T_1$. Let's consider the case where $F$ intersects with $G^{(2)}_T$ defined in \eqref{eq-2024-44}. If $F$ is greater than or equal to $G^{(2)}_T$ on $(0, T)$, then $F$ will down-cross first and then up-cross $G^{(2)}_T$ on $(T, \oo)$. If $F$ up-crosses $G^{(2)}_T$ on $(0, T)$, then $F$ will down-cross $G^{(2)}_T$ or $F$ will down-cross first then up-cross $G^{(2)}_T$ on $(T, \oo)$. Anyway, we can always ensure that $F$ down-crosses $G^{(2)}_T$ at $(T, \oo)$, and denote this point as $u(T) \in (T, \oo)$. Denote
$$
   h(T):= G^{(2)}_T(u(T)) = F(u(T)).
$$
Note that, $h(T_0) = h_0$. Moreover, by the proof process of Theorem 3.1 in \cite{BM64}, then $u(T)$ is continuous in $T \in (T_0, T_1)$ and $\lim_{T \to T_1} u(T) = T_1$. Thus, $h(T_1) = h_1$. For graphic interpretation, see Figure \ref{Fig3}.
		
Below, we show that $F$ must be continuous at $u(T)$ for $T \in [T_0, T_1]$. Assume that $F$ is not continuous at $u(T)$, then $F(u(T)) = 1$ and $F(u(T)-) < 1$. After $u(T)$, $F$ cannot up-crosses $G^{(2)}_T$ again, this contradicts the equally first two moments. Thus, $h(T)$ is continuous for $T \in [T_0, T_1]$. Below, we consider three subcases:
\begin{itemize}
  \item Assume $\beta \geq h_1$. In this subcase, by Theorem \ref{thm-4-4}, we have
	\begin{equation*}
		\TVaR_{\alpha}(F) \leq \TVaR_{\alpha}(G_{T_1}).
	\end{equation*}
	Moreover, we have $\VaR_p(F) \geq \VaR_{h_1}(F) = T_1 = \VaR_p(G_{T_1})$ for all $p \geq \beta$. Thus,
	\begin{align*}
         \RVaR_{\alpha, \beta}(G_{T_1}) &= \frac{1}{\beta - \alpha} \[(1-\alpha)\TVaR_{\alpha}(G_{T_1}) - (1-\beta)\TVaR_{\beta}(G_{T_1})\] \\
             & \geq \frac{1}{\beta - \alpha} \[(1-\alpha)\TVaR_{\alpha}(F) - (1-\beta)\TVaR_{\beta}(F)\]
				= \RVaR_{\alpha, \beta}(F).
	\end{align*}
			
  \item Assume $\beta \leq h_0$. In this subcase, we have $G^{(2)}_{T_0}(x) \leq F(x)$ for any $x \in (0, u(T_0))$. Then, $\VaR_p(F) \leq \VaR_p(G^{(2)}_{T_0})$ for $p \in (0, h_0)$. Thus,
	  $$	\RVaR_{\alpha, \beta}(G^{(2)}_{T_0}) \geq \RVaR_{\alpha, \beta}(F).   $$
			
  \item Assume $h_0 \wedge h_1 < \beta < h_0 \vee h_1$. In this subcase, there exists $G^{(2)}_T \in {\cal G}_2$  such that $F$ down-crosses $G^{(2)}_T$ at $u(T)$ and $F(u(T)) = G^{(2)}_T(u(T)) = \beta$, where $G^{(2)}_T$ is given by \eqref{eq-2024-44}. If $G^{(2)}_T(x) \leq F(x)$ for any $x \in (0, u(T))$, then we have $\VaR_p(F) \leq \VaR_p(G^{(2)}_T)$ for any $p \in (\alpha, \beta)$. Thus, $\RVaR_{\alpha, \beta}(G^{(2)}_T) \geq \RVaR_{\alpha, \beta}(F)$. Otherwise, there exists $v_1(T)$ and $v_2(T)$ with $v_1(T) < u(T) < v_2(T)$ such that $F$ first up-crosses, then down-cross and up-crosses $G_T$ at $v_1(T)$, $u(T)$, and $v_2(T)$, respectively. For graphic interpretation, see Figure \ref{Fig4}. Note that, $v_2(T)$ may be infinity. Denote
      \begin{equation} \label{eq-20250625}
	   	\psi(t) := \int_0^t \[\Fbar(x) - \Gbar^{(2)}_T(x)\] \d x.
     \end{equation}
     Note that $\psi(t)$ is increasing on $(0, v_1(T))$ first, and then decreasing, increasing and decreasing on  $(v_1(T), u(T))$, $(u(T), v_2(T))$ and $(v_2(T), \oo)$, respectively. Note that, $\psi(\oo) = 0$ and $\psi(0) = 0$. If $\psi(u(T)) \geq 0$, then $\psi(t) \geq 0$ for any $t > 0$. Thus,
	 \begin{align*}
            \int_0^\oo 2x\[\Fbar(x)-\Gbar^{(2)}_T(x)\] \d x &= 2\int_0^\oo \int_0^x \[\Fbar(x)-\Gbar^{(2)}_T(x)\]\d y\d x\\[3pt]
				& = 2 \int_0^\oo \int_y^\oo \[\Fbar(x) - \Gbar^{(2)}_T(x)\] \d x \d y\\[3pt]
       	& = -2 \int_0^\oo \int_0^y \[\Fbar(x) - \Gbar^{(2)}_T(x)\] \d x \d y < 0.
	\end{align*}
    This contradicts $\mu_2(F) = \mu_2(G^{(2)}_T)$. Thus, $\psi(u(T)) < 0$, i.e.,
    \begin{equation} \label{eq-250626}
        \RVaR_{0, \beta}(G^{(2)}_T) > \RVaR_{0, \beta}(F).
    \end{equation}
    Denote $\alpha' := F(v_1(T))$. If $\alpha \geq \alpha'$, then $\VaR_p(F) \leq \VaR_p(G^{(2)}_T)$ for $p \in (\alpha, \beta)$ since $G^{(2)}_T(x) \leq F(x)$ for $x \in (v_1(T), u(T))$. Thus $\RVaR_{\alpha, \beta}(G^{(2)}_T) > \RVaR_{\alpha, \beta}(F)$. If $\alpha < \alpha'$, then $\VaR_p(F) >\VaR_p(G^{(2)}_T)$ for $p \in (0, \alpha')$. Thus, in view of \eqref{eq-250626}, we have
	\begin{align*}
	  \RVaR_{\alpha, \beta}(G^{(2)}_T) &= \frac{1}{\beta - \alpha} \[\beta\, \RVaR_{0, \beta}(G^{(2)}_T)
				- \alpha\, \RVaR_{0, \alpha}(G^{(2)}_T)\] \\
			&  \geq \frac{1}{\beta - \alpha} \[\beta\RVaR_{0, \beta}(F) - \alpha\RVaR_{0, \alpha}(F)\]
				= \RVaR_{\alpha, \beta}(F).
	\end{align*}
\end{itemize}
				
 \begin{figure}[htbp]
	\centering
	\begin{minipage}[b]{0.45\textwidth}
		\begin{tikzpicture}
			\tikzset{
				box/.style ={
	     			circle,
					minimum width =1.5pt,
					minimum height =1.5pt,
					inner sep=1.5pt,
					draw=black,
					fill=white
				}
			}
			\draw[->,thick] (0,0) -- (7,0) node[right] {$x$};
			\draw[->,thick] (0,0) -- (0,8) node[above] { };			
			\draw[domain=0:6] plot (\x,{0.2*\x*\x})node[above] {$\Lambda_F$};
			\draw[domain=0:4,color=red] plot (\x,{0.6*\x})node[below right] {$\Lambda_{{G}_{T_1}}$};
			\draw[domain=1:6,color=blue] plot (\x,{1.2*(\x-1)})node[right] {$\Lambda_{{G}^{(2)}_{T_0}}$};
			\node at (3,1.77) { $\bullet$};
			\node at (4.2,1.5) {\footnotesize $(r(T_1), -\ln(1-\alpha_1))$};
			\node[box] at (3.95,2.35){};
			\node at(1.26795,0.321539){ $\bullet$};
			\node at (2.7,0.25) {\footnotesize $(u(T_0), -\ln(1-h_0))$};
			\node at(4.73205,4.47846){ $\bullet$};
			\node at (6.1,4.47846) {\footnotesize $(r(\oo), -\ln(1-\alpha_0))$};
			\node at (3.95,3.14) { $\bullet$};
			\node at (5.4,3.14) {\footnotesize $(u(T_1), -\ln(1-h_1))$};
			\draw  [dash pattern={on 1.5pt off 2pt}] (3.95,0) -- (3.95,3.13);
			\node at (3.95,-0.3) {\footnotesize $T_1$};
			\node at (1.0,0) { $\bullet$};
			\node at (1.0,-0.3) {\footnotesize $T_0$};
		\end{tikzpicture}
		\caption{The graphs of $\Lambda_F$, $\Lambda_{{G}^{(2)}_{T_0}}$ and $\Lambda_{{G}_{T_1}}$.}
		\label{Fig3}
		\end{minipage}
		\hfill
		\begin{minipage}[b]{0.45\textwidth}
		\begin{tikzpicture}
			\tikzset{
				box/.style ={
					circle,
					minimum width =1.5pt,
					minimum height =1.5pt,
					inner sep=1.5pt,
					draw=black,
					fill=white
				}
			}
		\draw[->,thick] (0,0) -- (7.5,0) node[right] {$x$};
		\draw[->,thick] (0,0) -- (0,8.5) node[above] { };			
		\draw[domain=0:6.5] plot (\x,{0.2*\x*\x})node[above] {$\Lambda_F$};
		\draw[domain=0:2.8,color=blue] plot (\x,{0.4*\x});
		\draw[domain=2.8:6.5,color=blue] plot (\x,{1.9*(\x-2.8)+1.12})node[right] {$\Lambda_{G^{(2)}_T}$};;
		\node at (3.49,2.45) { $\bullet$};
		\node at (6,7.2) { $\bullet$};
		\node at (2,0.8) { $\bullet$};
		\draw  [dash pattern={on 1.5pt off 2pt}] (3.49,0) -- (3.49,2.45);
		\draw  [dash pattern={on 1.5pt off 2pt}] (6,0) -- (6,7.2);
		\draw  [dash pattern={on 1.5pt off 2pt}] (2,0) -- (2,0.8);
        \node at (-0.3,2.6) {\footnotesize \rotatebox{90}{$-\ln(1\!-\!h(T))$}};
       	\draw  [dash pattern={on 1.5pt off 2pt}] (0,0.8) -- (1.76,0.8);
        \node at (-0.3,0.6) {\footnotesize \rotatebox{90}{$-\ln(1\!-\!\alpha')$}};
        \node at (-0.3,7.2) {\footnotesize \rotatebox{90}{$-\ln(1\!-\!\beta')$}};
		\node at (3.49,-0.3) {\footnotesize $u(T)$};
		\node at (6,-0.3) {\footnotesize $v_2(T)$};
		\node at (2,-0.3) {\footnotesize $v_1(T)$};
		\node at (3.49,6) {\footnotesize $T_0<T<T_1$};
		\node at (2.8,1.12) { $\bullet$};
        \draw  [dash pattern={on 1.5pt off 2pt}] (0,7.2) -- (6,7.2);
        \draw  [dash pattern={on 1.5pt off 2pt}] (0,2.45) -- (3.4,2.45);
		\draw  [dash pattern={on 1.5pt off 2pt}] (2.8,0) -- (2.8,1.12);
		\node at (2.8,-0.3) {\footnotesize $T$};
		\end{tikzpicture}
		\caption{The graphs of $\Lambda_F$ and $\Lambda_{{G}^{(2)}_{T}}$.}
		\label{Fig4}
		\end{minipage}
 \end{figure}
		
\underline{\emph{Case 3:}}\ Suppose $\alpha_0 \wedge \alpha_1 < \alpha < \alpha_0 \vee \alpha_1$. In this case, for given $\alpha$, it is easy to see that there exists some point $T \in [T_1, \oo]$ such that  $g(T) = \alpha$ by Claim 1 in Theorem \ref{thm-4-4}; see Figure \ref{Fig2}. If $F(x)$ is not continuous at $r(T)$, then $F(r(T)-) < 1$ and $F(r(T)) = 1$. For any $p \geq \alpha$, we have $\VaR_p(F) = r(T) \leq \VaR_p(G^{(1)}_T)$. Then \eqref{eq-2024-15} holds. If $F(x)$ is continuous at $r(T)$, then by $\VaR_\alpha(G^{(1)}_T) = \VaR_\alpha(F) = r(T)$. Below, we consider two subcases:
\begin{itemize}
    \item Suppose $\VaR_\beta(F) \leq T$. We have $F(x) \geq G^{(1)}_T(x)$ for any $x \in [r(T), \VaR_\beta(F)]$. Then $\VaR_p(F) \leq \VaR_p (G^{(1)}_T)$ for any $p \in [\alpha, \beta]$. Then \eqref{eq-2024-15} holds.
			
   \item Suppose $\VaR_\beta(F) > T$. In this case, we have $\beta > F(T) \geq F(T-) \geq G^{(1)}_T(T-)$.			 Then, it is easy to see that $\VaR_p(F) \geq \VaR_\beta(F) > T = \VaR_p(G^{(1)}_T)$. Thus,
       $$
              \TVaR_\beta(F) = \frac{1}{1-\beta} \int^1_{\beta} \VaR_p(F) \d p > \frac{1}{1-\beta}  \int^1_{\beta} \VaR_p(G^{(1)}_T) \d p = \TVaR_\beta(G^{(1)}_T).
	   $$
       Note that by the proof of Theorem \ref{thm-4-4}, we have $\TVaR_\alpha(G^{(1)}_T) \geq \TVaR_\alpha(F)$. Thus,
	  \begin{align*}
          \RVaR_{\alpha, \beta}(G^{(1)}_T) &= \frac{1}{\beta - \alpha} \[(1-\alpha)\TVaR_{\alpha}(G^{(1)}_T) - (1-\beta)\TVaR_{\beta}(G^{(1)}_T)\] \\
              & \geq \frac{1}{\beta - \alpha} \[(1-\alpha)\TVaR_{\alpha}(F) - (1-\beta)\TVaR_{\beta}(F)\]  =\RVaR_{\alpha, \beta}(F)
  	  \end{align*}
\end{itemize}
Thus, for any $\alpha \in (0, 1)$, \eqref{eq-2024-14} holds.
\end{proof}
	
A comparison of Theorems \ref{thm-4-4} and \ref{thm-4-5} indicates that, within the context of worst-case TVaR, the analysis of the worst-case distribution must be in ${\cal G}_1$ exclusively. Conversely, within the framework of worst-case RVAR, the worst-case distribution may be observed not only in ${\cal G}_1$, but also in ${\cal G}_2$. This phenomenon can be attributed to the observation that, in scenarios where the parameters $\alpha$ and $\beta$ are small, i.e. $\alpha \leq \alpha_1$ and $\beta \leq h_0$ (see Case 2 in the proof of Theorem \ref{thm-4-5}), the worst-case distribution is attained in ${\cal G}_2$.
	
Finally, we examine the best-case RVaR scenario. Analogous to Theorem \ref{thm-4-5}, our analysis demonstrates that the optimal RVaR under the first two moments and IFR distribution constraints is likewise attained within the sets ${\cal G}_1$ and ${\cal G}_2$.
	
\begin{theorem}\label{thm-4-6}
For any $0<\alpha< \beta\le 1$, we have
\begin{equation}  \label{eq-2024-16}
      \inf_{F \in {\cal G}} \RVaR_{\alpha, \beta}(F) = \inf_{F \in {\cal G}_1 \cup {\cal G}_2} \RVaR_{\alpha, \beta}(F).
\end{equation}
\end{theorem}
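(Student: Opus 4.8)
The plan is to adapt the construction used in Theorem \ref{thm-4-5} for the worst case, reversing every quantile comparison so that $\RVaR_{\alpha,\beta}$ is driven downward. Because ${\cal G}_1\cup{\cal G}_2\subseteq{\cal G}$, one half of \eqref{eq-2024-16}, namely $\inf_{F\in{\cal G}}\RVaR_{\alpha,\beta}(F)\le\inf_{F\in{\cal G}_1\cup{\cal G}_2}\RVaR_{\alpha,\beta}(F)$, is immediate. Hence it suffices to prove that for every $F\in{\cal G}$ there is some $G\in{\cal G}_1\cup{\cal G}_2$ with $\RVaR_{\alpha,\beta}(G)\le\RVaR_{\alpha,\beta}(F)$; taking the infimum over $F\in{\cal G}$ then yields the reverse inequality. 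As usual I would discard the degenerate values $\mu_2\in\{1,2\}$ and assume $F\ne G_{T_1}$ and $F\ne G^{(2)}_{T_0}$, since these already belong to ${\cal G}_1\cup{\cal G}_2$.

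First I would import the crossing data from the proof of Theorem \ref{thm-4-5}: by Proposition \ref{pro-2-1}(iii) and Lemma \ref{cut}, $F$ up-crosses $G_{T_1}$ exactly once at the level $\alpha_1=F(t_1)$ and up-crosses $G^{(2)}_{T_0}=G^{(1)}_\oo$ exactly once at the level $\alpha_0=G^{(2)}_{T_0}(t_0)$ (Figure \ref{Fig1}), while the associated down-crossing levels $h_0,h_1$ from Case 2 of Theorem \ref{thm-4-5} govern the upper tail. The decisive sign reversal is that, for the infimum, a reference distribution $G$ dominates $F$ in quantile on the \emph{lower} side of a crossing: since $\Fbar>\Gbar$ to the left of an up-crossing, one has $\VaR_p(F)\ge\VaR_p(G)$ for levels $p$ below the crossing level, the opposite of the range exploited for the supremum. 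I would split the argument according to the position of $\beta$ relative to $\alpha_0\vee\alpha_1$ and the $h$-levels, just as in Theorem \ref{thm-4-5}, but now the ``small $\beta$'' regime is the benign one.

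When $\beta\le\alpha_0\vee\alpha_1$ the interval $[\alpha,\beta]$ fits inside a single domination window: using the continuity of $g(T)=G^{(1)}_T(r(T))$ from Claim 1 of Theorem \ref{thm-4-4}, I would select $G^{(1)}_T\in{\cal G}_1$ with crossing level $g(T)=\beta$ (or take $G^{(2)}_{T_0}$ when $\beta\le\alpha_0\wedge\alpha_1$), so that $\VaR_p(G)\le\VaR_p(F)$ throughout the relevant part of $[\alpha,\beta]$, and integrate this pointwise inequality to obtain $\RVaR_{\alpha,\beta}(G)\le\RVaR_{\alpha,\beta}(F)$. In the subcases where the chosen member is overtaken by $F$ near the lower endpoint, i.e. where $F$ first down-crosses $G$ at some level $\alpha'$ with $\alpha<\alpha'$ (the configuration of Figure \ref{Fig3}), direct domination fails on $[\alpha,\alpha')$ and I would instead split through the lower-tail form $\RVaR_{\alpha,\beta}=\frac{1}{\beta-\alpha}[\beta\,\RVaR_{0,\beta}-\alpha\,\RVaR_{0,\alpha}]$ and control the $\RVaR_{0,\cdot}$ terms by the second-moment functional $\psi(t)=\int_0^t[\Fbar(x)-\Gbar^{(2)}_T(x)]\,\d x$ of \eqref{eq-20250625}. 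Where $\alpha$ and $\beta$ straddle a crossing, I would additionally invoke the concavity of the total-time-on-test transform $\phi_F$ (Proposition \ref{pro-2-1}(iv)) via the identity \eqref{eq-2024-12}, pinning the crossing so that $\VaR_\alpha(G)=\VaR_\alpha(F)$ and reducing the TVaR comparison to a single inequality between $\phi_F$ and the linear $\phi_G$, anchored at $\phi_F(1)=\phi_G(1)=1$.

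I expect the main obstacle to be the regime $\beta>\alpha_0\vee\alpha_1$, where no member of ${\cal G}_1\cup{\cal G}_2$ dominates $F$ across the whole of $[\alpha,\beta]$ and the minimizer is forced into ${\cal G}_2$. Here the triple-crossing configuration of Figure \ref{Fig4} reappears, and everything hinges on the sign of $\psi$ at the down-crossing point: the shape of $\psi$ (increasing, then decreasing, then increasing, then decreasing, with $\psi(0)=\psi(\oo)=0$) together with the equality $\mu_2(F)=\mu_2(G^{(2)}_T)$ forces $\psi(u(T))<0$, and the delicate step for the infimum is to convert this into the correct inequality between $\RVaR_{0,\alpha}(G)$ and $\RVaR_{0,\alpha}(F)$ (rather than between the $\RVaR_{0,\beta}$'s as in \eqref{eq-250626}), by pinning the down-crossing at level $\alpha$, and then to recombine through the lower-tail decomposition so that the sign points toward $\RVaR_{\alpha,\beta}(G)\le\RVaR_{\alpha,\beta}(F)$. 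Verifying that the continuity of the down-crossing level $h(T)$ indeed yields a matching $G^{(2)}_T$ with crossing level exactly $\alpha$, and checking the boundary value $\beta=1$ (the best-case TVaR endpoint) separately, completes the argument; once the monotonicity of $k(T),\Delta(T)$ in $T$ and the continuity of the crossing levels are in hand, the remaining estimates are routine.
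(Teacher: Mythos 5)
Your overall architecture is the paper's own: reduce to constructing, for each $F\in{\cal G}$, a $G\in{\cal G}_1\cup{\cal G}_2$ with $\RVaR_{\alpha,\beta}(G)\le\RVaR_{\alpha,\beta}(F)$; split on the position of $\beta$ relative to $\alpha_0,\alpha_1$ with the $h$-levels controlling $\alpha$ inside the large-$\beta$ regime; pin $g(T)=\beta$ for the ${\cal G}_1$ family and $h(T)=\alpha$ for the ${\cal G}_2$ family; and convert $\psi(u(T))<0$ into $\RVaR_{0,\alpha}(F)\le\RVaR_{0,\alpha}(G^{(2)}_T)$ before recombining. All of this matches Cases 1--3 of the paper's proof of Theorem \ref{thm-4-6}. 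However, two directional mis-assignments would break specific subcases as written. First, your fallback ``take $G^{(2)}_{T_0}$ when $\beta\le\alpha_0\wedge\alpha_1$'' names the wrong member: since $F(\Delta_0)>0=G^{(2)}_{T_0}(\Delta_0)$ and $F$ first \emph{down}-crosses $G^{(2)}_{T_0}$ at level $h_0$, one has $\VaR_p(G^{(2)}_{T_0})\ge\VaR_p(F)$ for $p$ below $h_0$, so on a low window $[\alpha,\beta]$ this reference pushes RVaR \emph{up}, not down --- contradicting the crossing principle you yourself state. The correct small-$\beta$ reference is $G_{T_1}$, which $F$ up-crosses at level $\alpha_1$, so that $G_{T_1}(x)\ge F(x)$ on $[0,t_1]$ and $\VaR_p(G_{T_1})\le\VaR_p(F)$ for $p\le\alpha_1$ (the paper's Case 1). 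Symmetrically, $G^{(2)}_{T_0}$ is what the paper deploys in the opposite corner, $\beta\ge\alpha_0$ with $\alpha\le h_0$, where its \emph{larger} lower quantiles give $\TVaR_\alpha(F)\ge\TVaR_\alpha(G^{(2)}_{T_0})$ via equality of means; your assignment of the two boundary members is exactly swapped. Relatedly, your claim that for $\beta>\alpha_0\vee\alpha_1$ the minimizer is ``forced into ${\cal G}_2$'' is inaccurate: when $\alpha\ge h_1$ the paper uses $G_{T_1}\in{\cal G}_1$, whose quantiles equal $T_1\le\VaR_p(F)$ for all $p\ge\alpha$.

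Second, your TTT step pins the crossing ``so that $\VaR_\alpha(G)=\VaR_\alpha(F)$'' and then invokes the concavity argument. But Claim 2 of Theorem \ref{thm-4-4}, pinned at $\alpha$, yields $\TVaR_\alpha(F)\le\TVaR_\alpha(G^{(1)}_T)$ --- the inequality serving the \emph{supremum} (Theorems \ref{thm-4-4} and \ref{thm-4-5}), with the wrong sign for the infimum. For the best case the paper pins the up-crossing of $G^{(1)}_T$ at level $\beta$ (i.e.\ $g(T)=\beta$, its Case 3), applies the TTT comparison at $\beta$ to get $\TVaR_\beta(F)\le\TVaR_\beta(G^{(1)}_T)$, and pairs this with $\TVaR_\alpha(F)\ge\TVaR_\alpha(G^{(1)}_T)$, obtained from quantile domination below the down-crossing level $\alpha'$ together with $\mu(F)=\mu(G^{(1)}_T)$; the two then combine through $\RVaR_{\alpha,\beta}=\frac{1}{\beta-\alpha}\left[(1-\alpha)\TVaR_\alpha-(1-\beta)\TVaR_\beta\right]$ with the sign pointing the right way. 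Once you make these two corrections --- $G_{T_1}$ as the small-$\beta$ reference, and pinning the ${\cal G}_1$ crossing at $\beta$ rather than $\alpha$ --- your outline coincides with the paper's proof; the $\psi$-based treatment of the triple-crossing configuration in your final paragraph is already the right mechanism for the remaining ${\cal G}_2$ subcase.
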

	
\begin{proof}
It suffices to show that for given $F \in {\cal G}$, there exists $G \in {\cal G}_1 \cup {\cal G}_2$ such that
\begin{equation}\label{eq-2024-17}
		\RVaR_{\alpha, \beta}(G) \leq \RVaR_{\alpha, \beta}(F).
\end{equation}
		
From Theorems \ref{thm-4-4} and \ref{thm-4-5}, $F$ up-crosses $G^{(1)}_T \in {\cal G}_1$ at $r(T) \in (\Delta, T)$ when $T \geq T_1$, where $G^{(1)}_T$ is given by \eqref{eq-2024-43}. Besides, $g(T) = G^{(1)}_T(r(T)) = F(r(T))$, $\alpha_1 = g(T_1)$ and $\alpha_0 = g(\oo)$. Moreover, $F$ down-crosses $G^{(2)}_T \in {\cal G}_2$ at $u(T) \in (T, \oo)$ when $T \in (T_0, T_1)$. In addition, $h(T) = G^{(2)}_T(u(T)) = F(u(T))$, $h_1 = h(T_1) < 1$ and $h_0 = h(T_0)$.	Below, we show how to construct $G \in {\cal G}_1 \cup {\cal G}_2$ such that \eqref{eq-2024-17} holds for any given $F \in {\cal G}$ with $F \neq G_{T_1}$ and $F \neq G^{(2)}_{T_0}$. We consider three cases:
		
\underline{\emph{Case 1:}}\ Suppose $\beta \leq \alpha_1$. In this case, we have $G_{T_1}(x) \geq F(x)$ for $x \in [0, r(T_1)]$ by the convexity of $\Lambda_F$, see Figure \ref{Fig1}. It is obviously to see that, $\VaR_p(F) \geq \VaR_p(G_{T_1})$ for any $p \in [\alpha, \beta]$, implying
\begin{equation*}
	\RVaR_{\alpha, \beta}(F) \geq \RVaR_{\alpha, \beta}(G_{T_1}).
\end{equation*}
		
\underline{\emph{Case 2:}}\ Suppose $\beta \geq \alpha_0$. In this case, we need to consider the following three subcases, see Figure \ref{Fig3}.
\begin{itemize}
   \item Assume $\alpha \leq h_0$. In this subcase, we have $G^{(2)}_{T_0}(x) \leq F(x)$ on $x \in [0, u(T_0)]$. Then $\VaR_p(F) \leq \VaR_p(G^{(2)}_{T_0})$ on $p \in [0, \alpha]$. Therefore, $\TVaR_\alpha(F) \geq \TVaR_\alpha(G^{(2)}_{T_0})$ since $\mu(F) = \mu(G_{T_0})$. Moreover, we also have $\TVaR_\beta(F) \leq \TVaR_\beta(G^{(2)}_{T_0})$ by Case 1 in Theorem 4.4 and $G^{(2)}_{T_0} = G_\oo$. Note that
       $$
             \RVaR_{\alpha, \beta}(F) = \frac{1}{\beta-\alpha} \int_\alpha^\beta \VaR_p(F) \d p = \frac{1}{\beta-\alpha} \[(1-\alpha) \TVaR_\alpha(F) - (1-\beta) \TVaR_\beta(F)\]
	   $$
	   Thus, \eqref{eq-2024-17} holds.
			
   \item Assume $\alpha \geq h_1$. In this subcase, we have $h_1 = F(T_1) > \lim_{t \uparrow T_1} G_{T_1}(t)$. 		 Then, $\VaR_p(G_{T_1}) = T_1 = \VaR_{h_1}(F) \leq \VaR_p(F)$ for $p \geq \alpha$. Thus,
	  $$	\RVaR_{\alpha, \beta}(G_{T_1}) \leq \RVaR_{\alpha, \beta}(F).   $$
			
  \item Assume $h_0 \wedge h_1 < \alpha < h_0 \vee h_1$. In this subcase, there exists $G^{(2)}_T \in {\cal G}_2$  such that $F$ down-crosses $G^{(2)}_T$ at $u(T)$ and $F(u(T)) = G^{(2)}_T(u(T)) = \alpha$, where $T \in (T_0, T_1)$ and $G^{(2)}_T$ is given by \eqref{eq-2024-44}. Following the similar analysis of Case 2 in the proof of Theorem \ref{thm-4-5}, we know that either $G^{(2)}_T(x) \geq F(x)$ for any $x \in (u(T), \oo)$, or $F$ up-cross only once $G^{(2)}_T$ at $v_2(T) \in (u(T), \oo)$. For graphic interpretation, see Figure \ref{Fig4}. In the former type, we have $\VaR_p(F) \geq \VaR_p(G^{(2)}_T)$ for any $p \in [\alpha, \beta]$, so \eqref{eq-2024-17} holds. In the later type, denote
      $$
           \beta' := G^{(2)}_T(v_2(T)) = F(v_2(T)).
      $$
      If $\beta \leq \beta'$, then $G^{(2)}_T(x) \geq F(x)$ for any $x \in (u(T), v_2(T))$, and $\VaR_p(F) \geq \VaR_p(G^{(2)}_T)$ for any $p \in [\alpha, \beta]$, so \eqref{eq-2024-17} holds. If $\beta > \beta'$, then $G^{(2)}_T(x) \leq F(x)$ for any $x \in (v_2(T), \oo)$, and $\VaR_p(F) \leq \VaR_p(G^{(2)}_T)$ for any $p \in [\beta, 1]$. Therefore, $\TVaR_\beta(F) \leq \TVaR_\beta(G^{(2)}_T)$. Moreover, we also have $\RVaR_{0, \alpha}(F) \leq \RVaR_{0, \alpha}(G^{(2)}_T)$ by the similar Case 2 in Theorem \ref{thm-4-5}. Thus, \eqref{eq-2024-17} holds.
\end{itemize}
		
\underline{\emph{Case 3:}}\ Suppose $\alpha_0 \wedge \alpha_1 < \beta < \alpha_0 \vee \alpha_1$. In this case, it is easy to see that there exists some point $T \in (T_1, \oo)$ such that $g(T) = \beta$ by Claim 1 in Theorem \ref{thm-4-4}. Following the similar analysis of Case 3 in Theorem \ref{thm-4-4}, we know that either $G^{(1)}_T(x) \geq F(x)$ for any $x \in (0, r(T))$, or $F$ down-cross $G^{(1)}_T$ at $t' \in (0, r(T))$, see Figure \ref{Fig2}. In the former type, we have $\VaR_p(F) \geq \VaR_p(G^{(1)}_T)$ for any $p \in [\alpha, \beta]$, so \eqref{eq-2024-17} holds. In the later type, denote
$$
      \alpha' := G^{(1)}_T(t') = F(t').
$$
If $\alpha \geq \alpha'$, then $G^{(1)}_T(x) \geq F(x)$ for any $x \in (t', r(T))$, and $\VaR_p(F) \geq \VaR_p(G^{(1)}_T)$ for any $p \in [\alpha, \beta]$, so \eqref{eq-2024-17} holds. If $\alpha < \alpha'$, then $G^{(1)}_T(x) \leq F(x)$ for any $x \in (0, t')$, and $\VaR_p(F) \leq \VaR_p(G^{(1)}_T)$ for any $p \in [0, \alpha]$. Therefore, $\TVaR_\alpha(F) \geq \TVaR_\alpha(G^{(1)}_T)$ since $\mu(F) = \mu(G^{(1)}_T)$. Moreover, we also have $\TVaR_\beta(F) \leq \TVaR_\beta(G^{(1)}_T)$ by Claim 2 in Theorem \ref{thm-4-4}. Thus, \eqref{eq-2024-17} holds.
		
Therefore, by considering the above three cases, we have \eqref{eq-2024-14}. This completes the proof of the theorem.
\end{proof}

\section{Applications to stop-loss and limited loss transforms}\label{sec5}

In this section, we are going to apply the previous results to the extreme-case stop-loss transform and extreme-case limited loss transform. For any $t \in \R_+$ and a loss random variable $X_F$ with distribution $F\in {\cal M}$, the stop-loss transformation $(X_F-t)_+$ and the limited loss random variable $X_F \wedge t$ are two of the most commonly used transformations in insurance and finance. Note that
$$
     \sup_{F\in {\cal P}} \E\[\(X_F-t\)_+\] = \E\[X_F\]- \inf_{F\in {\cal P}} \E\[X_F \wedge t\],
$$
and
$$
     \inf_{F \in {\cal P}} \E\[\(X_F-t\)_+\] = \E\[X_F\] - \sup_{F \in {\cal P}} \E\[X_F \wedge t\],
$$
where $\cal P$ is some ambiguity set of distributions. We only need to consider the extreme-case stop-loss transform based on the above observation. In this section, we consider the following questions to illustrate our previously established results: for any given $t \in \R_+$,
\begin{equation}\label{eq5-1}
	\sup_{F \in {\cal P}} \E\[\(X_F - t\)_+\],\ {\rm and}\ \inf_{F \in {\cal P}} \E\[(X_F - t)_+\],
\end{equation}
where ${\cal P}$ is ${\cal F} = \left\{F \in {\cal M}: F {\rm\ is\ IFR},\ \mu(F) = 1 \right\}$ or ${\cal G} = \left\{F \in {\cal M}: F {\rm\ is\ IFR},\ \mu(F) = 1,\ \mu_2(F) = \mu_2\right\}$. We first derive the result for the worst-case scenario by applying the reverse TVaR optimization formula, originally proposed by \cite{GJZ24} and later generalized by \cite{GZGH24}.

\begin{proposition}\label{pro5-1}
For $t \in \R_+$, we have
\begin{eqnarray*}
	& & \sup_{F \in {\cal F}} \E\[\(X_F-t\)_+\]  =  \exp\{-t\},\\
    & & \sup_{F \in {\cal G}} \E\[\(X_F-t\)_+\]  = \sup_{\alpha \in [0,1]} \left\{(1-\alpha) \left(\sup_{F \in {\cal G}_1} \TVaR_\alpha(F) - t\right)\right\},
\end{eqnarray*}
where ${\cal G}_1$ is defined by \eqref{distributions}.
\end{proposition}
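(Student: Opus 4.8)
The plan is to collapse both suprema into extreme-case TVaR problems that have already been solved, using the reverse TVaR optimization identity of \cite{GJZ24,GZGH24} referenced above, namely
\begin{equation*}
	\E\[\(X_F - t\)_+\] = \sup_{\alpha \in [0,1]} \left\{(1-\alpha)\(\TVaR_\alpha(F) - t\)\right\}
\end{equation*}
for every $F \in {\cal M}$ and $t \in \R_+$. Granting this, the proposition follows by an interchange of suprema: since a double supremum may be taken in either order and $1-\alpha \geq 0$,
\begin{align*}
	\sup_{F \in {\cal P}} \E\[\(X_F - t\)_+\]
		&= \sup_{F \in {\cal P}} \sup_{\alpha \in [0,1]} \left\{(1-\alpha)\(\TVaR_\alpha(F) - t\)\right\} \\
		&= \sup_{\alpha \in [0,1]} \left\{(1-\alpha)\(\sup_{F \in {\cal P}}\TVaR_\alpha(F) - t\)\right\},
\end{align*}
so everything reduces to the already-computed worst-case TVaR over ${\cal P}$.

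To confirm the reverse TVaR identity I would set $g(\alpha) := (1-\alpha)\(\TVaR_\alpha(F) - t\) = \int_\alpha^1 \VaR_u(F)\d u - t(1-\alpha)$ and differentiate to get $g'(\alpha) = t - \VaR_\alpha(F)$. Because $\VaR_\alpha(F)$ is increasing in $\alpha$, the derivative passes from nonnegative to nonpositive at $\alpha^\ast = F(t)$, so $g$ attains its maximum there. Evaluating with the change of variables $x = \VaR_u(F)$ gives $\int_{\alpha^\ast}^1 \VaR_u(F)\d u = \int_t^\oo x \d F(x)$, whence $g(\alpha^\ast) = \int_t^\oo x\d F(x) - t\,\Fbar(t) = \E[(X_F - t)_+]$; any atom of $F$ at $t$ is absorbed by the usual left-/right-limit conventions for $\VaR$.

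For the ${\cal G}$-bound I would substitute Theorem \ref{thm-4-4}, i.e. $\sup_{F\in{\cal G}}\TVaR_\alpha(F) = \sup_{F\in{\cal G}_1}\TVaR_\alpha(F)$, directly into the displayed interchange, which is precisely the claimed expression. For the ${\cal F}$-bound I would insert the closed form $\sup_{F\in{\cal F}}\TVaR_\alpha(F) = 1 - \ln(1-\alpha)$ from \eqref{eq-250628} and perform the one-dimensional maximization: writing $s = 1-\alpha \in [0,1]$, the objective becomes $s\(1 - \ln s - t\)$, whose derivative $-\ln s - t$ vanishes at $s = e^{-t}$, yielding the maximal value $e^{-t}\(1 + t - t\) = \exp\{-t\}$.

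The substantive content is carried entirely by the reverse TVaR identity together with the previously established extreme-case TVaR results (Proposition \ref{pro3-5} and Theorem \ref{thm-4-4}); the interchange of suprema is automatic, and the ${\cal F}$-case optimization is elementary. Consequently the only delicate points are the boundary behaviour of the reverse-TVaR representation when $F$ has an atom at $t$ or when $\VaR_\alpha(F) = t$ holds on a whole interval of $\alpha$, and I expect verifying these degenerate cases to be the sole step requiring care.
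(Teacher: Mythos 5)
Your proposal is correct and follows essentially the same route as the paper: both invoke the reverse TVaR formula $\E\[\(X_F-t\)_+\] = \sup_{\alpha\in[0,1]}\left\{(1-\alpha)\(\TVaR_\alpha(F)-t\)\right\}$ of \cite{GJZ24}, interchange the two suprema, and then substitute the closed form \eqref{eq-250628} for ${\cal F}$ (with the same elementary one-dimensional maximization giving $e^{-t}$) and Theorem \ref{thm-4-4} for ${\cal G}$. Your additional sketch verifying the reverse TVaR identity is extra detail the paper treats as known, not a different method.
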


\begin{proof}
It is well known that
\begin{eqnarray*}
	\E\[\(X_F - t\)_+\] = \sup_{\alpha \in [0,1]} \left\{(1-\alpha) \[\TVaR_\alpha(F) - t\]\right\}.
\end{eqnarray*}
Thus,
\begin{align}\label{eq5-2}
   \sup_{F \in {\cal P}} \E\[\(X_F-t\)_+\] &= \sup_{F \in {\cal P}} \sup_{\alpha \in [0,1]} \left\{(1-\alpha) \(\TVaR_\alpha(F) - t\)\right\} \nonumber \\
        &= \sup_{\alpha\in [0,1]}\sup_{F \in {\cal P}}\left\{(1-\alpha) (\TVaR_\alpha(F)-t)\right\}\nonumber \\
       &=\sup_{\alpha \in [0,1]} \left\{(1-\alpha) \(\sup_{F\in {\cal P}} \TVaR_\alpha(F)- t\)\right\}.
\end{align}
When ${\cal P} = {\cal F}$, we have
\begin{align*}
    \sup_{F \in {\cal F}} \E\[\(X_F-t\)_+\] &= \sup_{\alpha \in [0,1]} \left\{(1-\alpha) \left(\sup_{F \in {\cal F}} \TVaR_\alpha(F) - t\right)\right\} \\
	 &=\sup_{\alpha \in [0,1]} \left\{(1-\alpha) (1 - \ln(1-\alpha) - t)\right\}= \exp\{-t\},
\end{align*}
where the second equality follows \eqref{eq-250628}. When ${\cal P} = {\cal G}$, we have
\begin{align*}
    \sup_{F\in {\cal G}} \E\[\(X_F-t\)_+\] &= \sup_{\alpha \in [0,1]} \left\{(1-\alpha) \left(\sup_{F \in {\cal G}} \TVaR_\alpha(F) - t\right)\right\} \\
           &= \sup_{\alpha \in [0,1]} \left\{(1-\alpha) \(\sup_{F \in {\cal G}_1} \TVaR_\alpha(F) -t\)\right\},
\end{align*}
where the second equality follows from Theorem \ref{thm-4-4}.
\end{proof}

\begin{proposition}\label{pro5-2}
For $t \in \R_+$, we have
\begin{eqnarray*}
      \inf_{F \in {\cal F}} \E\[\(X_F-t\)_+\]  =  (1-t)_+,\ \  {\rm\ and\ } \ \
      \inf_{F \in {\cal G}} \E\[\(X_F-t\)_+\]  = \inf_{F\in {\cal G}_1 \cup {\cal G}_2} \E\[\(X_F-t\)_+\],
\end{eqnarray*}
where ${\cal G}_1$ and ${\cal G}_2$ are defined by \eqref{distributions}.
\end{proposition}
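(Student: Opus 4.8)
The plan is to treat the two ambiguity sets by entirely different mechanisms: the bound over ${\cal F}$ follows cleanly from convex order, while the reduction over ${\cal G}$ requires a construction that parallels the proofs of Theorems \ref{thm-4-5} and \ref{thm-4-6}. For the ${\cal F}$ case I would invoke the convex-order bracketing $\delta_1 \leq_{\rm cx} F \leq_{\rm cx} {\rm Exp}(1)$, valid for every $F \in {\cal F}$ and already recorded in Subsection \ref{subsec32}. Since the stop-loss payoff $x \mapsto (x-t)_+$ is convex, the very definition of $\leq_{\rm cx}$ gives $\E[(X_{\delta_1}-t)_+] \leq \E[(X_F-t)_+]$ for all $F \in {\cal F}$. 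As $\E[(X_{\delta_1}-t)_+] = (1-t)_+$ and $\delta_1 \in {\cal F}$, the infimum is attained at $\delta_1$, yielding $\inf_{F\in{\cal F}}\E[(X_F-t)_+] = (1-t)_+$.

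For the reduction over ${\cal G}$, the inclusion ${\cal G}_1 \cup {\cal G}_2 \subseteq {\cal G}$ gives one inequality at once, so the content lies in the reverse direction: for each $F \in {\cal G}$ I must produce some $G \in {\cal G}_1\cup{\cal G}_2$ with $\E[(X_G-t)_+]\leq \E[(X_F-t)_+]$. Writing $\E[(X_F-t)_+]=\int_t^\oo \Fbar(x)\d x$ and using $\mu(F)=\mu(G)=1$, I would first recast this as the equivalent head-integral (limited-loss) comparison $\int_0^t \Gbar(x)\d x \geq \int_0^t \Fbar(x)\d x$. The construction then runs the same crossing analysis as in the RVaR theorems: by Proposition \ref{pro-2-1}(ii) each $\Lambda_F$ is convex, by Proposition \ref{pro-2-1}(iii) it up-crosses any exponential survival function at most once, and by Lemma \ref{lem-4-1} the families ${\cal G}_1=\{G^{(1)}_T\}$ and ${\cal G}_2=\{G^{(2)}_T\}$ sweep out the extremal survival curves. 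Locating $t$ relative to the crossings of $F$ with $G_{T_1}$ and $G^{(2)}_{T_0}$ (governed by the levels $\alpha_0,\alpha_1,h_0,h_1$ of the earlier proofs), I would split into the regimes where $t$ falls below, inside, or above the resulting band and select $G\in{\cal G}_1$ or $G\in{\cal G}_2$ accordingly, matching $F$ at the appropriate point so that the two survival functions cross in a controlled way on $[0,t]$.

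The hard part will be that, unlike the RVaR statements indexed by probability levels $\alpha,\beta$, the stop-loss functional is anchored at the point $t$ in the loss domain, so the governing level $F(t)$ is itself $F$-dependent and the relevant crossing shifts with $F$. Consequently the comparison $\int_0^t[\Gbar(x)-\Fbar(x)]\d x\geq 0$ will not follow from convexity of $\Lambda_F$ alone; exactly as in Case 2 of Theorem \ref{thm-4-5}, its sign must be forced by the second-moment constraint $\mu_2(F)=\mu_2(G)$ via the auxiliary function $\psi(s):=\int_0^s[\Fbar(x)-\Gbar(x)]\d x$ together with the identity $\int_0^\oo 2x[\Fbar(x)-\Gbar(x)]\d x = -2\int_0^\oo\int_0^y[\Fbar(x)-\Gbar(x)]\d x\,\d y$, whose vanishing rules out $\psi\geq 0$ throughout. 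Pinning the correct sign of $\psi$ at the matching point, and verifying that the chosen $G$ genuinely lies in ${\cal G}_1\cup{\cal G}_2$ for every admissible $t$, is the technical heart; here the continuity of the crossing levels in $T$ established in Claim 1 of Theorem \ref{thm-4-4} guarantees that a suitable matching $G$ exists across the whole range of $t$, while the boundary cases $\mu_2\in\{1,2\}$ reduce to the single distributions $\delta_1$ and ${\rm Exp}(1)$ and are trivial.
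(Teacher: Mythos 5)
Your proposal matches the paper's proof in all essentials: the ${\cal F}$ bound via $\delta_1 \leq_{\rm cx} F$ with attainment at $\delta_1$, and for ${\cal G}$ the reduction to a head-integral comparison settled by matching the down-crossing point $u(T)$ of $F$ with $G^{(2)}_T$ to the threshold $t$, with the sign $\psi(t)<0$ forced by the equal second moments exactly as you describe. The only cosmetic slips are that the paper needs only the ${\cal G}_2$ family together with the boundary distributions $G^{(2)}_{T_0}$ (for $t\le u(T_0)$) and $G_{T_1}$ (for $t\ge T_1$), so the levels $\alpha_0,\alpha_1$ and the ${\cal G}_1$ crossings never enter, and the continuity invoked is that of the crossing \emph{points} $u(T)$ established in the proof of Theorem \ref{thm-4-5}, not the crossing levels of Claim 1 in Theorem \ref{thm-4-4}.
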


\begin{proof}
When ${\cal P} = {\cal F}$, we have $\delta_1 \leq_{\rm cx} F$. Thus
$$
		\inf_{F \in {\cal F}} \E\[\(X_F - t\)_+\] =(1 - t)_+.
$$
When ${\cal P} = {\cal G}$, we need to establish that for any given $F \in {\cal G}$, there exists $G \in {\cal G}_1 \cup {\cal G}_2$ such that
\begin{equation}\label{eq-2024-53}
		\E\[\(X_G - t\)_+\] \leq \E\[\(X_F - t\)_+\].
\end{equation}
Following the same notations as those in the proof of Theorem \ref{thm-4-5}, we know that $F$ down-crosses $G^{(2)}_T \in {\cal G}_2$ at $u(T) \in (T, \oo)$ when $T \in (T_0, T_1)$. Moreover, $u(T)$ is continuous over $T \in (T_0, T_1)$ and $\lim_{T \to T_1} u(T) = T_1$. Below, we consider three cases:
		
\underline{\emph{Case 1:}}\ Suppose that $t \leq u(T_0)$. We know that $F(x) \leq G^{(2)}_{T_0}(x)$ for $x \in (0, u(T_0))$ by the convexity $\Lambda_F$. Then
\begin{equation*}
	\int_0^t \overline{G}_{T_0}^{(2)}(x) \d x \leq \int_0^t \Fbar(x) \d x.
\end{equation*}
Combining $\mu(F) = \mu(G_{T_0}) = 1$, we have
\begin{equation*}
   \E\[\(X_F - t\)_+\] = \int_t^\oo \Fbar (x) \d x \geq \int_t^\oo \overline{G}_{T_0}^{(2)}(x) \d x = \E\[\(X_{G_{T_0}} - t\)_+\].
\end{equation*}
		
\underline{\emph{Case 2:}}\ Suppose that $t \geq u(T_1) = T_1$. Note that $\essinf\(G_{T_1}\) = T_1$ for $G_{T_1} \in {\cal G}_1$. Then
\begin{equation*}
	\E\[\(X_F - t\)_+\] \geq 0 = \E\[\(X_{G_{T_1}} - t\)_+\].
\end{equation*}
		
\underline{\emph{Case 3:}}\ Suppose that $u(T_0) \wedge u(T_1) < t < u(T_0) \vee u(T_1)$. From the proof of Theorem \ref{thm-4-5}, we know that there exists $G_T^{(2)} \in {\cal G}_2$ such that $F$ down-crosses $G_T^{(2)}$ at $u(T) = t$ and
\begin{equation*}
	\psi(u(T)) = \int_0^t \[\Fbar(x) - \Gbar_T^{(2)}(x)\] \d x<0,
\end{equation*}
where $\psi(x)$ is defined by \eqref{eq-20250625}. Then
\begin{align*}
       \E\[\(X_F - t\)_+\] - \E\[\(X_{G_T^{(2)}} - t\)_+\] &= \int_t^\oo \[\Fbar(x) -\Gbar_T^{(2)}(x)\]\d x\\
        & = -\int_0^t \[\Fbar(x) - \Gbar_T^{(2)}(x)\] \d x > 0.
\end{align*}
Thus, we complete the proof of the proposition.
\end{proof}

\section{Conclusions}\label{sec6}

This paper develops a constructive methodology to determine the extreme-case RVaR for IFR distributions under mean and/or variance ambiguity sets. Our approach characterizes the extremal distributions through parametric families of IFR distributions, which enables efficient computation of extreme-case RVaR bounds. While the proposed framework naturally lends itself to numerical implementation (as demonstrated by our theoretical construction), we emphasize that the current analysis focuses primarily on the theoretical development rather than numerical simulation.

We conclude with some open questions that are the focus of our ongoing research.
\begin{itemize}
 \item How to find the extreme-case RVaR under the mean/variance ambiguity sets of distributions with DFR property or other ageing notions? A promising direction to solve extreme-case RVaR under DFR ambiguity set may refer to \cite{BM64}, in which they derived the bounds for distribution functions with IFR/DFR property and given mean and/or variance.
	
 \item The existing literature has developed various ambiguity sets beyond mean/variance uncertainty sets, notably those constructed by using probability distances like the likelihood ratio \citep{LMWW22} and Wasserstein distance \citep{BPV24}. This raises an important research question. Can we develop computational method to determine the worst-case or best-case RVaR by employing probability distances between distributions while incorporating IFR or DFR constraint?
	
 \item Since RVaR constitutes a special case of distortion risk measure, we propose extending this analysis to general distortion risk measures or distortion riskmetrics \citep{PWW24} while incorporating uncertainty information about the underlying distribution's failure rate characteristics (IFR/DFR).
\end{itemize}
	
\section*{Funding}

Z. Zou gratefully acknowledges financial support from National Natural Science Foundation of China (No. 12401625), the China Postdoctoral Science Foundation (No. 2024M753074), and the Postdoctoral Fellowship Program of CPSF (GZC20232556). T. Hu gratefully acknowledges financial support from the National Natural Science Foundation of China (No. 72332007, 12371476).


\end{document}